%% file: arxiv.tex
\documentclass[11pt]{article}
\input{prelim}

\title{Product Monte Carlo: Expectation Under a Product of Densities}
\author{
Pietro Carlotti\footnote{University of Texas at Austin, Department of Statistics and Data Sciences}    \and
Sifan Liu\footnote{Duke University, Department of Statistical Science}  \and
Yuling Yao\footnotemark[1]
}
\date{\today}

\begin{document}\sloppy
	\maketitle
	\thispagestyle{empty}
	\begin{abstract}
Aggregating independently computed sub-sample inferences is a well-known challenge in large-scale Bayesian computation, federated learning, and opinion pooling. The core problem is to approximate a distribution whose density is proportional to a product of individual densities. We propose Product Monte Carlo (PMC), an algorithm for computing arbitrary expectations under a product of densities using only Monte Carlo draws from individual densities. We turn the intractable product expectation into a regression problem by expressing it through a family of computable conditional expectations. Besides the point estimate, our fully Bayesian regression formulation provides coherent uncertainty quantification and principled diagnostics. We establish theoretical guarantees on consistency and characterize convergence rates for the PMC estimator. Across simulated and real-data experiments varying the number, dimensionality, and heterogeneity of the component distributions, PMC achieves higher estimation accuracy, better-calibrated uncertainty coverage, and improved scalability at a lower computational cost than existing methods. In addition, PMC requires no density evaluation, making it applicable to simulation-based inference, as we demonstrate in an application with protein images.

\vspace{.1in}
\textbf{Keywords}:
distributed learning,
Monte Carlo method,
product distribution,
simulation-based inference,
uncertainty quantification
\end{abstract}

\begingroup
\renewcommand\thefootnote{}
\footnote{We thank Jinlin Lai for help with our SBI experiment. Code and reproducibility materials are available at \url{https://github.com/PietroCarlotti/product-monte-carlo}.}
\addtocounter{footnote}{-1}
\endgroup
\input{sections/01_introduction}
\input{sections/02_PMC}
\input{sections/03_asymptotics}
\input{sections/04_experiments}
\input{sections/05_discussion}

 \bibliographystyle{apalike}

\bibliography{references}

\clearpage
\appendix

\section*{Product Monte Carlo:  Supplementary Materials}
\input{sections/appendix}
\end{document}

%% file: prelim.tex
\usepackage[utf8]{inputenc}
\usepackage{amsfonts,epsfig}
\usepackage{stackengine}
\usepackage{relsize}
\usepackage{fancyvrb}
\usepackage{geometry}
\usepackage[font={it}]{caption}
\usepackage{subcaption}
\usepackage[normalem]{ulem}
\usepackage{titlesec}
\titleformat{\section}
  {\sffamily\bfseries\upshape\large\raggedright}
  {\thesection.}{0.6em}{}
\titleformat{\subsection}
  {\sffamily\bfseries\upshape\normalsize\raggedright}
  {\thesubsection.}{0.5em}{}
\titleformat{\subsubsection}
  {\sffamily\bfseries\upshape\normalsize}
  {\thesubsubsection.}{0.5em}{}
\titleformat{\paragraph}[runin]
  {\sffamily\mdseries\itshape\normalsize}
  {\theparagraph.}{0.4em}{}[\ignorespaces]

\usepackage[T1]{fontenc}
\usepackage[defaultsans]{lato}
\usepackage{inconsolata}
\usepackage{hyperref}
\hypersetup{
	colorlinks=true,
	linkcolor=black,
	citecolor=black,
	filecolor=black,
	urlcolor=black,
}
\makeatletter
\def\@maketitle{%
	\begin{center}%
		\let \footnote \thanks
		{\large \sffamily\fontseries{eb}\selectfont \@title \vspace{.125in}\par}
		{\normalsize
			\begin{tabular}[t]{c}%
				\@author
			\end{tabular}\par}%
             \vspace{0.1in}%
		{\small \@date}%
         \vspace{-0.125in}%
	\end{center}%
}
\makeatother

 \usepackage[english]{babel}
\usepackage[
    autostyle,
    english=american
]{csquotes}
\usepackage{booktabs}
\usepackage{amsmath}
\usepackage{amssymb}
\usepackage{amsthm}
\usepackage{mathtools}
\usepackage{tikz}
\usetikzlibrary{positioning, fit, arrows.meta, calc}
\usepackage{xcolor}
\usepackage{graphicx}
\usepackage{subcaption}
\usepackage{flafter}
\usepackage{placeins}
\usepackage{multirow}
\usepackage{algpseudocode}
\usepackage{algorithm}
\usepackage{algorithmicx}
\usepackage{enumitem}
\usepackage[round]{natbib}
\MakeOuterQuote{"}          
\allowdisplaybreaks         
\definecolor{inputColor}{RGB}{0,114,178}     
\definecolor{outputColor}{RGB}{213,94,0}     
\graphicspath{{figures/}} 

\newcommand{\R}{\mathbb{R}} 

\newcommand{\E}{\mathbb{E}}              

\newcommand{\pipro}{\pi_{\ast}}              

\newcommand{\Prob}{\mathbb{P}}           

\newcommand{\iid}{\overset{\textnormal{i.i.d.}}{\sim}} 
\newcommand{\indep}{\perp \!\!\! \perp} 
\newcommand{\weakto}{\overset{\mathrm{w}}{\to}} 

\newcommand{\convergencefiguretarget}[6]{
    \begin{figure*}[htbp]
        \def\convergencefunction{#3}
        \centering
        \includegraphics{#1/h_#2.pdf}
        \caption{#6}
        \label{fig:#5}
    \end{figure*}
}

\newcommand{\convergencefigure}[5][u^2]{
    \convergencefiguretarget{#2}{identity}{u}{#3}{#4}{#5}
    \convergencefiguretarget{#2}{square}{#1}{#3}{#4_square}{#5}
}

\newsavebox\singlebox

\newcommand{\singlefigure}[3]{
    \begin{figure*}[htbp]
        \centering
        \sbox\singlebox{\includegraphics{#1}}
        \ifdim\wd\singlebox>\linewidth
            \resizebox{\linewidth}{!}{\usebox\singlebox}
        \else
            \usebox\singlebox
        \fi
        \caption{#3}
        \label{fig:#2}
    \end{figure*}
}

\theoremstyle{plain}
\newtheorem{theorem}{Theorem}
\newtheorem{proposition}{Proposition}

\theoremstyle{definition}
\newtheorem{definition}{Definition}

\theoremstyle{remark}

\newtheorem{assumption}{Assumption}[section]

%% file: sections/01_introduction.tex
\section{Introduction}\label{sec:introduction}
Many statistical problems involve computing expectations under the product of densities. For illustration, we begin with an example from Bayesian subsampling: the data $\mathcal{D}$ are conditionally i.i.d. given a parameter $\theta \in \R^p$ and can be partitioned into $K \geq 2$ disjoint subsets $\mathcal{D}_{1}, \ldots, \mathcal{D}_{K}$. If we denote the prior distribution $\rho_{0}(\theta)$ and the likelihood function $ \pi(\mathcal{D} \mid \theta)$, the full posterior distribution of $\theta$ given the data $\mathcal{D}$ is
\[
    \pi(\theta \mid \mathcal{D}) \propto \rho_{0}(\theta) \prod_{k=1}^K  \pi(\mathcal{D}_k \mid \theta).
\]
The subposterior is the inference based on subset data:
\[
    \rho_k(\theta) \propto \rho_{0}(\theta)^{\frac{1}{K}} \,  \pi(\mathcal{D}_k \mid \theta) \quad k = 1, \ldots, K,
\]
where the prior is scaled to reflect the partition. We can then obtain $S$ samples from each subposterior, which often is easier than the full sampling:
\[
    \theta_{k}^{(s)} \sim \rho_k \quad s=1,\ldots,S, \quad k=1,\ldots,K.
\]
However, even if the subsamples are readily available, computing quantities under the full-data posterior remains nontrivial. The full posterior density is proportional to the \emph{product} of subposteriors:
\[
     \pi(\theta \mid \mathcal{D}) \propto \prod_{k=1}^K \rho_k(\theta).
\]
To solve this aggregation task, this paper develops \emph{Product Monte Carlo} (PMC), a framework for estimating expectations under a \emph{product density}, formally:
\begin{definition}[Product Distribution]
\label{def:product_distribution}
Let $\pi_1,\ldots,\pi_K$ be densities of distributions on $\mathbb{R}^p$, where $K \geq 2$.
The \emph{product density} $\pipro$ is defined as follows:
\[
     \pipro(x) = \frac{1}{Z}\prod_{k=1}^{K} \pi_{k}(x) \propto \prod_{k=1}^{K} \pi_{k}(x)  \qquad \forall \; x \in \mathbb{R}^p,
\]
with an unknown normalizing constant
\[
    Z=\int_{\mathbb{R}^p}\prod_{k=1}^K\pi_k(x)\,\mathrm{d}x,
    \qquad 0<Z<\infty.
\]
\end{definition}
Throughout this paper, ``\emph{product}'' refers to the pointwise product of component densities on $\R^p$, and should not be confused with a product measure on \((\mathbb{R}^p)^K\).

As in other Monte Carlo methods, our goal is to compute the expectation of any integrable function $h:\mathbb{R}^p \to \mathbb{R}^r$, with $r\geq1$, under the product density:
\[
\mu_h =
\E_{\pipro}\left[ h \right] =
\int_{\mathbb{R}^p} h(x) \, \pipro(x) \, \mathrm{d}x .
\]
In the subsampling example, the components $\pi_k$ are the subposterior densities, and their product, after normalization, is the full-data posterior. Taking $h(\theta)=\theta$ or $h(\theta)=\theta\theta^\top$ yields the first or second moment under the full posterior.

\paragraph{Heuristic of PMC.}
First draw $S$ samples from each component density, independently across components:
\[
    x_k^{(s)} \sim \pi_k
    \quad s=1,\ldots,S,\quad k=1,\ldots,K.
\]
PMC aggregates individual samples via a regression to estimate any expectation $\E_{\pipro}[h]$ under the product distribution. See Figure~\ref{fig:pmc_diagram} for the workflow.
\begin{figure}[t]
\centering
\resizebox{\linewidth}{!}{%
\input{sections/pmc_diagram}
}
\caption{PMC aggregates samples from individual components to estimate any expectation under the product distribution. No density evaluation of $\pi_k$ is needed.
}
\label{fig:pmc_diagram}
\end{figure}

To build intuition for how PMC works, consider two independent random variables in $\R^p$:
$x_1 \sim \pi_1$, and $x_2 \sim \pi_2.
$
Define their scaled sum and difference by
\[
    C = \frac{x_1 + x_2}{2}, \quad D = \frac{x_1 - x_2}{2}.
\]
The conditional distribution of $C$ given $D = \mathbf{0}_{p}$ is exactly the target product distribution:
\[
    C \mid D = \mathbf{0}_{p} \sim \pipro \propto \pi_1 \pi_2.
\]
Intuitively, the product distribution is the slice of the joint distribution along the diagonal of the product space. Figure~\ref{fig:product_distribution} illustrates a Gaussian example.
\begin{figure}[t]
    \centering
         \vspace{-0.8em}
    \includegraphics[width=0.7\linewidth]{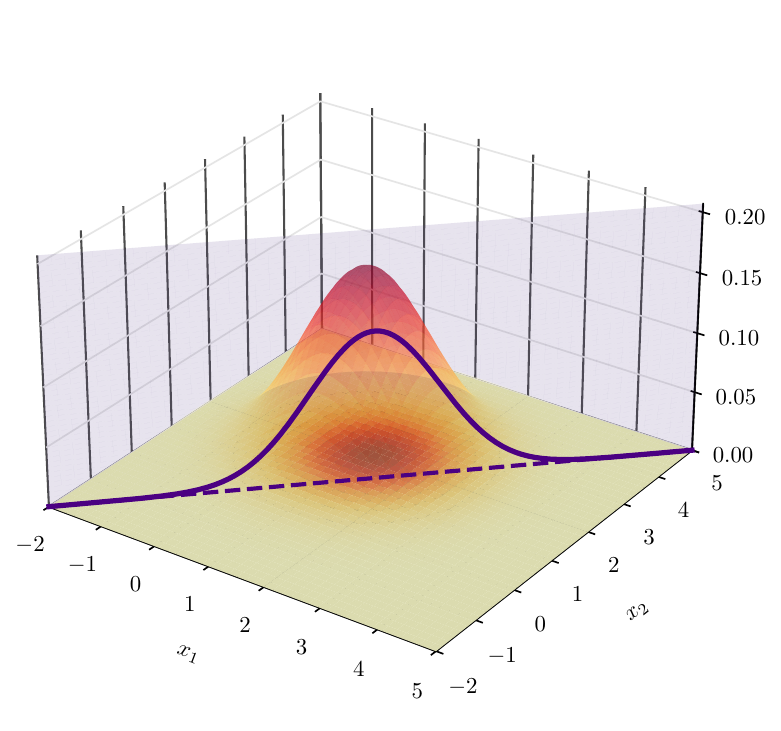}
     \vspace{-0.8em}
    \caption{For $\pi_1=\mathcal{N}(1,1)$ and $\pi_2=\mathcal{N}(2,1)$, the product distribution $\pipro=\mathcal{N}(1.5,0.5) \propto \pi_1\pi_2$ matches the conditional distribution on the diagonal slice.}
    \label{fig:product_distribution}
    \vspace{-0.5em}
\end{figure}

The above identity belongs to a broader family of conditional representations, which we develop in the rest of the paper, where $C$ captures \emph{consensus} among the component variables and $D$ measures their \emph{discrepancy}. This lets us express expectations under the product distribution as conditional expectations:
\[
    \E_{\pipro}[h] = \E[h(C)\mid D = \mathbf{0}].
\]
Although the event $\{D = \mathbf{0}\}$ has zero probability, we can estimate the conditional expectation through regression. PMC develops a general recipe for these representations and their fully Bayesian estimation.

As we will show through theoretical analysis and empirical evaluation in the subsequent sections, the salient features of PMC include the following:
\begin{itemize}[leftmargin=*,topsep=2pt,itemsep=1pt,parsep=0pt]
\item \emph{Distribution-free:} Without parametric restrictions, PMC allows arbitrary component densities $\pi_k$.
\item \emph{Evaluation-free:} PMC requires only samples from the component distributions, not evaluations of their densities or product.  This makes it suitable for simulation-based and likelihood-free inference.
\item \emph{Scalable:} PMC allows parallel sampling across components without coordination and reduces intractable product sampling to regression, which scales better with dimension and sample size.
\item \emph{Calibrated:} Our fully Bayesian formulation in PMC provides coherent uncertainty quantification and credible intervals, accounting for both the Monte Carlo and epistemic errors.
\item \emph{Diagnosable:} PMC turns a computing task into a modeling problem, which incorporates the established Bayesian workflow and supports model diagnostics, checks, and iterative model improvement.
\end{itemize}

\paragraph{Applications.}

Beyond its interest as a general problem, computing expectations under products of densities is useful in several practical settings.

As in our motivating example of \emph{large-scale inference}, partitioning a large dataset and sampling from subposteriors substantially reduces the cost: each subposterior requires a fraction of the likelihood and gradient evaluation cost; subposteriors can be sampled in parallel \citep{neiswanger2014asymptotically,scott2016bayes}; and amortized methods such as neural posterior estimation \citep[e.g.,][]{papamakarios2016fast} can make repeated inference inexpensive after training. The remaining task is then to aggregate them into the product. This product form arises naturally under conditional independence, as multiplication combines information sources \citep{bernardo1994bayesian}.

In \emph{federated learning} \citep[e.g.,][]{al-shedivat2021federated,liu2024bayesian,jonker2024bayesian} or privacy-sensitive applications, data cannot be centralized for storage and processing. Inference must use data held locally across different sites, and then multiply individual densities.

Products also arise in \emph{ensemble prediction}, where \emph{product-of-experts} and \emph{logarithmic pooling} combine predictive densities multiplicatively \citep{genest1986combining, hinton2002training}. An open challenge is then to evaluate the combined predictive distribution.

In many of these applications, individual component densities $\pi_k$ cannot be evaluated: local densities in privacy settings can leak data information; posterior predictive densities in expert pooling require additional integration; and generative-model-based samplers and simulation-based inference (SBI) methods provide samples without tractable density evaluations.

\subsection{Related methods}

Given its broad applicability, several strategies address inference under product distributions. We review three main approaches and compare them in Table~\ref{tab:comparison}.

\begin{table}[t]
\footnotesize
\centering
\setlength{\tabcolsep}{2pt}
\renewcommand{\arraystretch}{1.0}
\begin{tabular}{@{}cccc@{}}
\toprule
Method & \begin{tabular}{@{}c@{}}Distribution-free\end{tabular} & \begin{tabular}{@{}c@{}}Evaluation-free\end{tabular} & \begin{tabular}{@{}c@{}}UQ\end{tabular} \\
\midrule
\begin{tabular}{@{}c@{}}Parametric approx.\end{tabular} & & $\checkmark$ & \\ \addlinespace[4pt]
\begin{tabular}{@{}c@{}}Weighting\end{tabular} & $\checkmark$ & & \\ \addlinespace[4pt]
\begin{tabular}{@{}c@{}}Fusion \end{tabular} & $\checkmark$ & & \\ \addlinespace[4pt]
\textbf{PMC} & $\checkmark$ & $\checkmark$ & $\checkmark$ \\
\bottomrule
\end{tabular}
\caption{Comparison of available methods and PMC.}
\label{tab:comparison}
\end{table}

\paragraph{Parametric approximation.} It is natural to approximate each component with a parametric family that keeps the product distribution analytical. Exploiting the fact that the product of Gaussians is itself Gaussian,
\citet{huang2005sampling} used Gaussian approximations for large-scale Bayesian computation, Consensus Monte Carlo \citep{scott2016bayes} aggregated subposterior draws through precision-weighting, and \citet{carvalho2023bayesian} applied similar approximations to logarithmic pooling.
Moving beyond a single Gaussian approximation, \citet{neiswanger2014asymptotically} combined kernel-based estimates of the component densities, and \citet{cai2025fisher} derived an analytic representation for products of Student-$t$ densities using the Feynman parameterization. However, parametric approximations rely on strong assumptions and thus introduce additional bias that persists even with large Monte Carlo samples. In comparison, PMC imposes no parametric assumptions, yet our theory shows that PMC recovers parametric efficiency in the Gaussian case.

\paragraph{Importance weighting.}
Subposterior draws can be reweighted to approximate the product distribution \citep{mesquita2020embarrassingly}. Although simple to implement and free of parametric assumptions, importance weighting requires density evaluations and can yield unstable estimates with high or even infinite variance when importance ratios have heavy tails \citep{vehtari2024pareto}. This issue may be especially pronounced here because the product distribution is typically more concentrated than its component distributions.

\paragraph{Fusion methods.}

An alternative approach is to construct a sampler targeting the product distribution rather than estimating its expectations directly. To this end, fusion methods, including Monte Carlo Fusion \citep[MCF;][]{dai_monte_2019}, Bayesian Fusion \citep[BF;][]{dai_bayesian_2023}, and Generalized Bayesian Fusion \citep[GBF;][]{chan_divide-and-conquer_2023}, initialize $K$ stochastic processes at subposterior draws, constructed to coalesce at a common endpoint distributed according to the product target. Like PMC, fusion methods do not restrict component densities to specific parametric families. However, simulating the resulting $Kp$-dimensional stochastic processes $S$ times is itself a nontrivial sampling problem, which incurs heavy computational overhead for large $K$, $S$, or $p$. Moreover, fusion methods require evaluating component densities and their derivatives, thus limiting their applicability if they are unavailable or computationally expensive.

\paragraph{Uncertainty quantification.}
While many existing approaches can return a Monte Carlo error, they typically cannot assess epistemic uncertainty. 
For example, when component distributions are well-separated, estimating their product expectation is intrinsically hard.
PMC is designed to reflect this difficulty through larger posterior uncertainty, whereas other methods yield vanishingly small uncertainty as the Monte Carlo size increases, even for inaccurate estimates.

The remainder of this paper is organized as follows. Section~\ref{sec:PMC} introduces PMC for estimating expectations under products of densities. Section~\ref{sec:asymptotics} studies its asymptotic consistency and convergence rate. Section~\ref{sec:experiments} evaluates our method empirically on simulated and real scientific data. Section~\ref{sec:discussion} concludes with a discussion of our findings and directions for future work.

%% file: sections/pmc_diagram.tex
\begin{tikzpicture}[
  font=\footnotesize,
  node distance=1mm and 14mm,
  arrow/.style={->, thick},
  >=Latex
]

\node (n1) {\textcolor{inputColor}{$x_1^{(1)}, \ldots, x_1^{(S)} \sim \pi_1$}};
\node[draw, rounded corners, fit=(n1), inner sep=3pt, minimum width=3.6cm] (box1) {};
\node[below=16mm of box1] (nK) {\textcolor{inputColor}{$x_K^{(1)}, \ldots, x_K^{(S)} \sim \pi_K$}};
\node[draw, rounded corners, fit=(nK), inner sep=3pt, minimum width=3.6cm] (boxK) {};
\node at ($(box1.south)!0.5!(boxK.north)$) (vdots) {$\vdots$};

\node[fit=(box1)(boxK), inner sep=0pt] (leftgroup) {};
\node[above=1pt of leftgroup.north] {\scriptsize\textcolor{inputColor}{Input}};

\node[draw, rounded corners, right=30mm of vdots, yshift=7mm, minimum width=22mm] (consensus) {$C^{(1)}, \ldots, C^{(S)}$};
\node[above=1pt of consensus.north] {\scriptsize Consensus};
\node[draw, rounded corners, right=30mm of vdots, yshift=-7mm, minimum width=22mm] (discrepancy) {$D^{(1)}, \ldots, D^{(S)}$};
\node[above=1pt of discrepancy.north] {\scriptsize Discrepancy};
\node[fit=(consensus)(discrepancy), inner sep=0pt] (transform) {};

\node[draw, rounded corners, right=10mm of transform] (regression) {$h(C)$ on $D$};
\node[above=1pt of regression.north] {\scriptsize Regression};

\node[align=center, right=10mm of regression] (outtext) {%
  \textcolor{outputColor}{$\hat{\mu}_h \approx \E[h(C)\mid D=\mathbf{0}]$}
};

\node[draw, rounded corners, fit=(outtext), inner sep=3pt] (rightbox) {};
\node[above=1pt of rightbox.north] {\scriptsize\textcolor{outputColor}{Output}};

\draw[arrow] (box1.east) -- (consensus.west);
\draw[arrow] (box1.east) -- (discrepancy.west);
\draw[arrow] (boxK.east) -- (consensus.west);
\draw[arrow] (boxK.east) -- (discrepancy.west);
\draw[arrow] (consensus.east) -- (regression.west);
\draw[arrow] (discrepancy.east) -- (regression.west);
\draw[arrow] (regression.east) -- (rightbox.west);

\end{tikzpicture}

%% file: sections/02_PMC.tex
\section{Proposed method}\label{sec:PMC}


\paragraph{Setting and notation.}
Let $\pi_1,\ldots,\pi_K$ be $K\geq 2$ arbitrary distributions on $\mathcal{X}=\mathbb{R}^p$.
We assume we can draw mutually independent random samples
\[
    x_k \sim \pi_k \quad k = 1, \ldots, K
\]
from individual distributions, but we do not assume we can evaluate the density function. Collectively, we write $\boldsymbol{x}=(x_1,\ldots,x_K) \in \mathcal{X}^K$. Our goal is to compute $\mu_h$, the expectation of an arbitrary function $h$ under the product distribution.





\subsection{Consensus and discrepancy}\label{sec:consensus_discrepancy}

We generalize our heuristic: the product distribution $\pipro$ can be represented as the distribution of a \emph{consensus mapping} conditional on a \emph{discrepancy mapping} being zero. We first define these mappings.


\begin{definition}[Consensus]
\label{def:consensus}
A measurable and continuous mapping
$c:\mathcal{X}^K \to \mathcal{X}$ is a \emph{consensus} if
\begin{enumerate}[nosep,leftmargin=*]
    \item For identical inputs, $c$ returns their common value:
    \[
        c((x,\ldots,x))=x \quad \forall\,x\in\mathcal{X}.
    \]
    \item $c$ is translation-equivariant:
    \[
        c(\boldsymbol{x}+a)=c(\boldsymbol{x})+a
        \quad \forall\,\boldsymbol{x}\in\mathcal{X}^K,\ a\in\mathcal{X},
    \]
    where $\boldsymbol{x}+a=(x_1+a,\ldots,x_K+a)$.
\end{enumerate}
\end{definition}
Examples of consensus mappings $c$ include the \emph{arithmetic average}, the \emph{weighted average}, and the \emph{median}. Translation equivariance implies that shifting all components shifts the consensus by the same amount. We impose this natural condition to simplify the analysis.
\begin{definition}[Discrepancy]
\label{def:discrepancy}
Consider the full discrepancy mapping $\delta:\mathcal{X}^K\to\mathbb{R}^{\left(K-1\right)p}$ defined by
\[
    \delta\left(\boldsymbol{x}\right)
    = \left(x_2-x_1,\ldots,x_K-x_1\right) \quad \forall \; \boldsymbol{x} \in \mathcal{X}^K.
\]
Let $\psi:\mathbb{R}^{(K-1)p}\to\mathbb{R}^m$ be continuous and satisfy
\begin{enumerate}
    \item $\psi^{-1}(\{\mathbf{0}_m\})=\{\mathbf{0}_{(K-1)p}\}$.

    \item $\inf\limits_{\lVert v \rVert_{\infty} \geq a} \lVert \psi(v) \rVert_{\infty} > 0 \quad \forall \; a > 0$.
\end{enumerate}
Then, $d:\mathcal{X}^K \to \mathbb{R}^{m}$ is a \emph{discrepancy} if
\[
    d = \psi\circ\delta.
\]
\end{definition}
Examples of the discrepancy $d$ include (i) full discrepancy mapping $d= \delta(\boldsymbol{x}) \in \R^{(K-1)p}$, and (ii) its norm $d= ||\delta(\boldsymbol{x})||_\infty \in \R$.
By design, $d(\boldsymbol{x}) = \mathbf{0}_m$ implies $x_1= x_2=\ldots = x_K$.

Given $\boldsymbol{x}=(x_1,\ldots,x_K)$,
we henceforth write
\[
    C=c\left(\boldsymbol{x}\right),\qquad D=d\left(\boldsymbol{x}\right)
\]
for the induced consensus and discrepancy variables.

The following theorem formally connects the conditional distribution of the consensus given zero discrepancy to our target product distribution.

\begin{theorem}
    \label{thm:consensus_discrepancy}
    Let $c$ be a consensus and $d$ be a discrepancy. Suppose all component densities $\pi_k$ are $L^K$-integrable.
    Let $x_k\sim \pi_k$ be mutually independent for $k = 1, \ldots, K$.
    Let $\pi_{\varepsilon}$ denote the law of $C$ conditional on $D$ in the $\varepsilon$ neighborhood of zero:
    \[
        C \mid  \lVert D \rVert_\infty < \varepsilon \sim \pi_{\varepsilon} \quad \forall \; \varepsilon > 0.
    \]
    Then, $\pi_{\varepsilon}$ converges to $\pipro$ in total variation:
    \[
        \mathrm{TV} \left(\pi_{\varepsilon}, \pipro\right) \overset{}{\to} 0  \quad \mathrm{as} ~\varepsilon \to 0.
    \]
\end{theorem}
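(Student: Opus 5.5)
We change variables from $\boldsymbol{x}$ to $(C,\Delta)$ with $\Delta=\delta(\boldsymbol{x})=(x_2-x_1,\ldots,x_K-x_1)$. By translation equivariance, $c(\boldsymbol{x}) = x_1 + c((0,\Delta))$, so write $g(\Delta)=c((0,v_2,\ldots,v_K))$ for $\Delta=(v_2,\ldots,v_K)$. Then $x_1 = C - g(\Delta)$ and $x_k = C - g(\Delta) + v_k$. The map $(x_1,\ldots,x_K)\mapsto (C,\Delta)$ is a bijection of $\mathbb{R}^{Kp}$ that is a shear (triangular) map, so it has unit Jacobian in the sense of Lebesgue measure; this can be checked by composing $x\mapsto(x_1,\Delta)$, which has determinant one, with $(x_1,\Delta)\mapsto(x_1+g(\Delta),\Delta)$, a translation in $x_1$ for each fixed $\Delta$. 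Measurability of $g$ follows from continuity of $c$, and Fubini applies. Hence the joint density of $(C,\Delta)$ is
\[
f(z,\Delta)=\pi_1(z-g(\Delta))\prod_{k=2}^K \pi_k(z-g(\Delta)+v_k).
\]
Since $g(\mathbf 0)=c((0,\ldots,0))=0$, we get $f(z,\mathbf 0)=\prod_k\pi_k(z)=Z\,\pipro(z)$.

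Next we reduce the event $\{\lVert D\rVert_\infty<\varepsilon\}$ to a shrinking neighbourhood of $\Delta=\mathbf 0$. Set $A_\varepsilon=\{v:\lVert\psi(v)\rVert_\infty<\varepsilon\}$. Condition 2 of Definition~\ref{def:discrepancy} gives $A_\varepsilon\subseteq\{\lVert v\rVert_\infty<a\}$ once $\varepsilon<\inf_{\lVert v\rVert_\infty\ge a}\lVert\psi(v)\rVert_\infty$. Continuity of $\psi$ with $\psi(\mathbf 0)=\mathbf 0$ shows $A_\varepsilon$ is an open set containing $\mathbf 0$, so it has positive Lebesgue measure. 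Thus $A_\varepsilon$ shrinks to $\{\mathbf 0\}$. The density of $\pi_\varepsilon$ is
\[
z\mapsto \frac{\int_{A_\varepsilon} f(z,\Delta)\,d\Delta}{\int_{A_\varepsilon}\int f(z',\Delta)\,dz'\,d\Delta}
\]
(provided the denominator is positive, which is discussed below). Define $F(\Delta)=f(\cdot,\Delta)\in L^1(\mathbb{R}^p)$ and $m(\Delta)=\int f(z,\Delta)\,dz$. Because $L^1$-convergence of unnormalized densities together with convergence of their masses to $Z>0$ gives TV convergence of the normalized ones, it suffices to show
\[
\frac{1}{|A_\varepsilon|}\int_{A_\varepsilon}\lVert F(\Delta)-F(\mathbf 0)\rVert_{L^1}\,d\Delta\to0 .
\]
This holds if $\Delta\mapsto F(\Delta)$ is continuous at $\mathbf 0$ in $L^1(\mathbb{R}^p)$. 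Averaging over $A_\varepsilon$ is then fine, since $A_\varepsilon$ lies in a small ball. The same continuity gives $m(\Delta)\to Z>0$, so the denominator is eventually positive.

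The main obstacle is this $L^1$-continuity, and this is where $L^K$-integrability enters. Substituting $z\mapsto z+g(\Delta)$ removes $g$ and gives $\lVert F(\Delta)-F(\mathbf 0)\rVert_{L^1}=\int|\pi_1(z)\prod_{k\ge2}\pi_k(z+v_k)-\prod_k\pi_k(z)|\,dz$. We telescope the difference, replacing one factor $\pi_k(z+v_k)$ by $\pi_k(z)$ at a time. Each term is bounded by the generalized Hölder inequality with all exponents equal to $K$:
\[
\lVert \pi_k(\cdot+v_k)-\pi_k\rVert_{L^K}\prod_{j\neq k}\lVert\pi_j\rVert_{L^K}.
\]
Translation invariance keeps the other norms fixed. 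Continuity of translation in $L^K$, valid since $K<\infty$, makes the first factor tend to zero as $v_k\to0$. This establishes the needed continuity and completes the argument.
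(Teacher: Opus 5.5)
Your overall route is the paper's: the shear change of variables to $(C,\Delta)$, the slice density $\prod_k\pi_k(z+r_k(\Delta))$, the inclusion of the small-discrepancy set in a small box, averaging over it, and the telescoping H\"older bound with continuity of translation in $L^K$. There is, however, one step that fails as written.

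\textbf{The flawed step.} When you ``remove $g$'' by substituting $z\mapsto z+g(\Delta)$, the substitution also translates the fixed reference density $F(\mathbf 0)=\prod_k\pi_k$. The correct identity is
\[
\lVert F(\Delta)-F(\mathbf 0)\rVert_{L^1}=\int\Bigl|\pi_1(z)\prod_{k\ge2}\pi_k(z+v_k)-\prod_{k=1}^K\pi_k\bigl(z+g(\Delta)\bigr)\Bigr|\,\mathrm{d}z ,
\]
not the expression with $\prod_k\pi_k(z)$. Because the shift $g(\Delta)$ varies with $\Delta$, no single change of variables can eliminate it.

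\textbf{Why it matters.} This is not cosmetic. As written, your argument uses continuity of $c$ only to get measurability of $g$, yet the conclusion fails without continuity at the diagonal. Take $c(\boldsymbol{x})=x_1+a\,\mathbf{1}\{\delta(\boldsymbol{x})\neq\mathbf 0\}$ with a fixed $a\neq\mathbf 0_p$. It is measurable, returns $x$ on the diagonal, and is translation-equivariant. But $C=x_1+a$ almost surely, so $\pi_\varepsilon$ converges to $\pipro(\cdot-a)$ rather than $\pipro$. Your computation would nevertheless ``prove'' convergence to $\pipro$ for this $c$.

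\textbf{The repair.} It is short, and either of the following works.
\begin{enumerate}
\item Apply the triangle inequality to the correct expression above. The first piece is exactly the one your telescoping bound handles. The second piece is $\lVert\prod_k\pi_k(\cdot+g(\Delta))-\prod_k\pi_k\rVert_{L^1}$. It tends to zero by continuity of translation in $L^1$, since the product is integrable by H\"older or because $Z<\infty$, once you use $g(\Delta)\to g(\mathbf 0)=\mathbf 0_p$, which follows from continuity of $c$.
\item As the paper does, skip the substitution and telescope all $K$ factors directly, with shifts $r_1(\Delta)=-g(\Delta)$ and $r_k(\Delta)=v_k-g(\Delta)$ for $k\ge2$. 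These tend to $\mathbf 0_p$, again by continuity of $c$.
\end{enumerate}
With either fix, the rest of your proof goes through unchanged.
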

Convergence in total variation further implies
\[
  \pi_{\varepsilon} \weakto \pipro \quad \mathrm{as} ~\varepsilon \to 0.
\]
Moreover, for any bounded measurable $h$, we have
\[
\E[h(C) \mid \lVert D \rVert_\infty < \varepsilon] \to \mu_h \quad \mathrm{as} ~\varepsilon \to 0.
\]

\subsection{Conditional mean via regression}\label{sec:regression_estimation}
By Theorem~\ref{thm:consensus_discrepancy}, we can now recast product-expectation estimation as a regression problem. Given $S$ samples from each of the $K$ component distributions
\[
x_{k}^{(s)} \sim  \pi_k,
\quad s = 1, \ldots, S, \quad k = 1, \ldots, K,
\]
we obtain the training pairs
\[
C^{(s)}=c (\boldsymbol{x}^{(s)}),
\qquad
D^{(s)}=d (\boldsymbol{x}^{(s)}),
\quad s=1,\ldots,S.
\]
We then regress $h(C)$ on $D$ to estimate
$f_h(D)=\E[h(C)\mid D]$ and evaluate the fitted function at zero:
\[
\hat f_{h,S}(D) \approx f_h(D), \qquad
\hat\mu_h = \hat f_{h,S}(\mathbf 0_m).
\]

PMC provides a general regression framework, see Algorithm~\ref{alg:PMC}. For a point estimate, a natural choice is to parameterize \(f\) as a flexible neural network and minimize the empirical squared error $\min_{f}
\sum\!
\left\|h(C^{(s)})-f(D^{(s)})\right\|_2^2$, which ensures that the population optimum is the conditional mean: $\arg\min_f
\E\!\left\|h(C)-f(D)\right\|_2^2 = \E[h(C)| D]$.
More generally, our PMC framework turns a computing task into a modeling problem. Beyond the simple point estimate, we next develop a fully Bayesian formulation, and thus integrate the Bayesian workflow \citep{gelman2020bayesian} and uncertainty quantification.


\begin{algorithm}
\small
\caption{Product Monte Carlo}
\label{alg:PMC}
\begin{algorithmic}[1]
\For{$s = 1$ to $S$}
     \For{$k = 1$ to $K$}
     \State Draw $x_k^{(s)}\sim \pi_k$ independently.
     \EndFor
    \State $C^{(s)} \gets c\big( x_1^{(s)}, \dots, x_K^{(s)} \big)$.
    \State $D^{(s)} \gets d\big( x_1^{(s)}, \dots, x_K^{(s)} \big)$.
\EndFor

\State $\hat{f}_{h,S}\gets$ regress $h(C^{(s)})$ on $D^{(s)}$ via $f_h$, $s=1,\ldots,S$.

\State \textbf{return} $\hat{\mu}_h=\hat{f}_{h,S}(\mathbf{0}_m)$; when the
regression is fully Bayesian, also return posterior uncertainty and draws
$\mu_h^{(b)} \sim p\!\left(f_h(\mathbf{0}_m)\mid C^{(1:S)},D^{(1:S)}\right), \quad b=1,\ldots,B.$
\end{algorithmic}
\end{algorithm}

\subsection{Fully Bayesian formulation, uncertainty quantification, and diagnostics}\label{sec:UQ}

For simplicity, we present the scalar case $h:\mathcal X\to\mathbb R$. The vector-valued extension is immediate. We place a Gaussian process (GP) prior with an RBF kernel on the conditional mean and use a Gaussian likelihood
\[
\begin{aligned}
   h(C^{(s)}) \mid f_h & \overset{\mathrm{ind}}{\sim}\mathcal{N}(f_h(D^{(s)}), \sigma^2) \quad s=1,\ldots,S, \\
   f_h \mid \beta_0, \ell, \tau & \sim\operatorname{GP}(\beta_0,~\kappa), \\
   \kappa(D^{(s)}, D^{(s')}) & = \tau^2 \exp\left(-\frac{\lVert D^{(s)}-D^{(s')} \rVert_2^2}{2\ell^2}\right),
\end{aligned}
\]
and a prior on the hyperparameters: global mean $\beta_0$, lengthscale $\ell$, amplitude $\tau$, and noise scale $\sigma$.

Since $\mu_h=f_h(\mathbf{0}_{m})$, we can integrate out hyperparameters to propagate uncertainty and draw Monte Carlo samples from its posterior given all samples:
\[
    \mu_h^{(b)} \sim p\!\left(f_h(\mathbf{0}_m)\mid C^{(1:S)},D^{(1:S)}\right), \quad b=1,\ldots,B.
\]
It is then straightforward to compute the posterior mean, variance and credible intervals for $\mu_h$.

Uncertainty quantification is especially important when near-consensus observations are sparse. As we show in Section~\ref{sec:asymptotics}, without additional structural assumptions, the samples may provide little information about the product expectation.
In typical Monte Carlo settings, the standard error scales as $O(S^{-1/2})$, whereas PMC’s posterior variance reflects both Monte Carlo and epistemic uncertainty. For example, in our GP setting, as $\ell\to\infty$, we have
$
    \operatorname{Var}\left(f_h(\mathbf{0}_{m})\mid C^{(1:S)},D^{(1:S)}\right)
    =
    \left(\tau^{-2}+S\sigma^{-2}\right)^{-1}
$
so the variance rate is $O(S^{-1})$ asymptotically. In the opposite limit $\ell\to0$, GP regression degenerates and the posterior variance remains $\tau^2$, regardless of $S$. Our fully Bayesian formulation integrates over \(\ell\), effectively allowing data to determine how strongly observations away from $D=\mathbf{0}_{m}$ inform the product expectation. We evaluate PMC calibration in Section~\ref{sec:experiments}.

Our Bayesian formulation provides coherent uncertainty quantification and integrates PMC into a Bayesian workflow. Although we begin with a simple default model with homoscedastic Gaussian likelihood and a simple kernel, our PMC framework is not tied to a
single fixed regression specification. When the model is adequate, inference is calibrated \citep{rubin1984bayesianly}. Otherwise, the workflow supports data-driven model building, diagnostics, and comparison \citep{gelman2020bayesian}. For example, we
may consider heteroscedastic noise, non-Gaussian likelihood, or flexible kernels, and we can select or average
models. In practice, rather than assuming the correct regression specification, we recommend using posterior uncertainty and posterior predictive checks (PPC) to assess the adequacy of the
fitted regression model. For example, Figure~\ref{fig:pmc_ppc_misspecification} illustrates an example: when $C\mid D$ is nonlinear, a misspecified linear regression is detected by small PPC $p$-values. See Figure~\ref{fig:gp_overlap_posterior_variance} for another example.


\begin{figure}[t]
    \centering
    \includegraphics[width=0.9\linewidth]{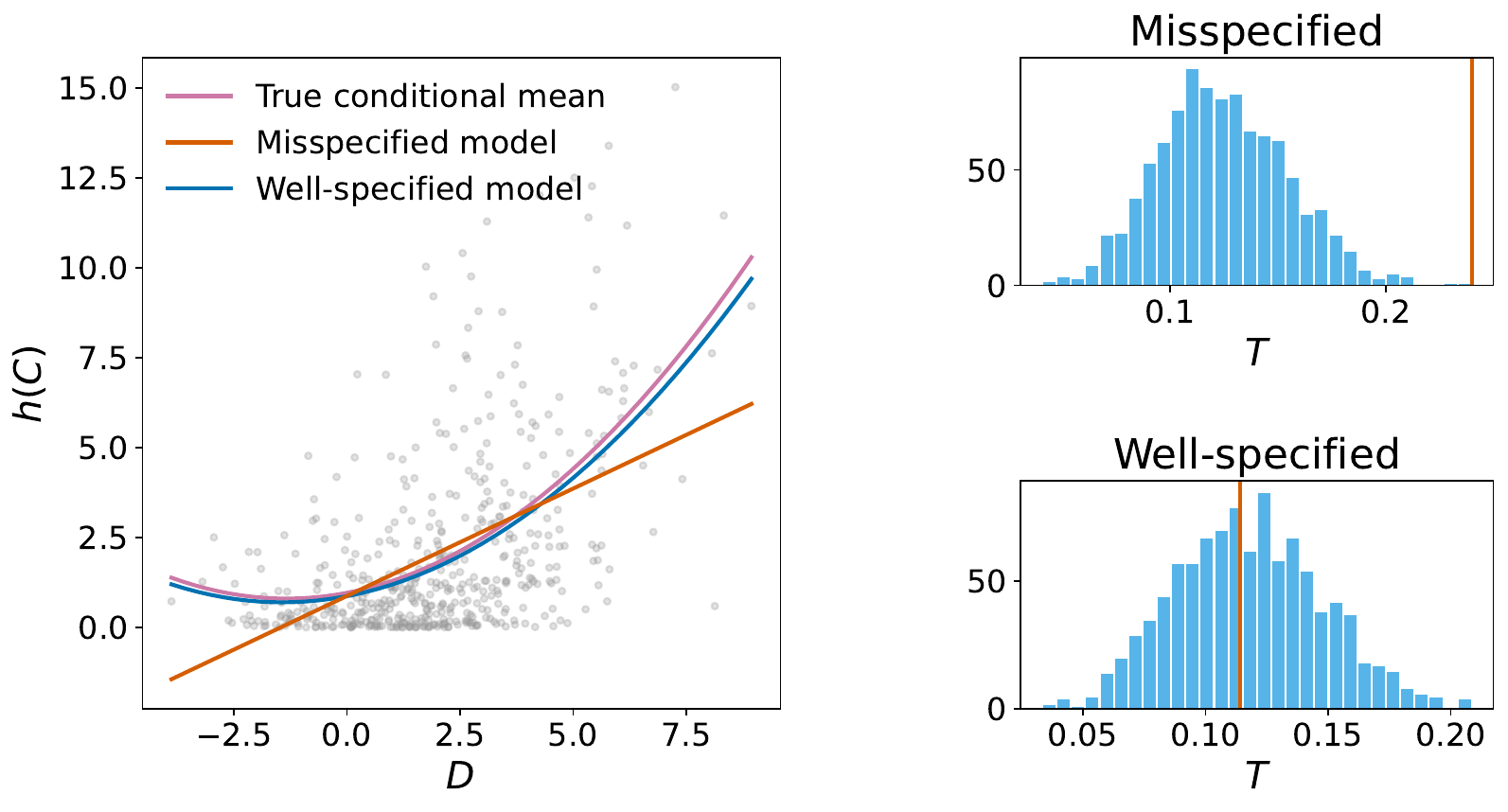}
    \caption{Using posterior predictive check in PMC. Left: true, well-specified, and misspecified regressions of $h(C)$ on $D$. Right: posterior predictive distributions of $T$, a statistic of simulated data, under the two fitted models. Orange lines mark the observed statistics.}
    \label{fig:pmc_ppc_misspecification}
\end{figure}


%% file: sections/03_asymptotics.tex
\section{Asymptotic theory}\label{sec:asymptotics}
Theorem~\ref{thm:consensus_discrepancy} guarantees that the target product expectation $\mu_h$ matches the conditional mean at zero:
\[
\mu_h
=
\E_{\pipro}[h]
=
\E[h(C)\mid D=\mathbf{0}_m].
\]
Therefore, any consistent estimator of \(\E[h(C)\mid D=\mathbf{0}_m]\) is consistent for \(\E_{\pipro}[h]\). This identification requires independence across components, not i.i.d. draws within samplers. PMC supports Markov chain Monte Carlo (MCMC) sampling, retaining consistency under standard ergodicity or mixing conditions.

To further study convergence rates, we specialize our estimator to a local linear regression. We assume i.i.d. component sampler $x_k^{(s)}\overset{\mathrm{iid}}{\sim}\pi_k$ to simplify the theory.


\begin{theorem}[Convergence Rate]
\label{thm:llr_rate}
Under the assumptions of Theorem~\ref{thm:consensus_discrepancy}, let $h$ be bounded and continuous, and let $\mu_h=\E_{\pipro}[h]$. Let $f_h$ and $\Sigma_h$ be versions of the conditional mean and covariance matrix satisfying
\[
    f_h(D)\overset{\mathrm{a.s.}}{=}\E[h(C)\mid D],
    \qquad
    \Sigma_h(D)\overset{\mathrm{a.s.}}{=}\operatorname{Cov}(h(C)\mid D).
\]
Let $\hat{f}_{h,S}$ be a local linear estimator of $f_h$ with fixed kernel $\mathcal{K}$ and bandwidth $b_S$. If Assumptions~\ref{ass:llr_discrepancy}--\ref{ass:llr_bandwidth} hold, the PMC estimator converges in probability
\[
    \hat{f}_{h,S}(\mathbf{0}_m)
    \overset{p}{\to}
    \mu_h.
\]
Moreover, its optimized leading-order asymptotic mean squared error (AMSE) is
\[
    \operatorname{AMSE}_S^\ast
    =
    \mathcal{C}_{\mathcal{K}}
    \left[
        \frac{
            \operatorname{tr}\!\left(\Sigma_h(\mathbf{0}_m)\right)
            \left\lVert\Delta f_h(\mathbf{0}_m)\right\rVert_2^{\frac{m}{2}}
        }{
            S \, p_D(\mathbf{0}_m)
        }
    \right]^{\frac{4}{m+4}},
\]
where $\operatorname{tr}$ denotes the trace, $\Delta$ denotes the Laplacian, and $\mathcal{C}_{\mathcal{K}}$ is a constant depending on the kernel.
\end{theorem}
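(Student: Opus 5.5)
The proof splits into two parts: identification and local linear regression asymptotics. First, continuity of $f_h$ at $\mathbf 0_m$ identifies $f_h(\mathbf 0_m)=\mu_h$. Second, the classical multivariate local linear asymptotics (Ruppert--Wand type) give consistency and the bias--variance expansion, which we then optimize over the bandwidth $b_S$.

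\emph{Step 1 (identification).} By Theorem~\ref{thm:consensus_discrepancy}, for bounded $h$ we have $\E[h(C)\mid \lVert D\rVert_\infty<\varepsilon]\to\mu_h$. Assumptions~\ref{ass:llr_discrepancy}--\ref{ass:llr_bandwidth} should provide a density $p_D$ of $D$ that is continuous and positive at $\mathbf 0_m$, together with a version $f_h$ that is twice continuously differentiable near $\mathbf 0_m$. Then
\[
\E[h(C)\mid \lVert D\rVert_\infty<\varepsilon]=\frac{\int_{\lVert u\rVert_\infty<\varepsilon} f_h(u)p_D(u)\,\mathrm du}{\int_{\lVert u\rVert_\infty<\varepsilon}p_D(u)\,\mathrm du}\to f_h(\mathbf 0_m),
\]
so $f_h(\mathbf 0_m)=\mu_h$. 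Uniqueness of the limit is what connects the regression target to the product expectation. I expect this step to be the main subtlety, because conditional expectations are defined only almost surely. The argument therefore has to use the continuous version supplied by the assumptions, and the ratio form above, rather than evaluating an arbitrary version at a null point.

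\emph{Step 2 (bias and variance).} Write the local linear estimator as the intercept of the weighted least squares fit with weights $\mathcal K((D^{(s)}-\mathbf 0)/b_S)$, that is, $\hat f_{h,S}(\mathbf 0_m)=e_1^\top(X^\top WX)^{-1}X^\top W Y$. For i.i.d.\ pairs $(C^{(s)},D^{(s)})$, apply a law of large numbers to $S^{-1}b_S^{-m}X^\top W X$. Under the symmetric kernel and smoothness of $p_D$, it converges to $p_D(\mathbf 0_m)\,\mathrm{diag}(1,\mu_2(\mathcal K)b_S^2 I)$. Expand $f_h$ by Taylor's theorem to second order: the linear term is annihilated exactly, so the conditional bias is
\[
\tfrac12\mu_2(\mathcal K)b_S^2\,\Delta f_h(\mathbf 0_m)+o_p(b_S^2),
\]
taken coordinatewise for vector $h$. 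Continuity of $\Sigma_h$ at $\mathbf 0_m$ gives the conditional variance
\[
R(\mathcal K)\operatorname{tr}\Sigma_h(\mathbf 0_m)/(S b_S^m p_D(\mathbf 0_m))\{1+o_p(1)\}.
\]
Consistency then follows from $b_S\to0$ and $Sb_S^m\to\infty$ (Assumption~\ref{ass:llr_bandwidth}) by Chebyshev's inequality; boundedness of $h$ controls the moments.

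\emph{Step 3 (optimization).} The leading AMSE is
\[
\frac{A}{Sb^m}+Bb^4,\qquad A=\frac{R(\mathcal K)\operatorname{tr}\Sigma_h(\mathbf 0_m)}{p_D(\mathbf 0_m)},\quad B=\tfrac14\mu_2(\mathcal K)^2\lVert\Delta f_h(\mathbf 0_m)\rVert_2^2 .
\]
Minimizing over $b$ gives $b^\ast\propto (A/(SB))^{1/(m+4)}$ and a minimum value proportional to $A^{4/(m+4)}B^{m/(m+4)}S^{-4/(m+4)}$. Since $B^{m/(m+4)}\propto\lVert\Delta f_h\rVert_2^{2m/(m+4)}=(\lVert\Delta f_h\rVert_2^{m/2})^{4/(m+4)}$, this regroups into the stated bracket raised to the power $4/(m+4)$. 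All kernel constants are absorbed into $\mathcal C_{\mathcal K}$.
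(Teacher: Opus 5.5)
Your proposal is correct and follows essentially the same route as the paper. You identify $f_h(\mathbf{0}_m)=\mu_h$ from Theorem~\ref{thm:consensus_discrepancy} through continuity of the smooth version of $f_h$ and uniqueness of limits, take the local linear bias and variance from the Ruppert--Wand expansions (with Chebyshev for consistency), and minimize $A/(Sb^m)+Bb^4$ over the bandwidth. The differences are cosmetic: you do the identification first as an explicit ratio of integrals, and you leave $\mathcal{C}_{\mathcal{K}}$ implicit where the paper computes $\mathcal{C}_{\mathcal{K}}=\frac{m+4}{4m^{m/(m+4)}}\mu_2(\mathcal{K})^{2m/(m+4)}R(\mathcal{K})^{4/(m+4)}$. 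One thing to tidy is that a limit cannot depend on $b_S$, so state your design-matrix limit after rescaling the slope columns by $b_S^{-1}$, which gives $p_D(\mathbf{0}_m)\operatorname{diag}(1,\mu_2(\mathcal{K})I_m)$.
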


Theorem~\ref{thm:llr_rate} characterizes the convergence rate of PMC through four quantities: the discrepancy density at zero \(p_D(\mathbf{0}_m)\), the curvature of the conditional mean $\left\lVert\Delta f_h(\mathbf{0}_m)\right\rVert_2$, the total conditional variance $\operatorname{tr}\!\left(\Sigma_h(\mathbf{0}_m)\right)$, and the discrepancy dimension $m$.

The discrepancy density at zero determines the local effective sample size. With bandwidth \(b_S\), the expected number of samples near zero is
\[
S \, \Prob(\|D\|_\infty<b_S)
\approx
S \, (2b_S)^m p_D(\mathbf{0}_m).
\]

Intuitively, when \(p_D(\mathbf{0}_m)\) is small, few samples lie near consensus, leaving PMC to rely more heavily on extrapolation. Proposition~\ref{prop:gcs_decomposition} shows that this discrepancy density at zero, and thereby the effective sample size, is smaller as (a) components move further away or (b) individual distributions are more diffuse.



\begin{proposition}[When PMC is harder]
\label{prop:gcs_decomposition}
Under Assumptions~\ref{ass:gcs_integrability}--\ref{ass:regular_discrepancy_transformation} and Assumption~\ref{ass:finite_component_entropies}, the density of the discrepancy at zero is
{\small
\begin{align*}
    p_D(\mathbf{0}_m)
    &=
    \mathcal{C}_{\psi}
    \exp\left(-\frac{1}{K}D_{\mathrm{GCS}}\left(\pi_{1:K}\right)\right)
    \prod_{k=1}^{K}\left\|\pi_k\right\|_{L^K}\\
 &\geq
    \mathcal{C}_{\psi}
    \exp\left(
        -\frac{1}{K}D_{\mathrm{GCS}}(\pi_{1:K})
        -\frac{K-1}{K}\sum_{k=1}^{K}H(\pi_k)
    \right),
\end{align*}}%
where $\mathcal{C}_{\psi}$ depends on $\psi$, $D_{\mathrm{GCS}}$ is the generalized Cauchy--Schwarz divergence, and $H$ is the entropy.
\end{proposition}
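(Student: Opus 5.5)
We begin by computing the density of the full discrepancy $\delta(\boldsymbol{x})=(x_2-x_1,\ldots,x_K-x_1)$ at the origin. We then transfer it to $D=\psi(\delta)$, and finally rewrite the result in terms of the generalized Cauchy--Schwarz divergence.

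\emph{Step 1 (full discrepancy).} By independence, the joint density of $(x_1,\delta)$ is $\pi_1(x_1)\prod_{k\ge2}\pi_k(x_1+v_k)$, since the map $(x_1,x_2,\ldots,x_K)\mapsto(x_1,x_2-x_1,\ldots)$ has unit Jacobian. Integrating out $x_1$ gives
\[
p_\delta(\mathbf{0})=\int_{\mathbb{R}^p}\prod_{k=1}^K\pi_k(x)\,\mathrm{d}x=Z.
\]
To make the pointwise evaluation rigorous, we would use the $L^K$-integrability in Assumption~\ref{ass:gcs_integrability}. By the generalized H\"older inequality, $v\mapsto\int\prod_k\pi_k(x+v_k)\,\mathrm{d}x$ is finite, and by continuity of translation in $L^K$ it is continuous at $v=\mathbf{0}$. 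This justifies taking the continuous version of the density.

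\emph{Step 2 (from $\delta$ to $D$).} Assumption~\ref{ass:regular_discrepancy_transformation} should make $\psi$ regular enough near the origin, for example locally homogeneous or locally diffeomorphic up to scaling. Under that assumption,
\[
\Prob(\lVert D\rVert_\infty<\varepsilon)\sim \mathcal{C}_\psi'\,\varepsilon^{m}\,p_\delta(\mathbf{0}),
\]
with $\mathcal{C}_\psi'$ equal to the volume of $\psi^{-1}$ of a small ball, suitably rescaled. This yields $p_D(\mathbf{0}_m)=\mathcal{C}_\psi Z$. For instance, if $\psi$ is the identity then $\mathcal{C}_\psi=1$. If $\psi=\lVert\cdot\rVert_\infty$, the factor is the derivative of the volume of the $\infty$-ball, and one uses the continuity of $p_\delta$ at zero.

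\emph{Step 3 (GCS rewriting).} We take the generalized Cauchy--Schwarz divergence to be
\[
D_{\mathrm{GCS}}(\pi_{1:K})=-\log\frac{\int\prod_k\pi_k}{\prod_k\lVert\pi_k\rVert_{L^K}}.
\]
By H\"older this quantity is nonnegative. If the paper's normalization carries the factor $1/K$, as suggested by the statement, we use $-K\log(\cdot)$ instead. Either way, $Z=\exp(-\tfrac1K D_{\mathrm{GCS}})\prod_k\lVert\pi_k\rVert_{L^K}$, which gives the stated equality.

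\emph{Step 4 (inequality).} We apply Jensen's inequality to $\log$ under the probability measure $\pi_k$:
\[
\log\lVert\pi_k\rVert_{L^K}=\frac1K\log\E_{\pi_k}[\pi_k^{K-1}]\ge\frac{K-1}{K}\E_{\pi_k}[\log\pi_k]=-\frac{K-1}{K}H(\pi_k).
\]
This step requires finite entropy, which is Assumption~\ref{ass:finite_component_entropies}. Summing over $k$ and exponentiating gives the lower bound.

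The main obstacle is Step 2. It requires identifying the right regularity condition on $\psi$ so that $p_D(\mathbf{0}_m)$ is well defined and factorizes as a $\psi$-dependent constant times $p_\delta(\mathbf{0})$, which needs local scaling behaviour of $\psi$ together with continuity of $p_\delta$ at zero. I would handle it by the small-ball limit computation above, dominated by the continuity established in Step 1.
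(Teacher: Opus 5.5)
Your Steps 1, 3 and 4 follow the paper's argument. Step 1 obtains $p_\delta(\mathbf{0}_q)=Z$ from the unit-Jacobian change of variables. Step 3 is the algebraic rewriting of $Z$; the paper defines $D_{\mathrm{GCS}}(g_{1:K})=-\log\bigl[(\int\prod_k g_k)^K/\prod_k\|g_k\|_{L^K}^K\bigr]$, which is your second normalization. With your first normalization you would get $\exp(-D_{\mathrm{GCS}})$ instead, so ``either way'' is not right. Step 4 is the same Jensen bound.

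\textbf{The gap is Step 2.} This is precisely where $\mathcal{C}_\psi$ comes from, and you leave it open. The missing idea is that $\psi$ must be a dimension-preserving local diffeomorphism at the origin. Assumption~\ref{ass:regular_discrepancy_transformation} requires $\psi:\mathbb{R}^q\to\mathbb{R}^q$ (so $m=q$), $\psi\in C^1$ near $\mathbf{0}_q$ with $\det J_\psi(\mathbf{0}_q)\neq0$, and the bounded-away condition. The paper then argues in three steps:
\begin{itemize}
\item the inverse function theorem gives a $C^1$ inverse on some $\psi(B_a)$;
\item the bounded-away condition forces $S_\varepsilon\subseteq B_a$ for small $\varepsilon$, so no other preimage branch contributes near $D=\mathbf{0}$;
\item the change of variables gives $p_D(\mathbf{0}_q)=p_\delta(\mathbf{0}_q)\,|\det J_{\psi^{-1}}(\mathbf{0}_q)|$, that is, $\mathcal{C}_\psi=1/|\det J_\psi(\mathbf{0}_q)|$.
\end{itemize}

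Without $m=q$, your scaling $\Prob(\|D\|_\infty<\varepsilon)\sim\mathcal{C}'_\psi\varepsilon^m p_\delta(\mathbf{0})$ is false in general. The reason is that $\Prob(\|D\|_\infty<\varepsilon)=\int_{S_\varepsilon}p_\delta\approx\operatorname{Leb}^q(S_\varepsilon)\,Z$, so the exponent is dictated by how fast $\operatorname{Leb}^q(S_\varepsilon)$ shrinks.

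Your own example shows the failure. For $\psi=\|\cdot\|_\infty$ (so $m=1$) one has $S_\varepsilon=B_\varepsilon$ and $\Prob(D<\varepsilon)\approx(2\varepsilon)^qZ$. The derivative of the ball volume at $\varepsilon=0$ vanishes for $q\geq2$, so $p_D(0)=0$ and no positive $\mathcal{C}_\psi$ exists. Likewise, a ``locally homogeneous'' $\psi$ of degree $\alpha$ works only when $q/\alpha=m$.

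Once you adopt the paper's assumption, your small-ball route does close. You get $\operatorname{Leb}^q(S_\varepsilon)=\int_{(-\varepsilon,\varepsilon)^q}|\det J_{\psi^{-1}}(w)|\,\mathrm{d}w\sim(2\varepsilon)^q/|\det J_\psi(\mathbf{0}_q)|$. The continuity of $p_\delta$ at zero that you prove in Step 1 (H\"older plus continuity of translations in $L^K$) then gives the limit. This justifies the pointwise evaluation of the density more carefully than the paper does, since the paper applies the change-of-variables formula at a single point without discussing the choice of version.
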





The remaining AMSE components in Theorem~\ref{thm:llr_rate} generally depend on the model and representation. For a concrete theoretical illustration, consider Gaussian components, whose product is also Gaussian:
{\small
\[
\pi_k=\mathcal N(\mu_k,\Sigma_k) \quad k=1,\ldots,K
\;\Rightarrow\;
\pipro=\mathcal N(\mu_\pi,\Sigma_\pi),
\]
\[
\mathrm{where}~~
\Sigma_\pi=\left(\sum_{k=1}^K\Sigma_k^{-1}\right)^{-1},
\qquad
\mu_\pi=\Sigma_\pi\sum_{k=1}^K\Sigma_k^{-1}\mu_k.
\]
}%
PMC can attain the same parametric efficiency without density evaluations or this closed-form update.

\begin{proposition}[Sharpness in Gaussian example]
\label{prop:gaussian_amse}
Consider either of the following PMC representations:
\begin{enumerate}[nosep,leftmargin=*]
    \item Weighted average consensus, full discrepancy:
    \[
        c_W\left(\boldsymbol{x}\right)=\sum_{k=1}^K W_kx_k \text{ with } \sum_{k=1}^K W_k=I_p,
        \text{ and } d=\delta.
    \]
    \item Precision-weighted consensus, scalar discrepancy:
    \[
        c_{W^\ast}\left(\boldsymbol{x}\right)
        =\sum_{k=1}^K\Sigma_\pi\Sigma_k^{-1}x_k,
        \text{ and }
        d=\left\lVert\delta\right\rVert_2^2.
    \]
\end{enumerate}
For $h(u)=u$, the leading-order AMSE of PMC for estimating $\E_{\pipro}[h]$ with local linear regression is
\[
\operatorname{AMSE}_S^\ast
=
\frac{\operatorname{tr}(\Sigma_\pi)}{S}
(1+\mathcal C_D),
\]
where $\mathcal{C}_D \geq 0$ is a constant depending on the chosen representation and the component parameters.
\end{proposition}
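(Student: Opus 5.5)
The plan is to plug the Gaussian structure into the AMSE formula of Theorem~\ref{thm:llr_rate} and compute each ingredient in closed form: $p_D(\mathbf{0}_m)$, $\Sigma_h(\mathbf{0}_m)$, and $\Delta f_h(\mathbf{0}_m)$. The key observation is that for $h(u)=u$ and Gaussian components, $(C,D)$ is jointly Gaussian in case 1, since both $c_W$ and $\delta$ are linear in $\boldsymbol{x}$. Hence $f_h(D)=\mu_\pi+B\,D$ is exactly affine, where $B=\operatorname{Cov}(C,D)\operatorname{Cov}(D)^{-1}$ and $f_h(\mathbf{0})=\mu_\pi$ by Theorem~\ref{thm:consensus_discrepancy}. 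The conditional covariance $\Sigma_h(D)$ is constant, and conditioning on $D=\mathbf{0}$ forces $x_1=\dots=x_K$ with law $\pipro$, so $\Sigma_h(\mathbf{0})=\Sigma_\pi$. Because the Laplacian of an affine function vanishes, the bias term of the local linear estimator is identically zero. The optimized AMSE in Theorem~\ref{thm:llr_rate} therefore degenerates: the formula with $\|\Delta f_h\|$ in it is not used directly. Instead, the bandwidth can be taken large ($b_S\to\infty$, or the kernel weights made flat), and local linear regression reduces to global ordinary least squares of $C$ on $(1,D)$.

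I will then compute the variance of the OLS intercept directly. Standard linear-model asymptotics give $S\cdot\operatorname{Var}(\hat\beta_0)\to \operatorname{tr}(\Sigma_\pi)\bigl(1+\E[D]^\top\operatorname{Cov}(D)^{-1}\E[D]\bigr)$. This is the usual intercept-variance inflation from extrapolating to $D=\mathbf{0}$ when the design is centered at $\E[D]=(\mu_2-\mu_1,\dots,\mu_K-\mu_1)$. Setting $\mathcal C_D=\E[D]^\top\operatorname{Cov}(D)^{-1}\E[D]\ge 0$, a Mahalanobis separation of the components, gives the claim for case 1, and $W$ drops out. I will check that homoscedasticity holds, so the sandwich variance equals the model-based one.

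Case 2 is the harder step because $d=\|\delta\|_2^2$ is a nonlinear, scalar function of $\delta$, so $(C,D)$ is no longer jointly Gaussian. The approach is as follows:
\begin{enumerate}
\item Decompose $\boldsymbol{x}$ into $C^\ast=c_{W^\ast}(\boldsymbol{x})$ and $\delta$. With precision weights, $C^\ast$ is exactly the GLS/precision-weighted combination, and I expect $C^\ast$ to be independent of $\delta$. The check is that $\operatorname{Cov}(C^\ast,x_j-x_1)=\Sigma_\pi\Sigma_j^{-1}\Sigma_j-\Sigma_\pi\Sigma_1^{-1}\Sigma_1=0$.
\item Because $C^\ast$ is independent of $\delta$, $\E[C^\ast\mid\delta]$ is constant, so $\E[C^\ast\mid d]$ is constant as well.
\item Observe that the constant is $\E[C^\ast]=\Sigma_\pi\sum_k\Sigma_k^{-1}\mu_k=\mu_\pi$.
\end{enumerate}
It follows that $f_h\equiv\mu_\pi$ and $\Sigma_h\equiv\Sigma_\pi$, so again the bias vanishes and a flat-kernel local linear fit on the scalar $D$ gives an intercept whose variance is $\operatorname{tr}(\Sigma_\pi)(1+\E[D]^2/\operatorname{Var}(D))/S$. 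That gives $\mathcal C_D=\E[D]^2/\operatorname{Var}(D)$, which is finite since $d$ is a quadratic form of Gaussians with finite fourth moments.

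The main obstacle is justifying that the ``optimized'' AMSE of Theorem~\ref{thm:llr_rate} in this degenerate zero-curvature case is indeed attained by the infinite-bandwidth limit with rate $S^{-1}$, rather than the $S^{-4/(m+4)}$ rate. I would handle this by redoing the bias–variance expansion directly. The bias is exactly zero for all $b_S$ when $f_h$ is affine, since local linear regression reproduces linear functions. The variance is then minimized over $b_S$, and it decreases monotonically toward the OLS value as $b_S\to\infty$.
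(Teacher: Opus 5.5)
Your proposal is correct and follows essentially the same route as the paper: bypass Theorem~\ref{thm:llr_rate} (its nonzero-Laplacian hypothesis fails here), show the local linear model is correctly specified with homoscedastic errors (affine $f_h$ from joint Gaussianity in case~1, $C\indep\delta$ and hence $C\indep D$ in case~2), let $b_S\to\infty$ so the fit reduces to OLS, and read off $\mathcal{C}_D=\mu_D^\top\Sigma_D^{-1}\mu_D$ from the intercept variance. The only nuance is the justification that flat weights are optimal: the paper uses Gauss--Markov together with a kernel that is continuous at the origin with $\mathcal{K}(\mathbf{0}_m)>0$, whereas you claim monotonicity in $b_S$, which you do not need and which is not guaranteed for a general kernel.
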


The first AMSE term matches the Cramér--Rao lower bound for estimating the product mean with $S$ draws from each Gaussian component. Thus, despite using only samples, PMC is asymptotically sharp, attaining the efficiency of the closed-form Gaussian estimator.


Proposition~\ref{prop:gaussian_amse} links the choice of $c$ and $d$ to the convergence rate of PMC, although consistency holds for any admissible representation. With \(d=\delta\), any linear consensus attains the same rate. Scalar \(d=\lVert\delta\rVert_2^2\) simplifies regression but attains the Cramér--Rao bound only with precision weighting. Our experiments use arithmetic-average consensus and full discrepancy.



%% file: sections/04_experiments.tex
\section{Experiments}\label{sec:experiments}


We demonstrate PMC on simulated data and three real applications: a privacy-restricted clinical trial, large-scale particle physics inference, and likelihood-free protein imaging. When \(\pi_k\) can be evaluated, we benchmark PMC against GBF, the current state-of-the-art fusion method, and long-run MCMC targeting the product density. This deliberately favors the baselines, as PMC requires no density evaluations.


As we mentioned, PMC accommodates arbitrary regression models. We consider three implementations:
\begin{itemize}[nosep,leftmargin=*]
    \item \textbf{PMC-OLS}: ordinary least squares regression.
    \item \textbf{PMC-MLP}: multilayer perceptron neural network.
    \item \textbf{PMC-GP}: Gaussian process regression.
\end{itemize}
\paragraph{Benchmark with simulated data.}\label{sec:simulated_data}
We construct benchmarks varying the component family $\pi_k$, distances between components, sample size $S$, dimension $p$, and number of components $K$. We choose parametric families with closed-form products, withholding this information from \textbf{PMC} and \textbf{GBF} during training. We also include a parametric \textbf{Oracle} that knows the product family, estimates component parameters, and applies the closed-form product update.

We assess performance on (a) accuracy, measured by mean squared error (MSE) for estimating $\E_{\pipro}[h]$, and (b) computational cost, measured by wall-clock training time. PMC is the only method considered that is both density-evaluation-free and distribution-free.

We first consider $K=2$ Gaussian components and a scalar target function $h$, varying dimension $p$ and sample size $S$. Figure~\ref{fig:normalized_sum_dimension_scaling} shows PMC-OLS closely matching the Oracle in MSE and improving steadily with $S$, as expected because the linear regression model is well-specified. Its sharp parametric rate agrees with Proposition~\ref{prop:gaussian_amse}, and it has the lowest training cost among PMC variants. PMC-MLP improves at large $S$, but its MSE increases in higher dimensions. At a similar runtime, PMC-GP is significantly more accurate, especially in higher dimensions. All PMC methods outperform GBF at moderately large sample sizes and are substantially faster, often by orders of magnitude. We show GBF results only where computationally feasible, as its training time grows steeply with $S$ and $p$.

\begin{figure}[t]
    \centering
    \includegraphics[width=\linewidth]{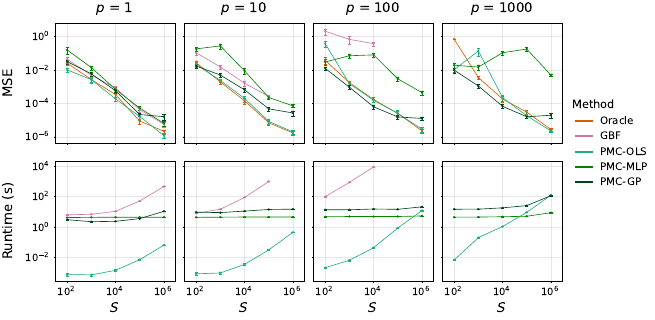}
    \caption{MSE and runtime versus $S$ for the dimension-scaling experiment for the standardized sum of coordinates.}
    \label{fig:normalized_sum_dimension_scaling}
\end{figure}

Further simulations in Appendix~\ref{app:additional_simulation_results} show qualitatively similar results: PMC outperforms benchmarks in accuracy and cost. Results are averaged over 20 replications, with error bars showing standard errors.

In our fully Bayesian formulation, PMC-GP quantifies uncertainty in the target expectation, accounting for both Monte Carlo and epistemic uncertainty. We demonstrate how to incorporate this uncertainty into a Bayesian workflow for model checking and diagnostics.  Figure~\ref{fig:univariate_gaussian_k2_calibration} assesses the calibration of the PMC-GP posterior uncertainty. Although PMC-GP is a Bayesian model and is not designed to guarantee frequentist calibration, its uncertainty estimates provide reasonable frequentist coverage. In contrast, GBF captures only Monte Carlo uncertainty, which is computed from the newly generated particles and therefore depends on particle count. As the number of particles increases, its intervals become increasingly undercovered. To examine how posterior uncertainty provides built-in PMC diagnostics, Figure~\ref{fig:gp_overlap_posterior_variance} considers a simple Gaussian example with two components, $\mathcal N(0,1)$ and $\mathcal N(\mu,1)$. As $\mu$ increases and the components become more separated, the PMC-GP posterior uncertainty increases, providing a warning that the component distributions have limited overlap, the available data provide little information about the target expectation, and the product-estimation problem becomes more difficult.


\begin{figure}[t]
    \centering
    \begin{minipage}[t]{0.49\linewidth}
        \vspace{0pt}
        \centering
        \captionsetup{font=scriptsize,skip=2pt}
        \begin{minipage}[t][0.72\linewidth][c]{\linewidth}
            \centering
            \includegraphics[width=\linewidth]{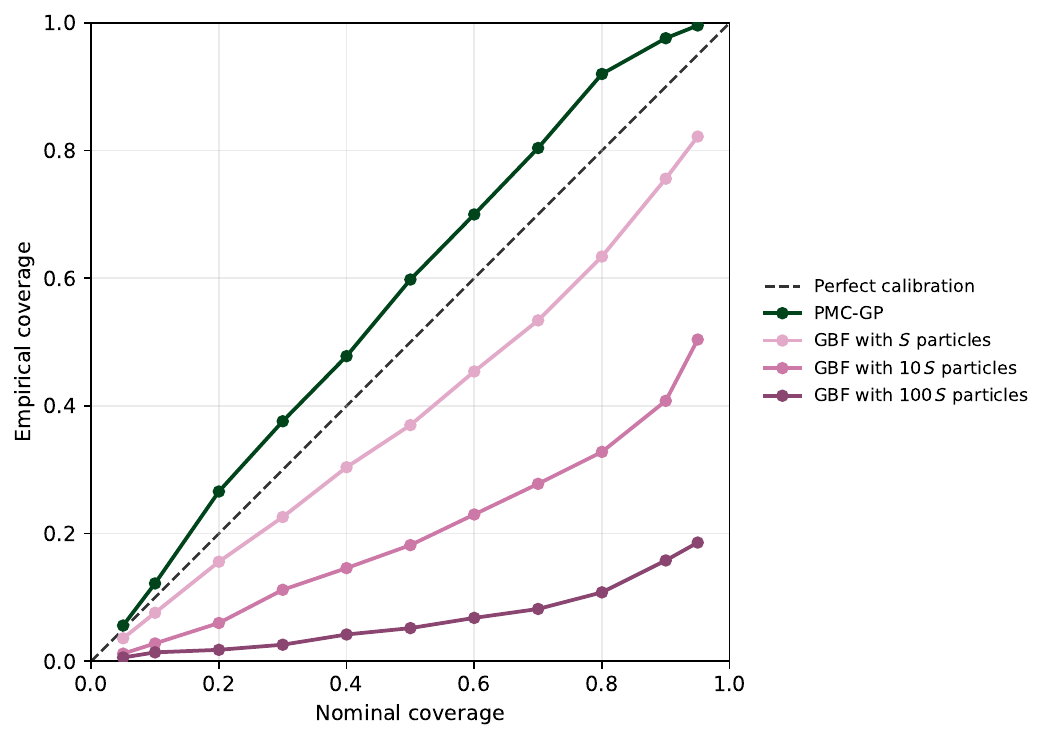}
        \end{minipage}
        \captionof{figure}{Calibration for two Gaussian components. PMC-GP is slightly conservative, while GBF's coverage worsens with particle count.}
        \label{fig:univariate_gaussian_k2_calibration}
    \end{minipage}
    \hfill
    \begin{minipage}[t]{0.49\linewidth}
        \vspace{0pt}
        \centering
        \captionsetup{font=scriptsize,skip=2pt}
        \begin{minipage}[t][0.72\linewidth][c]{\linewidth}
            \centering
            \includegraphics[width=\linewidth]{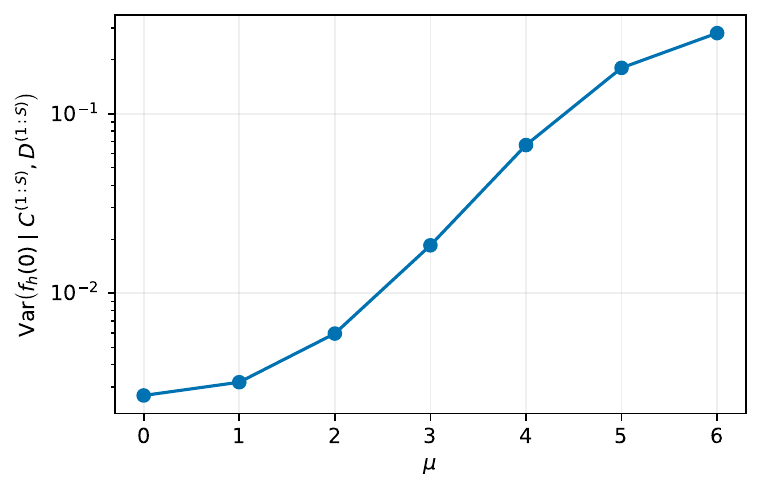}
        \end{minipage}
        \captionof{figure}{PMC-GP posterior variance of the target expectation $f_h(0)=\E_{\pipro}[h]$ versus mean separation between two Gaussian components.}
        \label{fig:gp_overlap_posterior_variance}
    \end{minipage}
\end{figure}

Next, we demonstrate PMC on real-data applications.
When feasible, we compute the true expectation via a full-data MCMC, and evaluate the root mean squared error and runtime of subsample approximations.

\paragraph{Privacy-restricted clinical trial}\label{sec:real_data_federated_learning}

The respiratory data~\citep{halekoh2006geepack} is a public clinical-trial dataset with patients from two clinics randomized to placebo or treatment and binary respiratory outcomes. Researchers fit a logistic regression predicting the outcome from treatment and other covariates. We assume that privacy constraints prevent the clinics from sharing data, but researchers still need to compute the full-data posterior mean. We implement PMC and GBF to aggregate subposteriors. For an oracle truth, we also run MCMC on the pooled data, which would violate the assumed privacy constraints. For runtime, we separate (a) \emph{Subsample}, the cost of obtaining local posterior samples by MCMC, from (b) \emph{Merge}, the additional cost of aggregating them. \emph{Total} is their sum. \emph{Full MCMC} reports the MCMC runtime on the pooled data. Table~\ref{tab:respiratory-identity-comparison} reports that all methods approximate the full-posterior mean well, while all PMC variants have lower aggregation costs than GBF.


\begin{table}[ht]
\centering
\small
\setlength{\tabcolsep}{2.5pt}
\input{tables/respiratory_identity_comparison}
\vspace{-0.5em}
\caption{Application in a privacy-restricted clinical trial.}\label{tab:respiratory-identity-comparison}
\vspace{-0.5em}
\end{table}

\paragraph{Large-scale inference }\label{sec:real_data_large_scale_computation}

The HIGGS data~\citep{baldi2014searching} contains 11 million simulated collision events labeled as Higgs-producing signal or background. We fit a logistic regression with 28 predictors, splitting the data into $K=10$ batches to reduce computational costs. Table~\ref{tab:higgs-identity-comparison} shows that PMC achieves small estimation errors with negligible computational cost relative to subsampling. Total runtime is roughly 10\% of full-data MCMC. GBF is computationally infeasible at this scale and omitted. Appendix~\ref{app:additional_real_data_results} reports additional large-scale inference examples, where GBF takes significantly longer even when feasible.

\begin{table}[ht]
\centering
\small
\setlength{\tabcolsep}{2.5pt}
\input{tables/higgs_identity_comparison}
\vspace{-0.5em}
\caption{Application in large-scale inference. PMC's minimal overhead keeps total cost a fraction of full-data MCMC.}\label{tab:higgs-identity-comparison}
\vspace{-0.5em}
\end{table}

\paragraph{Simulation-based inference}\label{sec:real_data_simulation_based_inference}

Recent advances in cryo-electron microscopy (cryoEM) enable inference on protein structures from 2D images using approaches such as SBI \citep{dingeldein2025amortized, lai2024predictive}. To mimic a realistic setting, we simulate \(K=10\) noisy cryoEM images of the HSP90 protein using cryoJAX~\citep{obrien2026cryojax}. We express the binding angle \(\theta\) as a transformation of a latent parameter \(z\) and perform posterior inference on the latent scale. Although the simulator is differentiable, the likelihood of the model is intractable because of additional nuisance parameters (e.g., rotations, translations, and measurement noise). For each image, we use neural posterior estimation to obtain individual posteriors, and then aggregate them with PMC into the full-data posterior. Figure~\ref{fig:hsp90} visualizes the subposteriors, the true combined posterior, and the PMC estimates. As shown in the table on the right of Figure~\ref{fig:hsp90}, PMC-MLP nearly exactly recovers the full-data posterior mean, while PMC-GP is very close. Sampling the image-level subposteriors with SBI alone takes 1.7 minutes each, whereas the additional cost of PMC is negligible. As a demonstration of an SBI-PMC workflow, amortized SBI enables fast, parallel inferences across subsamples, while PMC provides a reliable and scalable way to aggregate them.

\begin{figure}[t]
    \centering

    \begin{minipage}[c]{0.325\linewidth}
        \centering
        \includegraphics[width=\linewidth]{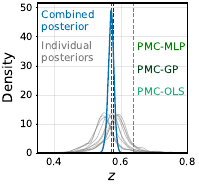}
    \end{minipage}
    \hfill
    \begin{minipage}[c]{0.65\linewidth}
        \centering
        \raisebox{6.1pt}[\height][\depth]{%
            \resizebox{\linewidth}{!}{
                \input{tables/hsp90_identity_comparison}
            }%
        }
    \end{minipage}

    \caption{
Application to SBI.
    \textit{Left:} Image-level subposteriors and the combined posterior of the latent parameter $z$. PMC-MLP and PMC-GP estimates recover the true posterior mean. \textit{Right Table:} PMC closely approximates the full-data posterior expectation with negligible merging cost relative to image-level subsampling SBI.
    }
    \label{fig:hsp90}
\end{figure}

%% file: tables/respiratory_identity_comparison.tex
\begin{tabular}{lccc}
\toprule
Method & Estimation error & Merge (min) & Total (min) \\
\midrule
Full MCMC & 0.0000 & -- & 0.8152 \\
Subsample & -- & -- & 0.9796 \\
GBF & 0.0184 & 0.2047 & 1.1843 \\
PMC-OLS & 0.0110 & 0.0002 & 0.9799 \\
PMC-MLP & 0.0144 & 0.1069 & 1.0866 \\
PMC-GP & 0.0258 & 0.1009 & 1.0806 \\
\bottomrule
\end{tabular}

%% file: tables/higgs_identity_comparison.tex
\begin{tabular}{lccc}
\toprule
Method & Estimation error & Merge (min) & Total (min) \\
\midrule
Full MCMC & 0.0000 & -- & 308.1124 \\
Subsample & -- & -- & 35.2890 \\
PMC-OLS & 0.0020 & 0.0122 & 35.3012 \\
PMC-MLP & 0.0001 & 0.1092 & 35.3982 \\
PMC-GP & 0.0001 & 6.7192 & 42.0082 \\
\bottomrule
\end{tabular}

%% file: tables/hsp90_identity_comparison.tex
\begin{tabular}{lccc}
\toprule
Method & Estimation error & Merge (min) & Total (min) \\
\midrule
Sub-SBI & -- & -- & 1.7089 \\
PMC-OLS & 0.0684 & 0.0000 & 1.7089 \\
PMC-MLP & 0.0003 & 0.0134 & 1.7223 \\
PMC-GP & 0.0089 & 0.0042 & 1.7130 \\
\bottomrule
\end{tabular}

%% file: sections/05_discussion.tex
\section{Discussion}\label{sec:discussion}


\paragraph{Limitations.}


We introduced PMC, a framework to compute expectations under products of densities using only samples from the individual distributions. As quantified by Theorem~\ref{thm:llr_rate} and Proposition~\ref{prop:gcs_decomposition}, limited component overlap reduces the local effective sample size, making aggregation more reliant on extrapolation. In PMC, however, large posterior uncertainty or failed posterior predictive checks warn that the samples may not support reliable estimation. In such cases, we recommend sampling from the full target when feasible or redesigning the components to improve overlap. This limitation highlights a tradeoff, especially in large-scale inference: smaller batches reduce local sampling cost but can yield more divergent subposteriors, whereas larger ones improve overlap at a greater sampling cost. An interesting direction is therefore to integrate PMC and its diagnostics into a principled workflow for subsampling and aggregation.




\paragraph{Future directions.}

PMC accommodates arbitrary regression models, and we recommend using the Bayesian workflow for model checking and tuning. Moreover, if additional knowledge about the target distribution is available, it could be incorporated into the regression model to further improve efficiency. Furthermore, our implementation uses a fixed choice of \(c\) and \(d\), but the framework permits learning or adapting them from data, which we view as a promising future direction. Finally, notice that the conditional distribution \(C\mid D=\mathbf{0}\) matches the full product distribution $\pipro$, not only its mean. This suggests that our consensus and discrepancy representation could be extended to \emph{generative sampling} by replacing the current conditional-mean regression with flexible conditional density estimation to sample directly from the estimated conditional distribution $\hat \pi (C|D=\bf 0)$, which we leave for future work.

%% file: sections/appendix.tex
\section{Proofs of Main Results}\label{app:proofs}


We use the following notation throughout the appendix. Let $q=(K-1)p$ denote the dimension of the difference space. We write $\operatorname{Leb}^s$ for $s$-dimensional Lebesgue measure, $C^s(U)$ for the class of functions on $U$ with continuous partial derivatives up to order $s$, $J_\varphi$ for the Jacobian of $\varphi$, and $\Phi_+$ for the set of non-negative measurable functions on $\mathcal{X}\times\mathbb{R}^q$. We use $\otimes$ and $\odot$ for the Kronecker and coordinatewise products, respectively. We write $o_p$ for terms of smaller asymptotic order in probability and $A\succ0$ when $A$ is a positive definite matrix.

\subsection{Proof of Theorem~\ref{thm:consensus_discrepancy}}
\label{app:proof_consensus_discrepancy}

The following definition introduces the norm on density functions used throughout the proof.

\begin{definition}[$L^K$ norm]
\label{def:lk_norm}
For a density $g$ on $\mathcal{X}$, the $L^K$ norm of $g$ is
\[
    \|g\|_{L^K} = \left(\int_{\mathcal{X}}g(x)^K\,\mathrm{d}x\right)^{\frac{1}{K}}.
\]
\end{definition}
This definition describes the coordinates induced by the consensus function.
\begin{definition}[Coordinate Map]
    The consensus and difference maps define the following coordinates:
    \[
        T_c(\boldsymbol{x})
        =\bigl(c(\boldsymbol{x}),\delta(\boldsymbol{x})\bigr)
        \quad \forall\,\boldsymbol{x}\in\mathcal X^K.
    \]
\end{definition}

The next two definitions introduce the sets used to analyze the conditioning event.

\begin{definition}[Small Difference Set]
\label{def:small_difference_set}
    The coordinate differences are small if they belong to
    \[
        B_\varepsilon
        =\left\{\boldsymbol v\in\mathbb R^q:
        \|\boldsymbol v\|_\infty<\varepsilon\right\}
        \quad \forall\,\varepsilon>0.
    \]
\end{definition}

\begin{definition}[Small Discrepancy Set]
\label{def:small_discrepancy_set}
    The discrepancy is small if the coordinate differences belong to
    \[
        S_\varepsilon
        =\left\{\boldsymbol v\in\mathbb R^q:
        \|\psi(\boldsymbol v)\|_\infty<\varepsilon\right\}
        \quad \forall\,\varepsilon>0.
    \]
\end{definition}

With these definitions in place, we can now prove Theorem~\ref{thm:consensus_discrepancy}.

\begin{proof}

Consider the following function that measures the consensus offset relative to the first component:
\[
    \gamma(\boldsymbol{v})
    =c(\mathbf{0}_p,v_1,\ldots,v_{K-1})
    \quad \forall \, \boldsymbol{v}\in\mathbb{R}^{q}.
\]
By the translation-equivariance assumption in Theorem~\ref{thm:consensus_discrepancy}, we write the consensus using the consensus offset:
\begin{equation}
\label{eq:consensus_offset}
    c(\boldsymbol{x})
    =x_1+\gamma(\delta(\boldsymbol{x}))
    \quad \forall \, \boldsymbol{x}\in\mathcal{X}^K.
\end{equation}
Now, consider the following functions that measure the residual shifts:
\[
    r_1(\boldsymbol{v})=-\gamma(\boldsymbol{v}),
    \qquad
    r_k(\boldsymbol{v})=v_{k-1}-\gamma(\boldsymbol{v}),
    \quad k=2,\ldots,K,
    \quad \forall \, \boldsymbol{v}\in\mathbb{R}^{q}.
\]
Then, by Equation~\eqref{eq:consensus_offset}, we write each coordinate in terms of the consensus and residual shifts:
\[
    x_k=c(\boldsymbol{x})+r_k(\delta(\boldsymbol{x})),
    \qquad k = 1, \ldots, K,
    \quad \forall \, \boldsymbol{x}\in\mathcal{X}^K.
\]
To establish surjectivity of the coordinate map, consider the following candidate preimage:
\[
    \boldsymbol{x}_{u, \boldsymbol{v}}
    =\bigl(u+r_1(\boldsymbol{v}),\ldots,u+r_K(\boldsymbol{v})\bigr)
    \quad \forall \, (u,\boldsymbol{v})\in\mathcal{X}\times\mathbb{R}^q.
\]
By the definitions of the residual shifts and the difference map, its difference coordinates are
\[
    \delta(\boldsymbol{x}_{u, \boldsymbol{v}})
    =\bigl(r_2(\boldsymbol{v})-r_1(\boldsymbol{v}),\ldots,
            r_K(\boldsymbol{v})-r_1(\boldsymbol{v})\bigr)
    =\boldsymbol{v} \quad \forall \, (u,\boldsymbol{v})\in\mathcal{X}\times\mathbb{R}^q.
\]
By Equation~\eqref{eq:consensus_offset}, its consensus coordinate is
\[
    c(\boldsymbol{x}_{u, \boldsymbol{v}})
    =u+r_1(\boldsymbol{v})+\gamma(\boldsymbol{v})=u \quad \forall \, (u,\boldsymbol{v})\in\mathcal{X}\times\mathbb{R}^q.
\]
Therefore, the
coordinate map is bijective with inverse
\begin{equation}
\label{eq:coordinate_map_inverse}
    T_c^{-1}(u,\boldsymbol{v})
    =\bigl(u+r_1(\boldsymbol{v}),\ldots,u+r_K(\boldsymbol{v})\bigr) \quad \forall \, (u,\boldsymbol{v})\in\mathcal{X}\times\mathbb{R}^q.
\end{equation}
Now, consider the linear change of variables to the first component and
the component differences:
\[
    L(\boldsymbol{x})=(x_1,\delta(\boldsymbol{x}))
    \quad \forall \, \boldsymbol{x}\in\mathcal{X}^K.
\]
It is easy to verify that the Jacobian of $L$ is block lower triangular with identity blocks on the diagonal:
\begin{equation}
\label{eq:linear_change_jacobian}
    J_L(\boldsymbol{x})=
    \begin{pmatrix}
        I_p  & 0   & \cdots & 0\\
        -I_p & I_p & \cdots & 0\\
        \vdots & \vdots & \ddots & \vdots\\
        -I_p & 0 & \cdots & I_p
    \end{pmatrix}
    \quad \forall \, \boldsymbol{x}\in\mathcal{X}^K.
\end{equation}
Then, by Equation~\eqref{eq:linear_change_jacobian}, the linear change of variables is volume-preserving:
\begin{equation}
\label{eq:linear_change_determinant}
    \det\!\left(J_L(\boldsymbol{x})\right)=1
    \quad \forall \, \boldsymbol{x}\in\mathcal{X}^K.
\end{equation}
The inverse of the linear change of variables is
\[
    L^{-1}(t,\boldsymbol{v})
    =(t,t+v_1,\ldots,t+v_{K-1})
    \quad \forall \, (t,\boldsymbol{v})\in\mathcal{X}\times\mathbb{R}^q.
\]
Then, by Equation~\eqref{eq:linear_change_determinant}, the inverse is also volume-preserving:
\[
    \det\!\left(J_{L^{-1}}(t,\boldsymbol{v})\right)=1
    \quad \forall \, (t,\boldsymbol{v})\in\mathcal{X}\times\mathbb{R}^q.
\]
By the change-of-variables formula and translation invariance of Lebesgue measure, $T_c$ is volume-preserving:
\begin{equation}
\label{eq:coordinate_map_volume_preserving}
\begin{aligned}
    \int_{\mathcal X^K} \varphi(T_c(\boldsymbol{x}))\,\mathrm{d}\boldsymbol{x}
    &=\int_{\mathbb R^q}\int_{\mathcal X}
        \varphi\bigl(T_c(L^{-1}(t,\boldsymbol{v}))\bigr)
        \,\mathrm{d}t\,\mathrm{d}\boldsymbol{v}\\
    &=\int_{\mathbb R^q}\int_{\mathcal X}
        \varphi(t+\gamma(\boldsymbol{v}),\boldsymbol{v})
        \,\mathrm{d}t\,\mathrm{d}\boldsymbol{v}\\
    &=\int_{\mathbb R^q}\int_{\mathcal X}
        \varphi(u,\boldsymbol{v})\,\mathrm{d}u\,\mathrm{d}\boldsymbol{v}
        \quad \forall \, \varphi\in\Phi_{+}.
\end{aligned}
\end{equation}
By Equation~\eqref{eq:coordinate_map_volume_preserving}, expectations of transformations of the coordinate map can be written as follows:
\begin{equation}
\label{eq:coordinate_map_expectation}
    \E[\varphi(T_c(\boldsymbol x))]
    =\int_{\mathbb R^q}\int_{\mathcal X}
        \varphi(u,\boldsymbol v)
        \prod_{k=1}^K\pi_k\bigl(u+r_k(\boldsymbol v)\bigr)
        \,\mathrm{d}u\,\mathrm{d}\boldsymbol v
    \quad \forall\,\varphi\in\Phi_+.
\end{equation}
Then, by Equation~\eqref{eq:coordinate_map_expectation},
a pushforward density under the coordinate map is
\begin{equation}
\label{eq:coordinate_map_density}
    p_{T_{c} (\boldsymbol{x})}(u,\boldsymbol{v})
    =\prod_{k=1}^{K}\pi_k\bigl(u+r_k(\boldsymbol{v})\bigr)
    \quad \forall \, (u,\boldsymbol{v})\in\mathcal{X}\times\mathbb{R}^{q}.
\end{equation}
By Definition~\ref{def:consensus}, the consensus offset is origin-preserving:
\begin{equation}
\label{eq:consensus_offset_zero}
    \gamma(\mathbf{0}_{q})
    =\mathbf{0}_p.
\end{equation}
Moreover, by the definitions of the residual shifts and
Equation~\eqref{eq:consensus_offset_zero}, the residual shifts are
origin-preserving:
\begin{equation}
\label{eq:residual_shift_zero}
    r_k(\mathbf{0}_{q})=\mathbf{0}_p,
    \quad k = 1, \ldots, K.
\end{equation}
Therefore, by Equations~\eqref{eq:coordinate_map_density} and~\eqref{eq:residual_shift_zero}, this density with null difference coordinates can be written as follows:
\begin{equation}
\label{eq:origin_density_slice}
    p_{T_{c} (\boldsymbol{x})}(u,\mathbf{0}_{q})
    =\prod_{k=1}^{K}\pi_k(u)
    \quad \forall \, u\in\mathcal{X}.
\end{equation}
Using a telescoping sum argument, we obtain the following:
\begin{equation}
\label{eq:tv_bound_telescopic}
    p_{T_{c} (\boldsymbol{x})}(u,\boldsymbol{v}) - p_{T_{c} (\boldsymbol{x})}(u,\mathbf{0}_{q}) = \sum_{k=1}^{K} \left(\pi_k(u+r_k(\boldsymbol{v})) - \pi_k(u)\right) \prod_{j<k} \pi_j(u+r_j(\boldsymbol{v})) \prod_{j>k} \pi_j(u) \quad \forall \, (u,\boldsymbol{v})\in\mathcal{X}\times\mathbb{R}^{q}.
\end{equation}
Now, let us denote slices of the density under the coordinate map as follows:
\[
    p_{T_{c}(\boldsymbol x)}^{\boldsymbol{v}} (u)
    =p_{T_c(\boldsymbol x)}(u,\boldsymbol v)
    \quad \forall \, (u,\boldsymbol{v})\in\mathcal{X}\times\mathbb{R}^{q}.
\]
By Equation~\eqref{eq:tv_bound_telescopic}, the triangle and Hölder inequalities, and translation invariance of Lebesgue measure, we have
\begin{equation}
\label{eq:tv_bound}
\begin{aligned}
    \left\lVert p_{T_{c}(\boldsymbol x)}^{\boldsymbol{v}} - p_{T_{c}(\boldsymbol x)}^{\mathbf{0}_{q}} \right\rVert_{L^1}
    &\leq \sum_{k=1}^K\int_{\mathcal X}
        |\pi_k(u+r_k(\boldsymbol v))-\pi_k(u)|
        \prod_{j<k}\pi_j(u+r_j(\boldsymbol v))
        \prod_{j>k}\pi_j(u)\,\mathrm{d}u\\
    &\leq \sum_{k=1}^K
        \left(\int_{\mathcal X}
        |\pi_k(u+r_k(\boldsymbol v))-\pi_k(u)|^K\,\mathrm{d}u
        \right)^{1/K}
        \prod_{j\ne k}\|\pi_j\|_{L^K} \quad \forall \, \boldsymbol{v}\in\mathbb{R}^{q}.
\end{aligned}
\end{equation}
By the continuity assumption in Theorem~\ref{thm:consensus_discrepancy} and by Equation~\eqref{eq:residual_shift_zero}, residual shifts are continuous at the origin:
\begin{equation}
\label{eq:residual_shift_limit}
    r_k(\boldsymbol{v}) \to \mathbf{0}_p \quad \text{as } \boldsymbol{v}\to\mathbf{0}_q, \quad k = 1, \ldots, K.
\end{equation}
By continuity of translations in \(L^K\) and Equations~\eqref{eq:tv_bound} and~\eqref{eq:residual_shift_limit}, the density slices converge in \(L^1\):
\begin{equation}
\label{eq:density_slice_convergence}
    \left\|p_{T_c(\boldsymbol x)}^{\boldsymbol v}
    -p_{T_c(\boldsymbol x)}^{\mathbf{0}_q}\right\|_{L^1}
    \to 0 \quad \text{as } \boldsymbol{v}\to\mathbf{0}_q.
\end{equation}
Since $\psi$ is bounded away from the origin by assumption in
Theorem~\ref{thm:consensus_discrepancy}, we obtain the following:
\begin{equation}
\label{eq:small_discrepancy_inclusion}
    \forall\,a>0\quad\exists\,\varepsilon_a>0:
    \quad S_\varepsilon\subseteq B_a
    \quad\forall\,\varepsilon\in(0,\varepsilon_a).
\end{equation}
Since $\psi$ is continuous and origin-preserving by assumption in Theorem~\ref{thm:consensus_discrepancy}, the small-discrepancy sets satisfy
\begin{equation}
\label{eq:small_discrepancy_openness}
    S_\varepsilon\text{ is open}
    \qquad \text{and} \qquad \mathbf{0}_q\in S_\varepsilon
    \quad \forall\,\varepsilon>0.
\end{equation}
By Equations~\eqref{eq:small_discrepancy_inclusion}
and~\eqref{eq:small_discrepancy_openness}, small-discrepancy sets
have positive, finite Lebesgue measure:
\[
    0<\operatorname{Leb}^q(S_\varepsilon)
    \leq\operatorname{Leb}^q(B_1)=2^q<\infty
    \quad\forall\,\varepsilon\in(0,\varepsilon_1).
\]
Now, consider the average density slice over the small-discrepancy set:
\begin{equation}
\label{eq:average_density_slice}
    p_{T_c(\boldsymbol x)}^{S_\varepsilon}(u)
    =\frac{1}{\operatorname{Leb}^q(S_\varepsilon)}
      \int_{S_\varepsilon}
      p_{T_c(\boldsymbol x)}^{\boldsymbol v}(u)\,\mathrm{d}\boldsymbol v \quad \forall \, u \in \mathcal{X}, \quad \varepsilon\in(0,\varepsilon_1)
\end{equation}
Then, its normalizing constant is defined as follows:
\[
    Z_\varepsilon
    =\int_{\mathcal X} p_{T_c(\boldsymbol x)}^{S_\varepsilon}(u) \,\mathrm{d}u
    =\frac{\Prob(\|D\|_\infty<\varepsilon)}
           {\operatorname{Leb}^q(S_\varepsilon)} \quad \forall \, \varepsilon\in(0,\varepsilon_1)
\]
By Equation~\eqref{eq:average_density_slice}, the triangle inequality,
and Tonelli's theorem, we obtain the following bound:
\begin{equation}
\label{eq:average_density_bound}
\begin{aligned}
    \left\|p_{T_c(\boldsymbol x)}^{S_\varepsilon}
    -p_{T_c(\boldsymbol x)}^{\mathbf{0}_q}\right\|_{L^1}
    &=
    \int_{\mathcal X}\left|
        \frac{1}{\operatorname{Leb}^q(S_\varepsilon)}
        \int_{S_\varepsilon}
        \left[p_{T_c(\boldsymbol x)}^{\boldsymbol v}(u)
        -p_{T_c(\boldsymbol x)}^{\mathbf{0}_q}(u)\right]
        \,\mathrm{d}\boldsymbol v\right|\,\mathrm{d}u\\
    &\leq
    \frac{1}{\operatorname{Leb}^q(S_\varepsilon)}
    \int_{S_\varepsilon}
        \left\|p_{T_c(\boldsymbol x)}^{\boldsymbol v}
        -p_{T_c(\boldsymbol x)}^{\mathbf{0}_q}\right\|_{L^1}
        \,\mathrm{d}\boldsymbol v\\
    &\leq
    \sup_{\boldsymbol v\in S_\varepsilon}
        \left\|p_{T_c(\boldsymbol x)}^{\boldsymbol v}
        -p_{T_c(\boldsymbol x)}^{\mathbf{0}_q}\right\|_{L^1}.
\end{aligned}
\end{equation}
By Equations~\eqref{eq:density_slice_convergence}, \eqref{eq:small_discrepancy_inclusion}, and~\eqref{eq:average_density_bound}, the average density slice converges in \(L^1\):
\begin{equation}
\label{eq:average_density_convergence}
    \left\|p_{T_c(\boldsymbol x)}^{S_\varepsilon}
    -p_{T_c(\boldsymbol x)}^{\mathbf{0}_q}\right\|_{L^1}
    \to 0 \quad \text{as } \varepsilon\to0.
\end{equation}
By Equation~\eqref{eq:origin_density_slice}, $Z$ can be written as the integral of the slice at the origin:
\begin{equation}
\label{eq:z_slice_integral}
    Z
    =\int_{\mathcal X}\prod_{k=1}^K\pi_k(u)\,\mathrm{d}u
    =\int_{\mathcal X}p_{T_c(\boldsymbol x)}^{\mathbf{0}_q}(u)\,\mathrm{d}u
\end{equation}
Then, by Equations~\eqref{eq:average_density_convergence} and~\eqref{eq:z_slice_integral} and the triangle inequality, $Z_{\varepsilon}$ converges to $Z$:
\begin{equation}
\label{eq:z_epsilon_limit}
\begin{aligned}
    |Z_\varepsilon-Z|
    &=\left|\int_{\mathcal X}
        \left[p_{T_c(\boldsymbol x)}^{S_\varepsilon}(u)
        -p_{T_c(\boldsymbol x)}^{\mathbf{0}_q}(u)\right] \,\mathrm{d}u\right|\\
    &\leq
        \left\|p_{T_c(\boldsymbol x)}^{S_\varepsilon}
        -p_{T_c(\boldsymbol x)}^{\mathbf{0}_q}\right\|_{L^1}
    \to 0 \quad \text{as } \varepsilon\to0.
\end{aligned}
\end{equation}
Since $Z$ is positive by assumption, Equation~\eqref{eq:z_epsilon_limit} implies that $Z_\varepsilon$ becomes positive as the discrepancy shrinks:
\[
    \exists\,\varepsilon_0\in(0,\varepsilon_1):
    \quad Z_\varepsilon\geq\frac{Z}{2}>0
    \quad\forall\,\varepsilon\in(0,\varepsilon_0).
\]
Consequently, the probability of the conditioning event is positive:
\begin{equation}
\label{eq:conditioning_event_probability}
    \Prob\!\left(\|D\|_\infty<\varepsilon\right)
    =\operatorname{Leb}^q(S_\varepsilon) \, Z_\varepsilon>0
    \quad\forall\,\varepsilon\in(0,\varepsilon_0).
\end{equation}
By Equations~\eqref{eq:coordinate_map_expectation},
\eqref{eq:average_density_slice}, and~\eqref{eq:conditioning_event_probability},
the conditional law is well-defined and has density
\begin{equation}
\label{eq:pi_epsilon_density}
    \pi_\varepsilon(u)
    =\frac{p_{T_c(\boldsymbol x)}^{S_\varepsilon}(u)}
           {Z_\varepsilon} \quad \forall \, u \in \mathcal{X}, \quad \forall \, \varepsilon \in (0, \varepsilon_0).
\end{equation}
By Definition~\ref{def:product_distribution} and Equations~\eqref{eq:coordinate_map_density} and~\eqref{eq:residual_shift_zero}, the product density satisfies
\begin{equation}
\label{eq:product_density_slice}
    \pipro(u)
    =\frac{p_{T_c(\boldsymbol x)}^{\mathbf{0}_q}(u)}{Z} \quad \forall \, u \in \mathcal{X}.
\end{equation}
By Equations~\eqref{eq:average_density_convergence}, \eqref{eq:z_epsilon_limit}, \eqref{eq:pi_epsilon_density}, and~\eqref{eq:product_density_slice}, and the triangle inequality, we then have that
\[
\begin{aligned}
    \mathrm{TV}(\pi_\varepsilon,\pipro)
    &=\frac12\left\|
        \frac{p_{T_c(\boldsymbol x)}^{S_\varepsilon}}{Z_\varepsilon}
        -\frac{p_{T_c(\boldsymbol x)}^{\mathbf{0}_q}}{Z}
        \right\|_{L^1}\\
    &\leq
        \frac{\left\|p_{T_c(\boldsymbol x)}^{S_\varepsilon}
        -p_{T_c(\boldsymbol x)}^{\mathbf{0}_q}\right\|_{L^1}}
             {2Z_\varepsilon}
        +\frac{|Z-Z_\varepsilon|}{2Z_\varepsilon}\\
    &\leq
        \frac{\left\|p_{T_c(\boldsymbol x)}^{S_\varepsilon}
        -p_{T_c(\boldsymbol x)}^{\mathbf{0}_q}\right\|_{L^1}}
             {Z_\varepsilon}
    \to 0 \quad \text{as } \varepsilon\to0.
\end{aligned}
\]
Hence, the conditional law of the consensus converges in total variation to the product law.
\end{proof}

\subsection{Weighted-Average Consensus}\label{app:linear_consensus_statements}

This section defines the weighted-average consensus and states a result used in Section~\ref{sec:asymptotics}. It uses the full discrepancy mapping $\delta$ of Definition~\ref{def:discrepancy}, which corresponds to choosing $\psi=\operatorname{id}$, and $q=(K-1)\,p$ is the dimension of its range.

\begin{definition}[Weighted-Average Consensus]
\label{def:weighted_average_consensus}
Let $W_1,\ldots,W_K\in\mathbb{R}^{p\times p}$ satisfy $\sum_{k=1}^K W_k=I_p$. The \emph{weighted-average consensus} with weights $W=(W_1,\ldots,W_K)$ is
\[
    c_W(\boldsymbol{x})=\sum_{k=1}^K W_kx_k
    \quad\forall\,\boldsymbol{x}\in\mathcal{X}^K.
\]
\end{definition}

It is a consensus in the sense of Definition~\ref{def:consensus}: it is continuous, $c_W(\boldsymbol{1}_K\otimes x)=\sum_{k=1}^K W_kx=x$, and $c_W(x_1+a,\ldots,x_K+a)=c_W(\boldsymbol{x})+a$ for every $a\in\mathcal{X}$.

\begin{proposition}
\label{prop:linear_consensus_discrepancy}
If $c=c_W$ and $d=\delta$, then we have
\[
    C \mid D = \mathbf{0}_{q} \sim \pipro.
\]
\end{proposition}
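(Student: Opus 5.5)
Since $c_W$ is linear and $d=\delta$ is linear, the cleanest route is an exact change of variables: compute the joint density of $(C,D)$ explicitly and read off the conditional density at $D=\mathbf{0}_q$. I take the conditional law at $\{D=\mathbf{0}_q\}$ to mean the ratio of joint to marginal density evaluated at $\mathbf{0}_q$, using the version specified by the continuous formula below. Theorem~\ref{thm:consensus_discrepancy} then supplies the justification that this version is the natural one, since it is the total-variation limit of conditioning on $\|D\|_\infty<\varepsilon$.

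The first step is to specialize the coordinate map from the proof of Theorem~\ref{thm:consensus_discrepancy}. Take $T_c(\boldsymbol x)=(c_W(\boldsymbol x),\delta(\boldsymbol x))$. The consensus offset is
\[
\gamma(\boldsymbol v)=c_W(\mathbf 0_p,v_1,\ldots,v_{K-1})=\sum_{k=2}^K W_k v_{k-1},
\]
which is linear in $\boldsymbol v$. The residual shifts are $r_1(\boldsymbol v)=-\gamma(\boldsymbol v)$ and $r_k(\boldsymbol v)=v_{k-1}-\gamma(\boldsymbol v)$, also linear. By Equation~\eqref{eq:coordinate_map_density}, the joint density of $(C,D)$ is
\[
p(u,\boldsymbol v)=\prod_{k=1}^K\pi_k\bigl(u+r_k(\boldsymbol v)\bigr).
\]
This step uses only that $T_c$ is bijective and volume-preserving, as established in that proof; no $L^K$ assumption is needed here.

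Next, evaluate the joint density on the slice. By Equation~\eqref{eq:residual_shift_zero}, $p(u,\mathbf 0_q)=\prod_k\pi_k(u)$. The marginal of $D$ at zero is $p_D(\mathbf 0_q)=\int p(u,\mathbf 0_q)\,\mathrm du=Z$, and $0<Z<\infty$. The conditional density of $C$ given $D=\mathbf 0_q$ is therefore
\[
\frac{p(u,\mathbf 0_q)}{p_D(\mathbf 0_q)}=\frac{1}{Z}\prod_k\pi_k(u)=\pipro(u).
\]

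The main obstacle is measure-theoretic: a regular conditional distribution is only defined for almost every $\boldsymbol v$, and $\{\mathbf 0_q\}$ is a null set. To resolve this, I would state that the conclusion holds for the version given by the formula $p(u,\boldsymbol v)/p_D(\boldsymbol v)$. To show this version is canonical, I would argue that $\boldsymbol v\mapsto p(\cdot,\boldsymbol v)$ is $L^1$-continuous at $\mathbf 0_q$ by Equation~\eqref{eq:density_slice_convergence}, which holds under the $L^K$-integrability assumption, with $r_k$ linear and hence continuous. Equivalently, I would invoke Theorem~\ref{thm:consensus_discrepancy} with $\psi=\operatorname{id}$, which gives that the limit of $\mathcal L(C\mid\|D\|_\infty<\varepsilon)$ is exactly $\pipro$. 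Either way, the identification at $D=\mathbf 0_q$ is the unique continuous one.
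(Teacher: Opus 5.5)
Your proposal is correct and follows essentially the same route as the paper: pass to the coordinates $(C,D)=T_c(\boldsymbol x)$, obtain the joint density $\prod_k \pi_k(u+r_k(\boldsymbol v))$, evaluate it on the slice $\boldsymbol v=\mathbf 0_q$, and divide by $p_D(\mathbf 0_q)=Z$. The differences are minor. The paper re-verifies the unit Jacobian determinant of the linear map $T_c$ by column operations, where you cite the general volume-preservation identity from the proof of Theorem~\ref{thm:consensus_discrepancy}. The paper also simply fixes the conditional-density version without justification; your canonicity remark is a worthwhile addition, but it relies on $L^K$-integrability, which the proposition itself does not assume.
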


\subsection{Proof of Proposition~\ref{prop:linear_consensus_discrepancy}}\label{app:proof_linear_consensus_discrepancy}

The result follows from a linear change of coordinates.

\begin{proof}
Consider the following function that measures the consensus offset relative to the first component:
\[
    \gamma\left(\boldsymbol{v}\right)=\sum_{k=2}^{K} W_kv_{k-1} \quad \forall\,\boldsymbol{v}\in\mathbb{R}^{q}.
\]
By the definition of the linear consensus, we write the consensus using the consensus offset:
\[
    c(\boldsymbol{x})
    =\sum_{k=1}^{K}W_kx_k
    =x_1+\sum_{k=2}^{K}W_k(x_k-x_1)
    =x_1+\gamma\bigl(\delta(\boldsymbol{x})\bigr)
    \quad \forall\,\boldsymbol{x}\in\mathcal X^K.
\]
Now, consider the following functions that measure the residual shifts:
\[
    r_1(\boldsymbol{v})=-\gamma(\boldsymbol{v})=-\sum_{j=2}^K W_jv_{j-1},\qquad r_k(\boldsymbol{v})=v_{k-1}-\gamma(\boldsymbol{v})=v_{k-1}-\sum_{j=2}^K W_jv_{j-1},\quad k=2,\ldots,K,\quad \forall\,\boldsymbol{v}\in\mathbb{R}^{q}.
\]
Given our choices of $c$ and $d$, the Jacobian of the coordinate map can be written as follows:
\[
    J_{T_c}(\boldsymbol{x})
    =
    \begin{bmatrix}
        W_1 & W_2 & \cdots & W_K \\
        -\mathbf{1}_{K-1}\otimes I_p & \multicolumn{3}{c}{I_{K-1}\otimes I_p}
    \end{bmatrix}
    \quad \forall\,\boldsymbol{x}\in\mathcal X^K.
\]
Since adding columns $2,\ldots,K$ to column $1$ preserves the determinant, the first row becomes
\[
    W_1+W_2+\cdots+W_K=I_p
\]
by the assumption on the matrices $W_1,\ldots,W_K$. Doing the same on the other rows, we obtain
\begin{equation}
\label{eq:linear_consensus_transformed_jacobian}
    \begin{bmatrix}
        I_p & W_2 & \cdots & W_K \\
        0_{q\times p} & \multicolumn{3}{c}{I_q}
    \end{bmatrix},
\end{equation}
where we used the fact that $I_{K-1}\otimes I_p=I_q$. By Equation~\eqref{eq:linear_consensus_transformed_jacobian}, this matrix is block upper triangular, so
\[
    \det\!\left(J_{T_c}(\boldsymbol{x})\right)
    =\det\!\begin{pmatrix} I_p & W_2 & \cdots & W_K \\ 0_{q\times p} & \multicolumn{3}{c}{I_q} \end{pmatrix}
    =1
    \quad \forall\,\boldsymbol{x}\in\mathcal X^K.
\]
By Equation~\eqref{eq:coordinate_map_inverse} and the change-of-variables formula, the density of the coordinate map is
\[
    p_{T_c(\boldsymbol x)}(u,\boldsymbol v)
    =\prod_{k=1}^K\pi_k\bigl(u+r_k(\boldsymbol v)\bigr)
    \quad \forall\,(u,\boldsymbol v)\in\mathcal X\times\mathbb R^q.
\]
By Equation~\eqref{eq:coordinate_map_inverse}, when the difference coordinate vanishes, the inverse coordinate map recovers the diagonal:
\[
    T_c^{-1}(u,\mathbf{0}_q)
    =\boldsymbol{1}_K\otimes u
    \quad \forall\,u\in\mathcal X.
\]
By Equation~\eqref{eq:coordinate_map_density}, the marginal density of $D$ can be written as follows:
\[
    p_D(\boldsymbol{v})
    =\int_{\mathcal X}p_{T_c(\boldsymbol x)}(u,\boldsymbol{v})\,\mathrm{d}u
    \quad \forall\,\boldsymbol{v}\in\mathbb{R}^{q}.
\]
Consequently, by Equation~\eqref{eq:origin_density_slice} and Definition~\ref{def:product_distribution}, the marginal density of $D$ at the origin is
\begin{equation}
\label{eq:p_D_zero}
    p_D(\mathbf{0}_q)
    =\int_{\mathcal X}p_{T_c(\boldsymbol x)}(u,\mathbf{0}_q)\,\mathrm{d}u
    =\int_{\mathcal X}\prod_{k=1}^K\pi_k(u)\,\mathrm{d}u
    =Z>0.
\end{equation}
By Equation~\eqref{eq:p_D_zero}, we may evaluate the chosen conditional-density version at the origin:
\[
    p_{C\mid D}(u\mid\mathbf{0}_q)
    =\frac{p_{C,D}(u,\mathbf{0}_q)}{p_D(\mathbf{0}_q)}
    =\frac{p_{T_c(\boldsymbol x)}(u,\mathbf{0}_q)}{p_D(\mathbf{0}_q)}
    =\frac{\prod\limits_{k=1}^{K}\pi_k(u)}{Z}
    =\pipro(u) \quad\forall\,u\in\mathcal X.
\]
Hence, we conclude that the conditional law of the consensus given null discrepancy is the product law:
\[
    C\mid D=\mathbf{0}_q\sim\pipro.
\]
\end{proof}

\subsection{Gaussian Components}

Additional knowledge on the component distributions may yield stronger results. For example, with Gaussian components it is possible to pick a consensus that marginally recovers the product distribution and is independent of the discrepancy.

\begin{proposition}
\label{prop:gaussian_consensus_discrepancy}
Let the components be Gaussian:
\[
    \pi_k = \mathcal{N}\left(\mu_k,\Sigma_k\right) \quad k = 1, \ldots, K,
\]
where each $\mu_k\in\mathbb{R}^p$ and $\Sigma_k\in\mathbb{R}^{p\times p}$ is positive definite. Let $c=c_{W^\ast}$ and $d=\delta$, where
\[
    \Sigma_\pi^{-1} = \sum_{j=1}^{K}\Sigma_j^{-1},
    \qquad
    W_k^\ast = \Sigma_\pi\Sigma_k^{-1},
    \quad k = 1, \ldots, K.
\]
Then, we have that
\[
    C \indep D \text{ and } C \sim \pipro.
\]
\end{proposition}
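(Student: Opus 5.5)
Since $(C,D)$ is a linear map of the jointly Gaussian vector $\boldsymbol{x}=(x_1,\ldots,x_K)$, the pair is jointly Gaussian. Independence of $C$ and $D$ therefore reduces to the vanishing of the cross-covariance $\operatorname{Cov}(C,D)$. The marginal law $C\sim\pipro$ reduces to matching the mean and covariance of $C$ with $(\mu_\pi,\Sigma_\pi)$, where $\pipro=\mathcal N(\mu_\pi,\Sigma_\pi)$ by the closed-form Gaussian product stated in Section~\ref{sec:asymptotics}. Before starting, I note that the weights $W_k^\ast=\Sigma_\pi\Sigma_k^{-1}$ satisfy $\sum_k W_k^\ast=\Sigma_\pi\sum_k\Sigma_k^{-1}=I_p$. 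Hence $c_{W^\ast}$ is a weighted-average consensus in the sense of Definition~\ref{def:weighted_average_consensus}.

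\textbf{Step 1: independence.} Using independence across components, $\operatorname{Cov}(x_k,x_j)=\Sigma_k\mathbf 1\{k=j\}$. For each block $j=2,\ldots,K$ of $D$ I would compute
\[
\operatorname{Cov}\bigl(C,\,x_j-x_1\bigr)=W_j^\ast\Sigma_j-W_1^\ast\Sigma_1=\Sigma_\pi\Sigma_j^{-1}\Sigma_j-\Sigma_\pi\Sigma_1^{-1}\Sigma_1=\Sigma_\pi-\Sigma_\pi=0 .
\]
Every block of the cross-covariance vanishes, and joint Gaussianity then gives $C\indep D$.

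\textbf{Step 2: the marginal law of $C$.} The mean is $\E[C]=\sum_k\Sigma_\pi\Sigma_k^{-1}\mu_k=\mu_\pi$. The covariance is
\[
\operatorname{Cov}(C)=\sum_k W_k^\ast\Sigma_k W_k^{\ast\top}=\sum_k\Sigma_\pi\Sigma_k^{-1}\Sigma_k\Sigma_k^{-1}\Sigma_\pi=\Sigma_\pi\Bigl(\sum_k\Sigma_k^{-1}\Bigr)\Sigma_\pi=\Sigma_\pi .
\]
This uses the symmetry of $\Sigma_\pi$ and of each $\Sigma_k$. So $C\sim\mathcal N(\mu_\pi,\Sigma_\pi)=\pipro$.

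\textbf{Consistency check.} As a sanity check, and to connect with the paper's framework, Proposition~\ref{prop:linear_consensus_discrepancy} already gives $C\mid D=\mathbf 0_q\sim\pipro$. Independence says this conditional law equals the marginal law, which agrees with Step 2. I could alternatively derive Step 2 from this route: for a Gaussian pair with $C\indep D$, the conditional law at $D=\mathbf 0_q$ coincides with the marginal. The direct moment computation is cleaner, though.

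\textbf{Expected obstacle.} The only point needing care is bookkeeping in the cross-covariance. The $-x_1$ term in each coordinate of $\delta$ contributes $-W_1^\ast\Sigma_1$, and one must check that this cancels uniformly in $j$. That cancellation is exactly where precision weighting is required: for general $W$, $W_j\Sigma_j$ need not be constant in $j$. I expect no real difficulty beyond this.
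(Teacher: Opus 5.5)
Your proposal is correct and follows the paper's proof essentially step for step: joint Gaussianity of the linear image of $\boldsymbol{x}$, the cross-covariance cancellation $W_j^\ast\Sigma_j - W_1^\ast\Sigma_1 = \Sigma_\pi-\Sigma_\pi = 0$ for independence, and direct matching of the mean and covariance of $C$ with $(\mu_\pi,\Sigma_\pi)$. Your consistency check via Proposition~\ref{prop:linear_consensus_discrepancy} is an extra observation that the paper's proof does not use.
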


\subsection{Proof of Proposition~\ref{prop:gaussian_consensus_discrepancy}}\label{app:proof_gaussian_consensus_discrepancy}

The result follows from standard properties of Gaussian distributions.

\begin{proof}
By Definition~\ref{def:product_distribution}, in this case the product distribution is also Gaussian:
\[
    \pipro=\mathcal{N}\left(\mu_\pi,\Sigma_\pi\right),
\]
where
\[
    \Sigma_\pi^{-1}=\sum_{k=1}^{K}\Sigma_k^{-1}
    \quad\text{and}\quad
    \mu_\pi=\Sigma_\pi\sum_{k=1}^{K}\Sigma_k^{-1}\mu_k.
\]
Since \(C\) is a linear combination of independent Gaussian random vectors, it is also Gaussian with mean
\[
    \begin{aligned}
        \E\left[C\right]
        &=\E\left[\sum_{k=1}^{K}W_k^\ast x_k\right] \\
        & =\sum_{k=1}^{K}W_k^\ast\mu_k \\
        & =\sum_{k=1}^{K}\Sigma_\pi\Sigma_k^{-1}\mu_k \\
        & =\mu_\pi,
    \end{aligned}
\]
and covariance
\[
    \begin{aligned}
        \operatorname{Cov}\left(C\right)
        &=\operatorname{Cov}\left(\sum_{k=1}^{K}W_k^\ast x_k\right) \\
        & =\sum_{k=1}^{K}W_k^\ast\Sigma_k(W_k^\ast)^\top \\
        & =\sum_{k=1}^{K}\Sigma_\pi\Sigma_k^{-1}\Sigma_k\Sigma_k^{-1}\Sigma_\pi \\
        & =\Sigma_\pi\left(\sum_{k=1}^{K}\Sigma_k^{-1}\right)\Sigma_\pi \\
        & =\Sigma_\pi\Sigma_\pi^{-1}\Sigma_\pi \\
        & =\Sigma_\pi.
    \end{aligned}
\]
Therefore, we conclude that the consensus is distributed according to the product distribution:
\[
    C \sim \pipro.
\]
Since $c$ and $d$ are linear maps, $C$ and $D$ are jointly Gaussian. Their cross-covariances satisfy
\[
    \operatorname{Cov}\left(C,x_k-x_1\right)
    =W_k^\ast\Sigma_k-W_1^\ast\Sigma_1
    =\Sigma_\pi-\Sigma_\pi
    =0 \quad k=2,\ldots,K.
\]
Thus, since $C$ and $D$ are jointly Gaussian and uncorrelated, they are also independent:
\[
    C \indep D.
\]
\end{proof}

\subsection{Proof of Theorem~\ref{thm:llr_rate}}\label{app:proof_llr_rate}

We first recall the general definition of the local linear regression estimator.

\begin{definition}[Local linear regression]
\label{def:llr}
Let the data be generated as follows:
\[
    \left(y^{(i)},x^{(i)}\right)\iid\left(y,x\right)\in\mathbb{R}^r\times\mathbb{R}^m,
    \quad i = 1, \ldots, S.
\]
A local linear regression estimates the conditional mean function at a given point:
\[
    f(x_0) = \E[y\mid x=x_0], \quad x_0\in\mathbb{R}^m.
\]
It does so by solving the following optimization problem:
\begin{equation}
\label{eq:llr_optimization}
    \min_{\alpha\in\mathbb{R}^r,\,\beta\in\mathbb{R}^{r\times m}}
    \sum_{i=1}^S
    \left\lVert y^{(i)}-\alpha-\beta\left(x^{(i)}-x_0\right)\right\rVert_2^2
    \mathcal{K}\left(\frac{x^{(i)}-x_0}{b_S}\right),
\end{equation}
where $\mathcal{K}:\mathbb{R}^m\to[0,\infty)$ is a kernel and $b_S>0$ is a bandwidth. Then, the resulting estimate is
\[
    \hat{f}_S(x_0) = \hat{\alpha}_S,
\]
where $\hat{\alpha}_S$ and $\hat{\beta}_S$ are the minimizers of Equation~\eqref{eq:llr_optimization}.
\end{definition}

We now impose some regularity assumptions for Theorem~\ref{thm:llr_rate}. The first concerns the density of $D$.

\begin{assumption}[Regular discrepancy density]
\label{ass:llr_discrepancy}
The density of $D$ is continuously differentiable near the origin:
\[
    \exists\, U \text{ open with } \mathbf{0}_m\in U: p_D\in C^1(U).
\]
Moreover, it is positive at the origin:
\[
    p_D(\mathbf{0}_m) > 0.
\]
\end{assumption}

The next two assumptions concern the smoothness of the conditional mean and covariance matrix.

\begin{assumption}[Smooth conditional mean]
\label{ass:llr_mu}
The function $f_h$ is twice continuously differentiable near the origin:
\[
    \exists\, U \text{ open with } \mathbf{0}_m\in U: f_h\in C^2(U).
\]
Moreover, its Laplacian is non-zero at the origin:
\[
    \Delta f_h(\mathbf{0}_m)\neq\mathbf{0}_r.
\]
\end{assumption}

\begin{assumption}[Smooth conditional covariance]
\label{ass:llr_covariance}
The function $\Sigma_h$ is continuous near the origin:
\[
    \exists\, U \text{ open with } \mathbf{0}_m\in U: \Sigma_h\in C^0(U).
\]
Moreover, its trace is positive at the origin:
\[
    \operatorname{tr}\!\left(\Sigma_h(\mathbf{0}_m)\right)>0.
\]
\end{assumption}

The following assumption specifies the regularity conditions on the kernel $\mathcal{K}$.

\begin{assumption}[Regular kernel]
\label{ass:llr_kernel}
The kernel $\mathcal{K}$ is bounded, compactly supported, and radially symmetric:
\[
    \exists \, k:[0,\infty) \to [0,\infty): \mathcal{K}(u) = k(\|u\|_2) \quad \forall\, u\in\mathbb{R}^m.
\]
Moreover, it is a probability density with finite second moment and finite roughness:
\[
    \int_{\mathbb{R}^m}\mathcal{K}(u)\,\mathrm{d}u=1,
    \qquad
    \int_{\mathbb{R}^m}u_1^2 \, \mathcal{K}(u)\,\mathrm{d}u<\infty,
    \qquad
    \int_{\mathbb{R}^m}\mathcal{K}(u)^2\,\mathrm{d}u<\infty.
\]
\end{assumption}

The final assumption specifies the asymptotic regime for the bandwidth.

\begin{assumption}[Bandwidth regime]
\label{ass:llr_bandwidth}
The bandwidth $b_S$ satisfies
\[
    b_S\overset{S}{\to}0,
    \qquad
    S \, b_S^m\overset{S}{\to}\infty.
\]
\end{assumption}

With these definitions and assumptions in place, we now prove Theorem~\ref{thm:llr_rate}.

\begin{proof}[Proof of Theorem~\ref{thm:llr_rate}]

By Assumptions~\ref{ass:llr_discrepancy}--\ref{ass:llr_bandwidth} and standard results for multivariate local linear regression \citep{ruppert_multivariate_1994}, we have an explicit expansion for the asymptotic MSE of $\hat{f}_{h,S}(\mathbf{0}_m)$.

Applying the local linear regression bias expansion componentwise \citep{ruppert_multivariate_1994} gives
\begin{equation}
\label{eq:llr_bias_expansion}
    \E\left[\hat{f}_{h,S}(\mathbf{0}_m)\mid D^{(1:S)}\right] - f_h(\mathbf{0}_m)
    =
    \frac{1}{2} \, b_S^2 \, \mu_2(\mathcal{K}) \, \Delta f_h(\mathbf{0}_m) + o_p\left(b_S^2\right),
\end{equation}
where $\mu_2(\mathcal{K})$ is the second moment of $\mathcal{K}$:
\[
    \mu_2(\mathcal{K}) = \int_{\mathbb{R}^m}u_1^2 \, \mathcal{K}(u)\,\mathrm{d}u.
\]
Recall that the variance of a linear projection is the corresponding quadratic form of the covariance matrix:
\begin{equation}
\label{eq:llr_variance_projection}
    \operatorname{Var}\!\left(
        a^\top\hat{f}_{h,S}(\mathbf{0}_m)\mid D^{(1:S)}
    \right)
    =
    a^\top
    \operatorname{Cov}\!\left(
        \hat{f}_{h,S}(\mathbf{0}_m)\mid D^{(1:S)}
    \right)
    a
    \qquad \forall\,a\in\mathbb{R}^r.
\end{equation}
Combining Equation~\eqref{eq:llr_variance_projection} with the scalar variance expansion \citep{ruppert_multivariate_1994} gives
\begin{equation}
\label{eq:llr_covariance_expansion}
    \operatorname{Cov}\left(\hat{f}_{h,S}(\mathbf{0}_m)\mid D^{(1:S)}\right)
    =
    \frac{R(\mathcal{K})}{S \, b_S^m \, p_D(\mathbf{0}_m)}\Sigma_h(\mathbf{0}_m)
    +o_p\!\left(\frac{1}{S b_S^m}\right),
\end{equation}
where $R(\mathcal{K})$ is the roughness of $\mathcal{K}$:
\[
    R(\mathcal{K}) = \int_{\mathbb{R}^m}\mathcal{K}(u)^2\,\mathrm{d}u.
\]
By Assumption~\ref{ass:llr_bandwidth} and Equation~\eqref{eq:llr_bias_expansion}, the bias vanishes in probability:
\begin{equation}
\label{eq:llr_conditional_bias_vanishes}
    \E\left[\hat{f}_{h,S}(\mathbf{0}_m)\mid D^{(1:S)}\right]
    -f_h(\mathbf{0}_m)
    \overset{p}{\to}\mathbf{0}_r.
\end{equation}
Likewise, by Assumption~\ref{ass:llr_bandwidth} and Equation~\eqref{eq:llr_covariance_expansion}, the total conditional variance vanishes in probability:
\begin{equation}
\label{eq:llr_total_conditional_variance_vanishes}
    \operatorname{tr}\!\left(
        \operatorname{Cov}\left(
            \hat{f}_{h,S}(\mathbf{0}_m)\mid D^{(1:S)}
        \right)
    \right)
    \overset{p}{\to}0.
\end{equation}
By Equation~\eqref{eq:llr_total_conditional_variance_vanishes} and Chebyshev's inequality conditionally on $D^{(1:S)}$, we have this convergence in probability:
\begin{equation}
\label{eq:llr_conditional_chebyshev}
    \Prob\!\left(
        \left\lVert
            \hat{f}_{h,S}(\mathbf{0}_m)
            -\E\left[\hat{f}_{h,S}(\mathbf{0}_m)\mid D^{(1:S)}\right]
        \right\rVert_2>\varepsilon
        \,\middle|\,
        D^{(1:S)}
    \right)
    \leq
    \frac{
        \operatorname{tr}\!\left(
            \operatorname{Cov}\left(
                \hat{f}_{h,S}(\mathbf{0}_m)\mid D^{(1:S)}
            \right)
        \right)
    }{
        \varepsilon^2
    }
    \overset{p}{\to}0
    \qquad \forall\,\varepsilon>0.
\end{equation}
Taking expectations in Equation~\eqref{eq:llr_conditional_chebyshev} yields
\begin{equation}
\label{eq:llr_centered_convergence}
    \hat{f}_{h,S}(\mathbf{0}_m)
    -\E\left[\hat{f}_{h,S}(\mathbf{0}_m)\mid D^{(1:S)}\right]
    \overset{p}{\to}\mathbf{0}_r.
\end{equation}
Combining Equations~\eqref{eq:llr_conditional_bias_vanishes} and~\eqref{eq:llr_centered_convergence}, Slutsky's theorem gives
\begin{equation}
\label{eq:llr_consistency_conditional_mean}
    \hat{f}_{h,S}(\mathbf{0}_m)
    \overset{p}{\to}
    f_h(\mathbf{0}_m).
\end{equation}
Since $h$ is bounded, Theorem~\ref{thm:consensus_discrepancy} implies convergence of the conditional expectation as the discrepancy shrinks:
\begin{equation}
\label{eq:llr_product_limit}
    \E_{\pi_\varepsilon}[h]
    \overset{\varepsilon}{\to}
    \E_{\pipro}[h].
\end{equation}
By the definitions in Theorems~\ref{thm:llr_rate} and~\ref{thm:consensus_discrepancy} and Assumption~\ref{ass:llr_mu},
\begin{equation}
\label{eq:llr_regression_limit}
    \E_{\pi_\varepsilon}[h]
    =
    \E\left[f_h(D)\mid\|D\|_\infty<\varepsilon\right]
    \overset{\varepsilon}{\to}
    f_h(\mathbf{0}_m).
\end{equation}
By the uniqueness of limits, Equations~\eqref{eq:llr_product_limit} and~\eqref{eq:llr_regression_limit} imply
\begin{equation}
\label{eq:llr_target_identity}
    f_h(\mathbf{0}_m)
    =
    \mu_h
    =
    \E_{\pipro}[h].
\end{equation}
Combining Equations~\eqref{eq:llr_consistency_conditional_mean} and~\eqref{eq:llr_target_identity} yields consistency of the estimator for the target:
\[
    \hat{f}_{h,S}(\mathbf{0}_m)
    \overset{p}{\to}
    \mu_h.
\]
After dropping the $o_p(b_S^4)$ and $o_p\!\left((S b_S^m)^{-1}\right)$ terms, Equations~\eqref{eq:llr_bias_expansion} and~\eqref{eq:llr_covariance_expansion} yield the leading-order AMSE:
\begin{equation}
\label{eq:llr_amse}
    \operatorname{AMSE}(b_S)
    =
    \frac{1}{4} \, b_S^4 \, \mu_2(\mathcal{K})^2 \, \left\lVert\Delta f_h(\mathbf{0}_m)\right\rVert_2^2
    +
    \frac{\operatorname{tr}\!\left(\Sigma_h(\mathbf{0}_m)\right) \, R(\mathcal{K})}{S \, b_S^m \, p_D(\mathbf{0}_m)}.
\end{equation}
The first-order condition for minimizing the AMSE in Equation~\eqref{eq:llr_amse} over $b_S$ gives the following equation:
\begin{equation}
\label{eq:llr_amse_foc}
    \frac{\mathrm{d}}{\mathrm{d}b_S}\operatorname{AMSE}(b_S)
    =
    b_S^3 \, \mu_2(\mathcal{K})^2 \, \left\lVert\Delta f_h(\mathbf{0}_m)\right\rVert_2^2
    -
    \frac{m \, \operatorname{tr}\!\left(\Sigma_h(\mathbf{0}_m)\right) \, R(\mathcal{K})}{S \, b_S^{m+1} \, p_D(\mathbf{0}_m)}
    =0.
\end{equation}
Multiplying Equation~\eqref{eq:llr_amse_foc} by $b_S^{m+1}$ and solving for $b_S^{m+4}$ gives
\[
    b_S^{m+4}
    =
    \frac{m \, \operatorname{tr}\!\left(\Sigma_h(\mathbf{0}_m)\right) \, R(\mathcal{K})}{\mu_2(\mathcal{K})^2 \, \left\lVert\Delta f_h(\mathbf{0}_m)\right\rVert_2^2 \, S \, p_D(\mathbf{0}_m)},
\]
so that the optimal bandwidth can be written as follows:
\begin{equation}
\label{eq:llr_optimal_bandwidth}
    b_S^\ast
    =
    \left(
        \frac{m \, \operatorname{tr}\!\left(\Sigma_h(\mathbf{0}_m)\right) \, R(\mathcal{K})}{\mu_2(\mathcal{K})^2 \, \left\lVert\Delta f_h(\mathbf{0}_m)\right\rVert_2^2 \, S \, p_D(\mathbf{0}_m)}
    \right)^{\frac{1}{m+4}}.
\end{equation}
By Equations~\eqref{eq:llr_amse_foc} and~\eqref{eq:llr_optimal_bandwidth}, at the optimal bandwidth we have
\begin{equation}
\label{eq:llr_optimal_balance}
    (b_S^\ast)^{m+4} \, \mu_2(\mathcal{K})^2 \, \left\lVert\Delta f_h(\mathbf{0}_m)\right\rVert_2^2 = \frac{m \, \operatorname{tr}\!\left(\Sigma_h(\mathbf{0}_m)\right) \, R(\mathcal{K})}{S \, p_D(\mathbf{0}_m)}.
\end{equation}
By Equations~\eqref{eq:llr_amse} and~\eqref{eq:llr_optimal_balance}, the variance term at $b_S^\ast$ can be rewritten as
\begin{equation}
\label{eq:llr_optimal_variance}
    \frac{\operatorname{tr}\!\left(\Sigma_h(\mathbf{0}_m)\right) \, R(\mathcal{K})}{S \, (b_S^\ast)^m \, p_D(\mathbf{0}_m)}
    =
    \frac{1}{m} \, (b_S^\ast)^4 \, \mu_2(\mathcal{K})^2 \, \left\lVert\Delta f_h(\mathbf{0}_m)\right\rVert_2^2.
\end{equation}
Substituting Equation~\eqref{eq:llr_optimal_variance} into Equation~\eqref{eq:llr_amse} gives
\begin{equation}
\label{eq:llr_optimal_amse_intermediate}
    \operatorname{AMSE}_S^\ast
    =
    \operatorname{AMSE}(b_S^\ast)
    =
    \frac{m+4}{4m}(b_S^\ast)^4 \, \mu_2(\mathcal{K})^2 \, \left\lVert\Delta f_h(\mathbf{0}_m)\right\rVert_2^2.
\end{equation}
Substituting Equation~\eqref{eq:llr_optimal_bandwidth} into Equation~\eqref{eq:llr_optimal_amse_intermediate} and collecting exponents gives
\[
    \operatorname{AMSE}_S^\ast
    =
    \mathcal{C}_{\mathcal{K}}
    \left[
        \frac{
            \operatorname{tr}\!\left(\Sigma_h(\mathbf{0}_m)\right)
            \left\lVert\Delta f_h(\mathbf{0}_m)\right\rVert_2^{\frac{m}{2}}
        }{
            S p_D(\mathbf{0}_m)
        }
    \right]^{\frac{4}{m+4}},
\]
where
\[
    \mathcal{C}_{\mathcal{K}}
    =
    \frac{m+4}{4\,m^{\frac{m}{m+4}}}
    \mu_2(\mathcal{K})^{\frac{2m}{m+4}}
    R(\mathcal{K})^{\frac{4}{m+4}}.
\]

\end{proof}

\subsection{Proof of Proposition~\ref{prop:gcs_decomposition}}\label{app:proof_gcs_decomposition}

We first introduce two measures of divergence between probability densities.

\begin{definition}[Cauchy--Schwarz divergence]
\label{def:cs_divergence}
The Cauchy--Schwarz (CS) divergence between two densities is
\[
    D_{\mathrm{CS}}(g_{1:2}) = -\log\left[\frac{\left(\displaystyle\int_{\mathcal{X}}g_1(x) \, g_2(x)\,\mathrm{d}x\right)^2}{\|g_1\|_{L^2}^2 \, \|g_2\|_{L^2}^2}\right],
\]
where $g_1$ and $g_2$ are $L^2$-integrable densities on $\mathcal{X}$.
\end{definition}

\begin{definition}[Generalized CS divergence]
\label{def:gcs_divergence}
The generalized CS divergence extends $D_{\mathrm{CS}}$ to $K$ densities:
\[
    D_{\mathrm{GCS}}(g_{1:K})
    =
    -\log\left[
        \frac{\left(\displaystyle\int_{\mathcal{X}}\prod_{k=1}^Kg_k(x)\,\mathrm{d}x\right)^K}
        {\displaystyle\prod_{k=1}^K\|g_k\|_{L^K}^K}
    \right],
\]
where $g_1,\ldots,g_K$ are $L^K$-integrable densities on $\mathcal{X}$.
\end{definition}

Now, consider these regularity assumptions on the component densities and the discrepancy transformation.

\begin{assumption}[$L^K$-integrable components]
\label{ass:gcs_integrability}
Each component density is $L^K$-integrable:
\[
    \|\pi_k\|_{L^K} < \infty \qquad k = 1, \ldots, K.
\]
\end{assumption}

\begin{assumption}[Regular discrepancy transformation]
\label{ass:regular_discrepancy_transformation}
The transformation $\psi$ preserves the dimension of $\delta$:
\[
    \psi:\mathbb{R}^q\to\mathbb{R}^q.
\]
It is continuously differentiable near the origin:
\[
    \exists\,U\text{ open with }\mathbf{0}_q\in U:\psi\in C^1(U).
\]
Moreover, its Jacobian is nonsingular at the origin:
\[
    \det\!\left(J_{\psi}(\mathbf{0}_q)\right)\neq0.
\]
Finally, $\psi$ is bounded away from the origin:
\[
    \inf_{\|\boldsymbol v\|_\infty\geq a}\|\psi(\boldsymbol v)\|_\infty>0
    \quad\forall\,a>0.
\]
\end{assumption}

With these definitions and assumptions in place, we can now prove Proposition~\ref{prop:gcs_decomposition}.

\begin{proof}[Proof of Proposition~\ref{prop:gcs_decomposition}]

By Equation~\eqref{eq:p_D_zero}, the density of the difference coordinates at the origin is
\begin{equation}
    \label{eq:difference_density_origin}
    p_{\delta(\boldsymbol x)}(\mathbf{0}_q)
    = Z
    = \int_{\mathcal{X}}\prod_{k=1}^K\pi_k(x)\,\mathrm{d}x.
\end{equation}
By Definition~\ref{def:discrepancy}, the transformation $\psi$ is origin-preserving:
\begin{equation}
    \label{eq:difference_discrepancy_origin}
    \psi(\mathbf{0}_q)=\mathbf{0}_q.
\end{equation}
By Assumption~\ref{ass:regular_discrepancy_transformation} and the inverse function theorem, $\psi$ admits a continuously differentiable local inverse:
\[
    \exists\,a>0 \text{ and } \psi^{-1}:\psi(B_a)\to B_a :
    \quad \psi^{-1}\in C^1\!\left(\psi(B_a)\right).
\]
Moreover, the Jacobian of the local inverse satisfies
\begin{equation}
    \label{eq:inverse_discrepancy_jacobian}
    J_{\psi^{-1}}(\mathbf{0}_q)
    =
    J_{\psi}(\mathbf{0}_q)^{-1}.
\end{equation}
By the bounded-away condition in Assumption~\ref{ass:regular_discrepancy_transformation}, we have
\[
    \forall\,a>0\quad\exists\,\varepsilon_a>0:
    \quad S_{\varepsilon_a}\subseteq B_a.
\]
By Equations~\eqref{eq:difference_density_origin}, \eqref{eq:difference_discrepancy_origin}, and~\eqref{eq:inverse_discrepancy_jacobian}, the change-of-variables formula gives
\begin{equation}
    \label{eq:transformed_discrepancy_density_origin}
    p_D(\mathbf{0}_q)
    =
    p_{\delta(\boldsymbol x)}(\mathbf{0}_q)
    \left|\det\!\left(J_{\psi^{-1}}(\mathbf{0}_q)\right)\right|
    =
    \frac{1}{\left|\det\!\left(J_{\psi}(\mathbf{0}_q)\right)\right|}
    \int_{\mathcal{X}}\prod_{k=1}^K\pi_k(x)\,\mathrm{d}x.
\end{equation}
By Definition~\ref{def:gcs_divergence}, we can express the exponentiated negative divergence as follows:
\begin{equation}
    \label{eq:gcs_exponentiated}
    \exp\left(-D_{\mathrm{GCS}}(\pi_{1:K})\right)
    =
    \frac{\left(\displaystyle\int_{\mathcal{X}}\prod_{k=1}^K\pi_k(x)\,\mathrm{d}x\right)^K}{\prod\limits_{k=1}^K\|\pi_k\|_{L^K}^K}.
\end{equation}
By Assumption~\ref{ass:gcs_integrability}, each $L^K$ norm is finite, so raising both sides of Equation~\eqref{eq:gcs_exponentiated} to the power $\frac{1}{K}$ gives
\begin{equation}
    \label{eq:gcs_overlap_identity}
    \exp\left(-\frac{1}{K}D_{\mathrm{GCS}}(\pi_{1:K})\right)
    =
    \frac{\displaystyle\int_{\mathcal{X}}\prod_{k=1}^K\pi_k(x)\,\mathrm{d}x}{\displaystyle\prod_{k=1}^K\|\pi_k\|_{L^K}}.
\end{equation}
Combining Equations~\eqref{eq:transformed_discrepancy_density_origin} and~\eqref{eq:gcs_overlap_identity} gives
\[
    p_D(\mathbf{0}_q)
    =
    \mathcal{C}_{\psi}
    \exp\left(-\frac{1}{K}D_{\mathrm{GCS}}(\pi_{1:K})\right)
    \prod_{k=1}^K\|\pi_k\|_{L^K},
\]
where
\[
    \mathcal{C}_{\psi}
    =
    \frac{1}{\left|\det\!\left(J_{\psi}(\mathbf{0}_q)\right)\right|}.
\]

\end{proof}

\subsection{Proof of the Entropy Lower Bound in Proposition~\ref{prop:gcs_decomposition}}
\label{app:proof_entropy_lower_bound}

We first introduce another measure of dispersion for probability densities.

\begin{definition}[Differential entropy]
\label{def:differential_entropy}
Let $g$ be a probability density on $\mathcal{X}$. Then, its differential
entropy is
\[
    H(g)=-\int_{\mathcal{X}}g(x)\log\left(g(x)\right)\,\mathrm{d}x.
\]
\end{definition}
Now consider the following regularity assumption on the component densities.
\begin{assumption}[Finite component entropies]
\label{ass:finite_component_entropies}
Each component density has finite differential entropy:
\[
    H(\pi_k)<\infty \qquad k = 1, \ldots, K.
\]
\end{assumption}
With these definitions and assumptions in place, we can now prove the entropy lower bound in Proposition~\ref{prop:gcs_decomposition}.
\begin{proof}[Proof of the entropy lower bound in Proposition~\ref{prop:gcs_decomposition}]
Since the logarithm is a concave function, by Jensen's inequality, we have that
\begin{equation}
\label{eq:entropy_jensen}
\begin{aligned}
    \log\left(\lVert\pi_k\rVert_{L^K}\right)
    &=\frac{1}{K}\,\log\left(\int_{\mathcal{X}}\pi_k(x)^K\,\mathrm{d}x\right) \\
    &=\frac{1}{K}\,\log\left(\int_{\mathcal{X}}
        \pi_k(x) \, \pi_k(x)^{K-1} \, \mathrm{d}x\right) \\
    &=\frac{1}{K}\,\log\left(
        \E_{x_k\sim\pi_k}\left[\pi_k(x_k)^{K-1}\right]
    \right) \\
    &\geq\frac{1}{K}\,\E_{x_k\sim\pi_k}
        \left[\log\left(\pi_k(x_k)^{K-1}\right)\right] \\
    &=\frac{1}{K}\,\int_{\mathcal{X}}\pi_k(x)
        \log\left(\pi_k(x)^{K-1}\right)\,\mathrm{d}x \\
    &=\frac{K-1}{K}\,\int_{\mathcal{X}}\pi_k(x)
        \log\left(\pi_k(x)\right)\,\mathrm{d}x \\
    &=-\frac{K-1}{K}\,H(\pi_k)
    \quad k = 1, \ldots, K.
\end{aligned}
\end{equation}
By Equation~\eqref{eq:entropy_jensen}, we can lower bound the $L^K$ norm of each component in terms of its differential entropy:
\begin{equation}
\label{eq:entropy_lp_bound}
    \lVert\pi_k\rVert_{L^K}
    \geq\exp\left(-\frac{K-1}{K}\,H(\pi_k)\right)
    \quad k = 1, \ldots, K.
\end{equation}
Multiplying Equation~\eqref{eq:entropy_lp_bound} over the components gives the following lower bound on the product of $L^K$ norms:
\begin{equation}
\label{eq:entropy_product_bound}
    \prod_{k=1}^{K}\lVert\pi_k\rVert_{L^K}
    \geq
    \exp\left(
        -\frac{K-1}{K}\,\sum_{k=1}^{K}H(\pi_k)
    \right).
\end{equation}
Combining Equation~\eqref{eq:entropy_product_bound} with
Proposition~\ref{prop:gcs_decomposition} gives
\[
    p_D(\mathbf{0}_q)
    \geq
    \mathcal{C}_{\psi}
    \exp\left(
        -\frac{1}{K}\,D_{\mathrm{GCS}}(\pi_{1:K})
        -\frac{K-1}{K}\,\sum_{k=1}^{K}H(\pi_k)
    \right).
\]
\end{proof}

\subsection{Proof of Proposition~\ref{prop:gaussian_amse}}
\label{app:proof_gaussian_amse}

We begin by introducing a regularity assumption on the kernel.

\begin{assumption}[Regular kernel]
\label{ass:gaussian_kernel}
The kernel $\mathcal{K}$ is bounded and continuous at the origin:
\[
    \sup_{u\in\mathbb{R}^m}\left|\mathcal{K}(u)\right|<\infty,
    \qquad
    \lim_{u\to\mathbf{0}_m}\mathcal{K}(u)
    =\mathcal{K}(\mathbf{0}_m)>0.
\]
\end{assumption}

With this assumption in place, we can now prove Proposition~\ref{prop:gaussian_amse}.

\begin{proof}

Since the components are Gaussian, we can derive the asymptotic MSE without relying on Theorem~\ref{thm:llr_rate}.

\textit{Full difference discrepancy.} We first consider the case $c=c_W$ and $d=\delta$.

Since $C$ and $D$ are linear transformations of the jointly Gaussian vector $\boldsymbol{x}$, they are also jointly Gaussian:
\begin{equation}
\label{eq:gaussian_joint_consensus_discrepancy}
    \begin{pmatrix}
        C \\
        D
    \end{pmatrix}
    \sim
    \mathcal{N}\left(
        \begin{pmatrix}
            \mu_C \\
            \mu_D
        \end{pmatrix},
        \begin{pmatrix}
            \Sigma_C & \Sigma_{CD} \\
            \Sigma_{CD}^\top & \Sigma_D
        \end{pmatrix}
    \right),
\end{equation}
where the means and covariances are defined as follows:
\[
    \mu_C = \E[C],
    \qquad
    \mu_D = \E[D],
    \qquad
    \Sigma_C = \operatorname{Cov}(C),
    \qquad
    \Sigma_D = \operatorname{Cov}(D),
    \qquad
    \Sigma_{CD} = \operatorname{Cov}(C,D).
\]
The Jacobian of the difference map is constant:
\begin{equation}
\label{eq:gaussian_difference_jacobian}
    J_\delta(\boldsymbol{x})
    =
    \begin{pmatrix}
        -\mathbf{1}_{K-1}\otimes I_p & I_q
    \end{pmatrix}
    \quad \forall\,\boldsymbol{x}\in\mathcal{X}^K.
\end{equation}
By Equation~\eqref{eq:gaussian_difference_jacobian}, the last $q$ columns of $J_\delta$ form the identity matrix $I_q$. Thus, the Jacobian has full row rank:
\[
    \operatorname{rank}\!\left(J_\delta(\boldsymbol{x})\right)
    =q
    \quad \forall\,\boldsymbol{x}\in\mathcal{X}^K.
\]
Since each component covariance matrix is positive definite, so is the covariance of the difference coordinates:
\begin{equation}
\label{eq:gaussian_discrepancy_covariance}
    \Sigma_D
    =
    J_\delta(\boldsymbol{x})
    \operatorname{diag}(\Sigma_1,\ldots,\Sigma_K)
    J_\delta(\boldsymbol{x})^\top
    \succ 0
    \quad \forall\,\boldsymbol{x}\in\mathcal{X}^K.
\end{equation}
By the properties of the multivariate Gaussian distribution, we have the following:
\begin{equation}
\label{eq:gaussian_conditional_distribution}
    C\mid D=d
    \sim
    \mathcal{N}\left(
        \mu_C+\Sigma_{CD}\Sigma_D^{-1}(d-\mu_D),
        \,
        \Sigma_C-\Sigma_{CD}\Sigma_D^{-1}\Sigma_{CD}^\top
    \right).
\end{equation}
Recall that when the discrepancy is null, Proposition~\ref{prop:linear_consensus_discrepancy} gives
\begin{equation}
\label{eq:gaussian_conditional_distribution_at_zero}
    C\mid D=\mathbf{0}_q
    \sim
    \pipro
    =
    \mathcal{N}(\mu_\pi,\Sigma_\pi),
\end{equation}
where $\mu_\pi$ and $\Sigma_\pi$ are as in Proposition~\ref{prop:gaussian_consensus_discrepancy}. Then, by
comparing Equations~\eqref{eq:gaussian_conditional_distribution} and~\eqref{eq:gaussian_conditional_distribution_at_zero}, we obtain the following:
\begin{equation}
\label{eq:gaussian_conditional_parameters}
    \begin{aligned}
        \mu_C-\Sigma_{CD}\Sigma_D^{-1}\mu_D = \mu_\pi,
        \qquad
        \Sigma_C-\Sigma_{CD}\Sigma_D^{-1}\Sigma_{CD}^\top = \Sigma_\pi.
    \end{aligned}
\end{equation}
By Equations~\eqref{eq:gaussian_conditional_distribution} and~\eqref{eq:gaussian_conditional_parameters}, we can specify the following linear model:
\begin{equation}
\label{eq:gaussian_linear_model}
    C
    =
    \mu_\pi+\Sigma_{CD}\Sigma_D^{-1}D+\eta,
    \qquad
    \eta\sim\mathcal{N}(\mathbf{0}_p,\Sigma_\pi),
    \qquad
    \eta\indep D.
\end{equation}
Since $h=\operatorname{id}$, the conditional mean function is affine in the discrepancy:
\begin{equation}
\label{eq:gaussian_conditional_mean}
    f_h(d)
    =
    \mu_\pi+\Sigma_{CD}\Sigma_D^{-1}d \quad \forall \, d \in \mathbb{R}^q.
\end{equation}
\textit{Scalar discrepancy.} We now consider the case $c=c_{W^\ast}$ and $d=\|\delta\|_2^2$.

By Proposition~\ref{prop:gaussian_consensus_discrepancy}, the consensus is independent of the difference discrepancy and has the product distribution:
\begin{equation}
\label{eq:gaussian_difference_independence}
    C\indep\delta(\boldsymbol{x}),
    \qquad
    C\sim\pipro =\mathcal{N}(\mu_\pi,\Sigma_\pi).
\end{equation}
By Equation~\eqref{eq:gaussian_difference_independence}, the same relationship holds after applying the squared Euclidean norm to the difference map:
\begin{equation}
\label{eq:gaussian_radial_independence}
    C\indep D,
    \qquad
    C\sim\pipro=\mathcal{N}(\mu_\pi,\Sigma_\pi).
\end{equation}
By Equation~\eqref{eq:gaussian_joint_consensus_discrepancy}, the difference discrepancy has a multivariate Gaussian distribution:
\begin{equation}
\label{eq:gaussian_difference_distribution}
    \delta(\boldsymbol{x})
    \sim
    \mathcal{N}(\mu_\delta,\Sigma_\delta),
\end{equation}
where
\[
    \mu_\delta
    =
    \delta\bigl((\mu_1,\ldots,\mu_K)\bigr),
    \qquad
    \Sigma_\delta
    =
    \operatorname{diag}(\Sigma_2,\ldots,\Sigma_K)
    +
    \left(
        \mathbf{1}_{K-1}\mathbf{1}_{K-1}^\top
    \right)\otimes\Sigma_1.
\]
By Equation~\eqref{eq:gaussian_difference_distribution} and the standard moments of a Gaussian quadratic form, we have the following:
\begin{equation}
\label{eq:gaussian_radial_discrepancy_raw_moments}
    \E[D]
    =
    \operatorname{tr}(\Sigma_\delta)
    +\|\mu_\delta\|_2^2,
    \qquad
    \E[D^2]
    =
    \left(
        \operatorname{tr}(\Sigma_\delta)
        +\|\mu_\delta\|_2^2
    \right)^2
    +2\operatorname{tr}(\Sigma_\delta^2)
    +4\mu_\delta^\top\Sigma_\delta\mu_\delta.
\end{equation}
Therefore, by Equation~\eqref{eq:gaussian_radial_discrepancy_raw_moments}, the mean and variance of the scalar discrepancy are
\begin{equation}
\label{eq:gaussian_radial_discrepancy_moments}
    \mu_D
    =
    \operatorname{tr}(\Sigma_\delta)
    +\|\mu_\delta\|_2^2,
    \qquad
    \Sigma_D
    =
    2\operatorname{tr}(\Sigma_\delta^2)
    +4\mu_\delta^\top\Sigma_\delta\mu_\delta
    >0.
\end{equation}
By Equation~\eqref{eq:gaussian_radial_independence}, we can specify the following linear model:
\begin{equation}
\label{eq:gaussian_radial_linear_model}
    C
    =
    \mu_\pi+\eta,
    \qquad
    \eta\sim\mathcal{N}(\mathbf{0}_p,\Sigma_\pi),
    \qquad
    \eta\indep D.
\end{equation}
Since $h=\operatorname{id}$, the conditional mean function is constant in the discrepancy:
\begin{equation}
\label{eq:gaussian_radial_conditional_mean}
    f_h(d)
    =
    \mu_\pi
    \quad \forall\,d\in[0,\infty).
\end{equation}
\textit{Common argument.} Equations~\eqref{eq:gaussian_linear_model}--\eqref{eq:gaussian_conditional_mean} and~\eqref{eq:gaussian_radial_linear_model}--\eqref{eq:gaussian_radial_conditional_mean} show that, in both cases, the local linear model is correctly specified. Hence, whenever the weighted design matrix is invertible, the estimator is unbiased:
\begin{equation}
\label{eq:gaussian_conditional_unbiasedness}
    \E\left[
        \hat{f}_{h,S}(\mathbf{0}_m)
        \mid D^{(1:S)}
    \right]
    =
    \mu_\pi.
\end{equation}
In other words, by Equation~\eqref{eq:gaussian_conditional_unbiasedness}, the bias is null:
\[
    \operatorname{Bias}\left(
        \hat{f}_{h,S}(\mathbf{0}_m)
        \mid D^{(1:S)}
    \right)
    =
    \mathbf{0}_p.
\]
By Equation~\eqref{eq:gaussian_conditional_unbiasedness}, the conditional MSE equals the trace of the conditional covariance:
\begin{equation}
\label{eq:gaussian_conditional_mse}
    \E\left[
        \left\lVert
            \hat{f}_{h,S}(\mathbf{0}_m)-\mu_\pi
        \right\rVert_2^2
        \mid D^{(1:S)}
    \right]
    =
    \operatorname{tr}\left(
        \operatorname{Cov}\left[
            \hat{f}_{h,S}(\mathbf{0}_m)
            \mid D^{(1:S)}
        \right]
    \right).
\end{equation}
By Equations~\eqref{eq:gaussian_linear_model} and~\eqref{eq:gaussian_radial_linear_model} and the Gauss--Markov theorem, the asymptotic MSE is minimized when
\[
    \mathcal{K}\left(\frac{D^{(1)}}{b_S}\right)
    =
    \cdots
    =
    \mathcal{K}\left(\frac{D^{(S)}}{b_S}\right).
\]
Thus, we may choose an asymptotically optimal bandwidth sequence diverging to infinity:
\begin{equation}
\label{eq:gaussian_optimal_bandwidth}
    b_S^\ast\overset{S}{\to}\infty.
\end{equation}
In fact, by Assumption~\ref{ass:gaussian_kernel} and Equation~\eqref{eq:gaussian_optimal_bandwidth}, the kernel weights converge in probability to a constant:
\begin{equation}
\label{eq:gaussian_kernel_convergence}
    \mathcal{K}\left(\frac{D}{b_S^\ast}\right)
    \overset{p}{\to}\mathcal{K}(\mathbf{0}_m).
\end{equation}
Equations~\eqref{eq:gaussian_discrepancy_covariance} and~\eqref{eq:gaussian_radial_discrepancy_moments} show that the discrepancy covariance is positive definite in both cases:
\begin{equation}
\label{eq:gaussian_discrepancy_covariance_positive}
    \Sigma_D\succ0.
\end{equation}
By Equations~\eqref{eq:gaussian_joint_consensus_discrepancy} and~\eqref{eq:gaussian_difference_distribution}, the discrepancy has a finite fourth moment in both cases:
\[
    \E\left[\lVert D\rVert_2^4\right]<\infty.
\]
Moreover, $\mathcal{K}$ is bounded by Assumption~\ref{ass:gaussian_kernel}. Thus, Equation~\eqref{eq:gaussian_kernel_convergence} and the weak law of large numbers give
\begin{equation}
\label{eq:gaussian_weighted_design_limits}
    \frac{1}{S}\sum_{i=1}^{S}
    \mathcal{K}\left(\frac{D^{(i)}}{b_S^\ast}\right)^j
    \begin{pmatrix}
        1 \\
        D^{(i)}
    \end{pmatrix}
    \begin{pmatrix}
        1 \\
        D^{(i)}
    \end{pmatrix}^\top
    \overset{p}{\to}
    \mathcal{K}(\mathbf{0}_m)^j
    \begin{pmatrix}
        1 & \mu_D^\top \\
        \mu_D & \Sigma_D+\mu_D\mu_D^\top
    \end{pmatrix} \quad j = 1,2.
\end{equation}
By Equations~\eqref{eq:gaussian_discrepancy_covariance_positive} and~\eqref{eq:gaussian_weighted_design_limits}, the weighted design matrix is asymptotically invertible. Thus, the weighted least squares covariance formula and the continuous mapping theorem yield:
\begin{equation}
\label{eq:gaussian_ols_covariance_equivalence}
    \operatorname{Cov}\left[
        \hat{f}_{h,S}(\mathbf{0}_m)
        \mid D^{(1:S)}
    \right]
    =
    \operatorname{Cov}\left[
        \hat{f}_{h,S}^{\mathrm{OLS}}(\mathbf{0}_m)
        \mid D^{(1:S)}
    \right]
    +o_p\left(\frac{1}{S}\right).
\end{equation}
The standard random-design OLS covariance formula gives
\begin{equation}
\label{eq:ols_intercept_covariance}
    \operatorname{Cov}\left[
        \hat{f}_{h,S}^{\mathrm{OLS}}(\mathbf{0}_m)
        \mid D^{(1:S)}
    \right]
    =
    \frac{\Sigma_\pi}{S}
    \left(
        1+\mu_D^\top\Sigma_D^{-1}\mu_D
    \right)
    +o_p\left(\frac{1}{S}\right).
\end{equation}
Then, by Equations~\eqref{eq:gaussian_ols_covariance_equivalence} and~\eqref{eq:ols_intercept_covariance}, the conditional covariance of our estimator is
\begin{equation}
\label{eq:gaussian_ols_covariance}
    \operatorname{Cov}\left[
        \hat{f}_{h,S}(\mathbf{0}_m)
        \mid D^{(1:S)}
    \right]
    =
    \frac{\Sigma_\pi}{S}
    \left(
        1+\mu_D^\top\Sigma_D^{-1}\mu_D
    \right)
    +o_p\left(\frac{1}{S}\right).
\end{equation}
Therefore, Equations~\eqref{eq:gaussian_conditional_mse} and~\eqref{eq:gaussian_ols_covariance} give the optimized leading-order asymptotic MSE:
\[
    \operatorname{AMSE}_S^\ast
    =
    \frac{\operatorname{tr}(\Sigma_\pi)}{S}
    \left(1+\mathcal{C}_D\right),
\]
where
\[
    \mathcal{C}_D
    =
    \mu_D^\top\Sigma_D^{-1}\mu_D
    \geq 0.
\]
Finally, Equations~\eqref{eq:gaussian_conditional_unbiasedness} and~\eqref{eq:gaussian_ols_covariance} and Chebyshev's inequality give the claimed consistency:
\[
    \hat{f}_{h,S}(\mathbf{0}_m)
    \overset{p}{\to}
    \mu_h
    =
    \E_{\pipro}[h].
\]

\end{proof}

\section{Additional Experimental Results}\label{app:additional_experimental_results}

This section presents supplementary empirical results and details of our computational setup. Section~\ref{app:additional_simulation_results} reports further experimental findings across simulation settings. Section~\ref{app:additional_real_data_results} reports additional real-data results. Section~\ref{app:fusion_baseline_comparison} compares the fusion methods in terms of estimation accuracy and computational cost. Section~\ref{app:computational_setup} details the hardware and software used.

\begingroup

\renewcommand{\convergencefiguretarget}[6]{
    \begin{figure}[H]
        \def\convergencefunction{#3}
        \centering
        \includegraphics{#1/h_#2.pdf}
        \caption{#6}
        \label{fig:#5}
    \end{figure}
}

\renewcommand{\singlefigure}[3]{
    \begin{figure}[H]
        \centering
        \sbox\singlebox{\includegraphics{#1}}
        \ifdim\wd\singlebox>\linewidth
            \resizebox{\linewidth}{!}{\usebox\singlebox}
        \else
            \usebox\singlebox
        \fi
        \caption{#3}
        \label{fig:#2}
    \end{figure}
}

\subsection{Additional Simulation Results}\label{app:additional_simulation_results}

This section presents the remaining simulation results across data-generating processes and experimental settings, complementing those in the main text. We begin with a complete overview of the configurations considered:

\begin{itemize}
    \item \textbf{Gaussian}:
    \begin{itemize}
        \item \textbf{Univariate}:
        \begin{itemize}
            \renewcommand{\labelitemiii}{$\circ$}
            \item Location and scale.
            \item Number of components.
        \end{itemize}
        \item \textbf{Bivariate}:
        \begin{itemize}
            \renewcommand{\labelitemiii}{$\circ$}
            \item Mean separation.
            \item Isotropic variance scale.
            \item Marginal variance imbalance.
            \item Correlation strength and sign.
        \end{itemize}
        \item \textbf{Multivariate}:
        \begin{itemize}
            \renewcommand{\labelitemiii}{$\circ$}
            \item Dimension.
        \end{itemize}
    \end{itemize}
    \item \textbf{Gaussian mixtures}:
    \begin{itemize}
        \item Symmetry.
        \item Mixture weights.
        \item Mixture scales.
        \item Number of components.
        \item Heterogeneity.
    \end{itemize}
    \item \textbf{Beta}:
    \begin{itemize}
        \item Concentration.
        \item Heterogeneity.
        \item Skewness.
        \item Boundary singularities.
    \end{itemize}
\end{itemize}

\FloatBarrier

\subsubsection{Univariate Gaussian: Location and Scale}\label{app:univariate_gaussian_location_scale}

We consider the product of $K=2$ univariate Gaussian components with fixed first component and varying second component in terms of location and scale. The components are defined as follows:
\[
    \pi_1=\mathcal{N}(0, 1),
    \qquad
    \pi_2=\mathcal{N}(\mu_2,\sigma_2^2),
    \qquad
    \mu_2,\sigma_2\in\left\{1,2,3,4\right\}.
\]
It is easy to verify that the product of these two components is again a Gaussian distribution:
\[
    \pipro
    =
    \mathcal{N}\left(\frac{\mu_2}{1+\sigma_2^2}, \frac{\sigma_2^2}{1+\sigma_2^2}\right).
\]

The results are shown in Figures~\ref{fig:univariate_convergence_full} and~\ref{fig:univariate_convergence_full_square} for $h(u)=u$ and $h(u)=u^2$, respectively.

\convergencefigure
    {gaussian/univariate/k_2/location_scale_grid}
    {univariate_full}
    {univariate_convergence_full}
    {MSE and runtime versus $S$ for $h(u)=\convergencefunction$ with $K=2$ univariate Gaussian components. Columns and rows vary $\mu_2,\sigma_2\in\left\{1,2,3,4\right\}$, respectively. The first component is fixed with $\mu_1=0$ and $\sigma_1=1$.}

Across the settings, PMC-OLS closely matches the Oracle in terms of MSE for $h(u)=u$, as expected given its correct specification, but fails for $h(u)=u^2$, for which the linear model is misspecified. PMC-MLP and PMC-GP initially lag behind GBF but overtake it as the sample size increases for $h(u)=u$; for $h(u)=u^2$, only PMC-GP overtakes GBF within the range of sample sizes shown, while PMC-MLP reaches parity at about $S=10^6$. In terms of training time, all PMC methods are considerably faster than GBF, with PMC-OLS being the fastest.

\FloatBarrier

\subsubsection{Univariate Gaussian: Number of Components}\label{app:univariate_gaussian_number_of_components}

We consider the product of $K$ univariate Gaussian components with common location and either homogeneous or heterogeneous scales, with varying $K$. In the homogeneous case, the components are defined as follows:
\[
    \pi_k=\mathcal{N}(0,K),
    \qquad
    k = 1, \ldots, K,
    \qquad
    K\in\left\{2,3,5,10,20\right\}.
\]
In the heterogeneous case, the components are defined as follows:
\[
    \pi_k=\mathcal{N}(0,\sigma_k^2),
    \qquad
    \frac{1}{\sigma_k^2}=\frac{1}{2K}+\frac{k-1}{K(K-1)},
    \qquad
    k = 1, \ldots, K,
    \qquad
    K\in\left\{2,3,5,10,20\right\}.
\]
It is easy to verify that, in both cases, the product of these components is again a Gaussian distribution:
\[
    \pipro
    =
    \mathcal{N}(0, 1).
\]

The results are shown in Figures~\ref{fig:k_scaling_homogeneous_identity} and~\ref{fig:k_scaling_homogeneous_square} for the homogeneous case and in Figures~\ref{fig:k_scaling_heterogeneous_identity} and~\ref{fig:k_scaling_heterogeneous_square} for the heterogeneous case, for $h(u)=u$ and $h(u)=u^2$, respectively.

\singlefigure
    {gaussian/univariate/k_scaling/homogeneous/k_scaling_h_identity.pdf}
    {k_scaling_homogeneous_identity}
    {MSE and runtime versus $S$ for $h(u)=u$ with $K$ homogeneous univariate Gaussian components. Columns vary $K\in\left\{2,3,5,10,20\right\}$, and rows show MSE and runtime, respectively.}

\singlefigure
    {gaussian/univariate/k_scaling/homogeneous/k_scaling_h_square.pdf}
    {k_scaling_homogeneous_square}
    {MSE and runtime versus $S$ for $h(u)=u^2$ with $K$ homogeneous univariate Gaussian components. Columns vary $K\in\left\{2,3,5,10,20\right\}$, and rows show MSE and runtime, respectively.}

\singlefigure
    {gaussian/univariate/k_scaling/heterogeneous/k_scaling_h_identity.pdf}
    {k_scaling_heterogeneous_identity}
    {MSE and runtime versus $S$ for $h(u)=u$ with $K$ heterogeneous univariate Gaussian components. Columns vary $K\in\left\{2,3,5,10,20\right\}$, and rows show MSE and runtime, respectively.}

\singlefigure
    {gaussian/univariate/k_scaling/heterogeneous/k_scaling_h_square.pdf}
    {k_scaling_heterogeneous_square}
    {MSE and runtime versus $S$ for $h(u)=u^2$ with $K$ heterogeneous univariate Gaussian components. Columns vary $K\in\left\{2,3,5,10,20\right\}$, and rows show MSE and runtime, respectively.}

Across the settings, PMC-OLS closely matches the Oracle in terms of MSE for $h(u)=u$, as expected given its correct specification, but fails for $h(u)=u^2$, for which the linear model is misspecified. For $K\leq 5$, PMC-MLP and PMC-GP overtake GBF for $S\geq 5\times10^5$ when $h(u)=u$; for larger $K$, and for $h(u)=u^2$ with $K\geq 5$, they do not overtake GBF within the range of sample sizes shown. In terms of training time, all PMC methods are considerably faster than GBF, with PMC-OLS being the fastest. In both the homogeneous and heterogeneous cases, the runtime of GBF increases with $K$, while the runtime of PMC-MLP and PMC-GP remains largely unaffected by $K$ and that of PMC-OLS grows mildly but stays orders of magnitude below GBF.

\FloatBarrier

\subsubsection{Bivariate Gaussian: Mean Separation}\label{app:bivariate_gaussian_mean_separation}

We consider the product of $K=2$ bivariate Gaussian components with fixed first component and varying second component in terms of location along the diagonal. The components are defined as follows:
\[
    \pi_1=\mathcal{N}(\mathbf{0}_2, I_2),
    \qquad
    \pi_2=\mathcal{N}(\mu\mathbf{1}_2, I_2),
    \qquad
    \mu\in\left\{1,2,3\right\}.
\]
It is easy to verify that the product of these two components is again a Gaussian distribution:
\[
    \pipro
    =
    \mathcal{N}\left(\frac{\mu}{2}\mathbf{1}_2, \frac{1}{2}I_2\right).
\]

The results are shown in Figures~\ref{fig:bivariate_mean_shift_convergence} and~\ref{fig:bivariate_mean_shift_convergence_square} for $h(u)=u$ and $h(u)=u\odot u$, respectively.

\convergencefigure[u\odot u]
    {gaussian/bivariate/k_2/mean_shift}
    {bivariate_mean_shift}
    {bivariate_mean_shift_convergence}
    {MSE and runtime versus $S$ for $h(u)=\convergencefunction$ with $K=2$ bivariate Gaussian components. Columns vary $\mu\in\left\{1,2,3\right\}$. The first component is fixed with $\mu=0$.}

Across the settings, PMC-OLS closely matches the Oracle in terms of MSE. PMC-MLP and PMC-GP lag behind GBF by a factor of about 2 to 5 in MSE up to $S=10^5$, with the gap growing with the location separation; they are comparable to or better than GBF at $S=10^6$ only for the smaller separations. In terms of training time, all PMC methods are considerably faster than GBF, with PMC-OLS being the fastest.

\FloatBarrier

\subsubsection{Bivariate Gaussian: Isotropic Variance Scale}\label{app:bivariate_gaussian_isotropic_variance_scale}

We consider the product of $K=2$ bivariate Gaussian components with fixed first component and varying second component in terms of isotropic scale. The components are defined as follows:
\[
    \pi_1=\mathcal{N}(\mathbf{0}_2, I_2),
    \qquad
    \pi_2=\mathcal{N}(\mathbf{1}_2, \sigma^2 I_2),
    \qquad
    \sigma\in\left\{1,2,3,4\right\}.
\]
It is easy to verify that the product of these two components is again a Gaussian distribution:
\[
    \pipro
    =
    \mathcal{N}\left(\frac{1}{1+\sigma^2}\mathbf{1}_2, \frac{\sigma^2}{1+\sigma^2}I_2\right).
\]

The results are shown in Figures~\ref{fig:bivariate_variance_scaling_convergence} and~\ref{fig:bivariate_variance_scaling_convergence_square} for $h(u)=u$ and $h(u)=u\odot u$, respectively.

\convergencefigure[u\odot u]
    {gaussian/bivariate/k_2/variance_scaling}
    {bivariate_variance_scaling}
    {bivariate_variance_scaling_convergence}
    {MSE and runtime versus $S$ for $h(u)=\convergencefunction$ with $K=2$ bivariate Gaussian components. Columns vary $\sigma\in\left\{1,2,3,4\right\}$. The first component is fixed with $\sigma=1$.}

Across the settings, PMC-OLS closely matches the Oracle in terms of MSE for $h(u)=u$, as expected given its correct specification, but fails for $h(u)=u\odot u$, for which the linear model is misspecified. PMC-GP overtakes GBF for $S\geq 5\times10^5$, while PMC-MLP only reaches parity at $S=10^6$. In terms of training time, all PMC methods are considerably faster than GBF, with PMC-OLS being the fastest. Overall, it looks like increasing the scale $\sigma$ of the second component makes the estimation problem harder for PMC-MLP, and for PMC-OLS when $h(u)=u\odot u$, at all sample sizes shown.

\FloatBarrier

\subsubsection{Bivariate Gaussian: Marginal Variance Imbalance}\label{app:bivariate_gaussian_marginal_variance_imbalance}

We consider the product of $K=2$ bivariate Gaussian components with fixed first component and varying second component in terms of the variance of its second coordinate. The components are defined as follows:
\[
    \pi_1=\mathcal{N}(\mathbf{0}_2, I_2),
    \qquad
    \pi_2=\mathcal{N}\left(\mathbf{1}_2, \operatorname{diag}(1,\sigma_2^2)\right),
    \qquad
    \sigma_2\in\left\{1,2,3,4\right\}.
\]
It is easy to verify that the product of these two components is again a Gaussian distribution:
\[
    \pipro
    =
    \mathcal{N}\left(\left(\frac{1}{2},\frac{1}{1+\sigma_2^2}\right)^\top, \operatorname{diag}\left(\frac{1}{2},\frac{\sigma_2^2}{1+\sigma_2^2}\right)\right).
\]

The results are shown in Figures~\ref{fig:bivariate_anisotropic_variance_convergence} and~\ref{fig:bivariate_anisotropic_variance_convergence_square} for $h(u)=u$ and $h(u)=u\odot u$, respectively.

\convergencefigure[u\odot u]
    {gaussian/bivariate/k_2/anisotropic_variance}
    {bivariate_anisotropic_variance}
    {bivariate_anisotropic_variance_convergence}
    {MSE and runtime versus $S$ for $h(u)=\convergencefunction$ with $K=2$ bivariate Gaussian components. Columns vary $\sigma_2\in\left\{1,2,3,4\right\}$. The first component is fixed with $\sigma_2=1$.}

Across the settings, PMC-OLS closely matches the Oracle in terms of MSE for $h(u)=u$, as expected given its correct specification, but fails for $h(u)=u\odot u$, for which the linear model is misspecified. PMC-GP overtakes GBF for $S\geq 5\times10^5$, while PMC-MLP only reaches parity for $h(u)=u$ and remains behind for $h(u)=u\odot u$. In terms of training time, all PMC methods are considerably faster than GBF, with PMC-OLS being the fastest.

\FloatBarrier

\subsubsection{Bivariate Gaussian: Correlation Strength and Sign}\label{app:bivariate_gaussian_correlation_strength_and_sign}

We consider the product of $K=2$ bivariate Gaussian components with fixed first component and varying second component in terms of the strength and sign of its correlation. The components are defined as follows:
\[
    \pi_1=\mathcal{N}(\mathbf{0}_2, I_2),
    \qquad
    \pi_2=\mathcal{N}\left(\mathbf{1}_2, \begin{pmatrix} 1 & \rho \\ \rho & 1 \end{pmatrix}\right),
    \qquad
    \rho\in\left\{-0.6,-0.3,0,0.3,0.6\right\}.
\]
It is easy to verify that the product of these two components is again a Gaussian distribution:
\[
    \pipro
    =
    \mathcal{N}\left(\frac{1}{2+\rho}\mathbf{1}_2, \frac{1}{4-\rho^2}\begin{pmatrix} 2-\rho^2 & \rho \\ \rho & 2-\rho^2 \end{pmatrix}\right).
\]

The results are shown in Figures~\ref{fig:bivariate_correlation_convergence} and~\ref{fig:bivariate_correlation_convergence_square} for $h(u)=u$ and $h(u)=u\odot u$, respectively.

\convergencefigure[u\odot u]
    {gaussian/bivariate/k_2/correlation}
    {bivariate_correlation}
    {bivariate_correlation_convergence}
    {MSE and runtime versus $S$ for $h(u)=\convergencefunction$ with $K=2$ bivariate Gaussian components. Columns vary $\rho\in\left\{-0.6,-0.3,0,0.3,0.6\right\}$. The first component is fixed with $\rho=0$.}

Across the settings, PMC-OLS closely matches the Oracle in terms of MSE for $h(u)=u$, as expected given its correct specification, but fails for $h(u)=u\odot u$, when the linear model is misspecified. PMC-MLP and PMC-GP initially lag behind GBF but overtake it in most settings as the sample size increases. In terms of training time, all PMC methods are considerably faster than GBF, with PMC-OLS being the fastest.

\FloatBarrier

\subsubsection{Multivariate Gaussian: Dimension}\label{app:multivariate_gaussian_dimension}

We consider the product of $K=2$ multivariate Gaussian components with fixed Euclidean distance between their means, varying the dimension. The components are defined as follows:
\[
    \pi_1=\mathcal{N}(\mathbf{0}_p, I_p),
    \qquad
    \pi_2=\mathcal{N}\left(\sqrt{\frac{8}{p}}\mathbf{1}_p, I_p\right),
    \qquad
    p\in\left\{2,3,5,10,20\right\}.
\]
It is easy to verify that the product of these two components is again a Gaussian distribution:
\[
    \pipro
    =
    \mathcal{N}\left(\sqrt{\frac{2}{p}}\mathbf{1}_p, \frac{1}{2}I_p\right).
\]

The results are shown in Figures~\ref{fig:multivariate_dimension_scaling_identity} and~\ref{fig:multivariate_dimension_scaling_square} for $h(u)=u$ and $h(u)=u\odot u$, respectively.

\singlefigure
    {gaussian/multivariate/k_2/dimension_scaling/dimension_scaling_h_identity.pdf}
    {multivariate_dimension_scaling_identity}
    {MSE and runtime versus $S$ for $h(u)=u$ with $K=2$ multivariate Gaussian components. Columns vary $p\in\left\{2,3,5,10,20\right\}$, and rows show MSE and runtime, respectively.}

\singlefigure
    {gaussian/multivariate/k_2/dimension_scaling/dimension_scaling_h_square.pdf}
    {multivariate_dimension_scaling_square}
    {MSE and runtime versus $S$ for $h(u)=u\odot u$ with $K=2$ multivariate Gaussian components. Columns vary $p\in\left\{2,3,5,10,20\right\}$, and rows show MSE and runtime, respectively.}

Across the settings, PMC-OLS closely matches the Oracle in terms of MSE for $h(u)=u$, as expected given its correct specification. In low dimension ($p\leq 5$), PMC-MLP and PMC-GP lag behind GBF, whereas at $p=20$ they outperform it, and at $p=10$ PMC-MLP matches it at large sample sizes. In terms of training time, all PMC methods are considerably faster than GBF, with PMC-OLS being the fastest. Overall, it looks like increasing the dimension makes the estimation problem more difficult, and the runtime of GBF seems to scale poorly with dimension, while the runtime of PMC-MLP is unaffected by the dimension and that of PMC-GP and PMC-OLS grows moderately, remaining far below GBF.

\FloatBarrier

\subsubsection{Gaussian Mixture: Symmetry}\label{app:gaussian_mixture_symmetry}

We consider the product of $K=2$ identical Gaussian-mixture components, each with two well-separated modes of equal weight and scale. The components are defined as follows:
\[
    \pi_1=\pi_2=\frac{1}{2}\mathcal{N}(-2,1)+\frac{1}{2}\mathcal{N}(2,1).
\]
It is easy to verify that the product of these two components is again a Gaussian mixture, now with three modes:
\[
    \pipro
    =
    \frac{1}{2\left(1+e^{-4}\right)}\mathcal{N}\left(-2,\frac{1}{2}\right)
    +\frac{e^{-4}}{1+e^{-4}}\mathcal{N}\left(0,\frac{1}{2}\right)
    +\frac{1}{2\left(1+e^{-4}\right)}\mathcal{N}\left(2,\frac{1}{2}\right).
\]

The results are shown in Figures~\ref{fig:gaussian_mixture_k2_symmetric_bimodal_convergence} and~\ref{fig:gaussian_mixture_k2_symmetric_bimodal_convergence_square} for $h(u)=u$ and $h(u)=u^2$, respectively.

\convergencefigure
    {gaussian/mixture/k_2/symmetric_bimodal}
    {gaussian_mixture_k2_symmetric_bimodal}
    {gaussian_mixture_k2_symmetric_bimodal_convergence}
    {MSE and runtime versus $S$ for $h(u)=\convergencefunction$ with $K=2$ symmetric bimodal Gaussian-mixture components.}

In this setting, PMC-OLS attains a lower MSE than the Oracle for $h(u)=u$, as expected given its correct specification, but fails for $h(u)=u^2$, for which the linear model is misspecified. PMC-MLP and PMC-GP outperform GBF for $h(u)=u$ at almost all sample sizes. For $h(u)=u^2$, PMC-MLP and PMC-GP seem to still underperform GBF. In terms of training time, all PMC methods are considerably faster than GBF, with PMC-OLS being the fastest, even faster than the Oracle. In particular, it looks like the runtime of GBF scales very poorly with the sample size.

\FloatBarrier

\FloatBarrier

\subsubsection{Gaussian Mixture: Mixture Weights}\label{app:gaussian_mixture_mixture_weights}

We consider the product of $K=2$ Gaussian-mixture components sharing the same two modes, but with opposing weights on them. The components are defined as follows:
\[
    \pi_1=0.85\,\mathcal{N}\left(-2,0.8^2\right)+0.15\,\mathcal{N}\left(2,0.8^2\right),
    \qquad
    \pi_2=0.15\,\mathcal{N}\left(-2,0.8^2\right)+0.85\,\mathcal{N}\left(2,0.8^2\right).
\]
It is easy to verify that the product of these two components is again a Gaussian mixture, now with three modes:
\[
    \pipro
    =
    \frac{51\,\mathcal{N}\left(-2,\frac{8}{25}\right)+298e^{-25/4}\mathcal{N}\left(0,\frac{8}{25}\right)+51\,\mathcal{N}\left(2,\frac{8}{25}\right)}{102+298e^{-25/4}}.
\]

The results are shown in Figures~\ref{fig:gaussian_mixture_k2_opposing_mode_weights_convergence} and~\ref{fig:gaussian_mixture_k2_opposing_mode_weights_convergence_square} for $h(u)=u$ and $h(u)=u^2$, respectively.

\convergencefigure
    {gaussian/mixture/k_2/opposing_mode_weights}
    {gaussian_mixture_k2_opposing_mode_weights}
    {gaussian_mixture_k2_opposing_mode_weights_convergence}
    {MSE and runtime versus $S$ for $h(u)=\convergencefunction$ with $K=2$ Gaussian-mixture components with opposing mode weights.}

In this setting, PMC-OLS attains a lower MSE than the Oracle for $h(u)=u$ at most sample sizes, as expected given its correct specification, but fails for $h(u)=u^2$, for which the linear model is misspecified. PMC-MLP and PMC-GP systematically outperform GBF for $h(u)=u$, and they seem to still lag behind GBF for $h(u)=u^2$. In terms of training time, all PMC methods are considerably faster than GBF, with PMC-OLS being the fastest, even faster than the Oracle. In particular, it looks like the runtime of GBF scales very poorly with the sample size.

\FloatBarrier

\subsubsection{Gaussian Mixture: Mixture Scales}\label{app:gaussian_mixture_mixture_scales}

We consider the product of $K=2$ unimodal Gaussian-mixture components with different locations and heterogeneous scales. The components are defined as follows:
\[
    \pi_1=\mathcal{N}\left(-1,0.4^2\right),
    \qquad
    \pi_2=\mathcal{N}\left(1,1.2^2\right).
\]
It is easy to verify that the product of these two components is again a Gaussian distribution:
\[
    \pipro
    =
    \mathcal{N}\left(-\frac{4}{5},\frac{18}{125}\right).
\]

The results are shown in Figures~\ref{fig:gaussian_mixture_k2_heterogeneous_scale_unimodal_convergence} and~\ref{fig:gaussian_mixture_k2_heterogeneous_scale_unimodal_convergence_square} for $h(u)=u$ and $h(u)=u^2$, respectively.

\convergencefigure
    {gaussian/mixture/k_2/heterogeneous_scale_unimodal}
    {gaussian_mixture_k2_heterogeneous_scale_unimodal}
    {gaussian_mixture_k2_heterogeneous_scale_unimodal_convergence}
    {MSE and runtime versus $S$ for $h(u)=\convergencefunction$ with $K=2$ unimodal Gaussian-mixture components with heterogeneous scales.}

In this setting, PMC-OLS closely matches the Oracle in terms of MSE for $h(u)=u$, as expected given its correct specification, but fails for $h(u)=u^2$, for which the linear model is misspecified. PMC-MLP and PMC-GP lag behind GBF at small sample sizes and are comparable to it at intermediate ones. In terms of training time, all PMC methods are considerably faster than GBF, with PMC-OLS being the fastest, even faster than the Oracle. In particular, it looks like the runtime of GBF scales very poorly with the sample size.

\FloatBarrier

\subsubsection{Gaussian Mixture: Number of Components}\label{app:gaussian_mixture_number_of_components}

We consider the product of $K=3$ Gaussian-mixture components with different numbers of modes, weights, and scales. The components are defined as follows:
\begin{align*}
    \pi_1&=0.6\,\mathcal{N}\left(-2,1^2\right)+0.4\,\mathcal{N}\left(1,0.7^2\right),\\
    \pi_2&=0.2\,\mathcal{N}\left(-1.5,0.8^2\right)+0.5\,\mathcal{N}\left(0,1.1^2\right)+0.3\,\mathcal{N}\left(2,0.6^2\right),\\
    \pi_3&=0.4\,\mathcal{N}\left(-0.5,1.2^2\right)+0.6\,\mathcal{N}\left(1.5,0.8^2\right).
\end{align*}
It is easy to verify that the product of these components is again a Gaussian mixture, now with $12$ terms. If we write each component as follows:
\[
    \pi_k=\sum_{i=1}^{m_k} \omega_{ki} \, \mathcal{N}\left(\mu_{ki}, \sigma_{ki}^2\right),
    \qquad
    k=1,2,3,
\]
where $m_k$ is the number of Gaussian terms in the mixture $\pi_k$, then the product of these components can be expressed as follows:
\[
    \pipro
    =
    \sum_{i_1=1}^{m_1}\sum_{i_2=1}^{m_2}\sum_{i_3=1}^{m_3}\omega_{i_1i_2i_3}\mathcal{N}\left(\mu_{i_1i_2i_3}, \sigma_{i_1i_2i_3}^2\right),
\]
where the means $\mu_{i_1i_2i_3}$, the variances $\sigma_{i_1i_2i_3}^2$, and the weights $\omega_{i_1i_2i_3}$ satisfy the following relationships:
\begin{align*}
    \mu_{i_1i_2i_3} &
    = \sigma_{i_1i_2i_3}^2\sum_{k=1}^{3}\frac{\mu_{ki_k}}{\sigma_{ki_k}^2}, \\
    \sigma_{i_1i_2i_3}^2 &
    =\left(\sum_{k=1}^{3}\frac{1}{\sigma_{ki_k}^2}\right)^{-1}, \\
    \omega_{i_1i_2i_3} &
    \propto \left(\prod_{k=1}^{3}\frac{\omega_{ki_k}}{\sigma_{ki_k}}\right)\sigma_{i_1i_2i_3}\exp\left(-\frac{1}{2}\left(\sum_{k=1}^{3}\frac{\mu_{ki_k}^2}{\sigma_{ki_k}^2}-\frac{\mu_{i_1i_2i_3}^2}{\sigma_{i_1i_2i_3}^2}\right)\right).
\end{align*}

The results are shown in Figures~\ref{fig:gaussian_mixture_k3_mixed_components_convergence} and~\ref{fig:gaussian_mixture_k3_mixed_components_convergence_square} for $h(u)=u$ and $h(u)=u^2$, respectively.

\convergencefigure
    {gaussian/mixture/k_3/mixed_components}
    {gaussian_mixture_k3_mixed_components}
    {gaussian_mixture_k3_mixed_components_convergence}
    {MSE and runtime versus $S$ for $h(u)=\convergencefunction$ with $K=3$ Gaussian-mixture components.}

In this setting, PMC-OLS fails to estimate both the first and second moments, as expected given its misspecification. PMC-MLP underperforms GBF for $S\leq 5\times10^4$ and outperforms it for $h(u)=u$ at $S=10^5$, while PMC-GP underperforms GBF at all sample sizes. In terms of training time, all PMC methods are considerably faster than GBF, with PMC-OLS being the fastest. In particular, it looks like the runtime of GBF scales very poorly with the sample size.

\FloatBarrier

\subsubsection{Gaussian Mixture: Heterogeneity}\label{app:gaussian_mixture_heterogeneity}

We consider the product of $K=2$ Gaussian-mixture components that are heterogeneous in their numbers of modes, weights, locations, and scales. The components are defined as follows:
\[
    \pi_1=0.75\,\mathcal{N}\left(-1.5,0.7^2\right)+0.25\,\mathcal{N}\left(2,1.2^2\right),
    \qquad
    \pi_2=0.2\,\mathcal{N}\left(-2,1.1^2\right)+0.5\,\mathcal{N}\left(0.5,0.6^2\right)+0.3\,\mathcal{N}\left(2.5,0.9^2\right).
\]
It is easy to verify that the product of these two components is again a Gaussian mixture, now with $6$ terms. If we write each component as follows:
\[
    \pi_k=\sum_{i=1}^{m_k} \omega_{ki} \, \mathcal{N}\left(\mu_{ki}, \sigma_{ki}^2\right),
    \qquad
    k=1,2,
\]
where $m_k$ is the number of Gaussian terms in the mixture $\pi_k$, then the product of these components can be expressed as follows:
\[
    \pipro
    =
    \sum_{i_1=1}^{m_1}\sum_{i_2=1}^{m_2}\omega_{i_1i_2}\mathcal{N}\left(\mu_{i_1i_2}, \sigma_{i_1i_2}^2\right),
\]
where the means $\mu_{i_1i_2}$, the variances $\sigma_{i_1i_2}^2$, and the weights $\omega_{i_1i_2}$ satisfy the following relationships:
\begin{align*}
    \mu_{i_1i_2} &
    = \sigma_{i_1i_2}^2\sum_{k=1}^{2}\frac{\mu_{ki_k}}{\sigma_{ki_k}^2}, \\
    \sigma_{i_1i_2}^2 &
    =\left(\sum_{k=1}^{2}\frac{1}{\sigma_{ki_k}^2}\right)^{-1}, \\
    \omega_{i_1i_2} &
    \propto \left(\prod_{k=1}^{2}\frac{\omega_{ki_k}}{\sigma_{ki_k}}\right)\sigma_{i_1i_2}\exp\left(-\frac{1}{2}\left(\sum_{k=1}^{2}\frac{\mu_{ki_k}^2}{\sigma_{ki_k}^2}-\frac{\mu_{i_1i_2}^2}{\sigma_{i_1i_2}^2}\right)\right).
\end{align*}

The results are shown in Figures~\ref{fig:gaussian_mixture_k2_heterogeneous_unequal_convergence} and~\ref{fig:gaussian_mixture_k2_heterogeneous_unequal_convergence_square} for $h(u)=u$ and $h(u)=u^2$, respectively.

\convergencefigure
    {gaussian/mixture/k_2/heterogeneous_unequal}
    {gaussian_mixture_k2_heterogeneous_unequal}
    {gaussian_mixture_k2_heterogeneous_unequal_convergence}
    {MSE and runtime versus $S$ for $h(u)=\convergencefunction$ with $K=2$ heterogeneous Gaussian-mixture components.}

In this setting, PMC-OLS fails to estimate both the first and second moments, as expected given its misspecification. For $h(u)=u$, both PMC-MLP and PMC-GP outperform GBF at almost all sample sizes, while for $h(u)=u^2$ PMC-MLP is comparable to GBF and PMC-GP underperforms it. In terms of training time, all PMC methods are considerably faster than GBF, with PMC-OLS being the fastest. In particular, it looks like the runtime of GBF scales very poorly with the sample size.

\FloatBarrier

\subsubsection{Beta: Concentration}\label{app:beta_concentration}

We consider the product of $K=2$ symmetric Beta components, $\mathrm{Beta}\left(a_1,a_1\right)$ and $\mathrm{Beta}\left(a_2,a_2\right)$, either with homogeneous or with heterogeneous concentrations. In the homogeneous case, the components are identical and are defined as follows:
\[
    \mathrm{Beta}\left(a,a\right),
    \qquad
    a\in\left\{1,2,5\right\}.
\]
In the heterogeneous case, the components have different concentrations and are defined as follows:
\[
    \mathrm{Beta}\left(2,2\right),
    \qquad
    \mathrm{Beta}\left(5,5\right).
\]
It is easy to verify that, in both cases, the product of these components is again a Beta distribution:
\[
    \pipro
    =
    \mathrm{Beta}\left(a_1+a_2-1,a_1+a_2-1\right).
\]

The results are shown in Figures~\ref{fig:beta_concentration_convergence} and~\ref{fig:beta_concentration_convergence_square} for the homogeneous case and in Figures~\ref{fig:beta_heterogeneous_concentration_convergence} and~\ref{fig:beta_heterogeneous_concentration_convergence_square} for the heterogeneous case, for $h(u)=u$ and $h(u)=u^2$, respectively.

In both cases, PMC-OLS closely matches the Oracle in terms of MSE for $h(u)=u$, as expected given its correct specification, but fails for $h(u)=u^2$, for which the linear model is misspecified. In both cases, PMC-MLP and PMC-GP outperform GBF for large sample sizes, at least for $h(u)=u$, while for $h(u)=u^2$ they still underperform GBF in the uniform case and overtake it at $S\geq 5\times10^5$ for the other concentrations. In terms of training time, all PMC methods are considerably faster than GBF, with PMC-OLS being the fastest.

\convergencefigure
    {beta/concentration}
    {beta_concentration}
    {beta_concentration_convergence}
    {MSE and runtime versus $S$ for $h(u)=\convergencefunction$ with $K=2$ homogeneous Beta components. Columns vary $a\in\left\{1,2,5\right\}$.}

\convergencefigure
    {beta/heterogeneous_concentration}
    {beta_heterogeneous_concentration}
    {beta_heterogeneous_concentration_convergence}
    {MSE and runtime versus $S$ for $h(u)=\convergencefunction$ with $K=2$ heterogeneous Beta components with $\pi_1=\mathrm{Beta}(2,2)$ and $\pi_2=\mathrm{Beta}(5,5)$.}

\FloatBarrier

\subsubsection{Beta: Skewness}\label{app:beta_skewness}

We consider the product of $K=2$ skewed Beta components, with skewness either in the same direction or in opposite directions. In the matching case, the components are identical and are defined as follows:
\[
    \pi_1=\pi_2=\mathrm{Beta}\left(2,5\right).
\]
It is easy to verify that the product of these two components is again a Beta distribution:
\[
    \pipro=\mathrm{Beta}\left(3,9\right).
\]
In the opposing case, the components are mirror images of each other and are defined as follows:
\[
    \pi_1=\mathrm{Beta}\left(2,5\right),
    \qquad
    \pi_2=\mathrm{Beta}\left(5,2\right).
\]
Similarly, the product of these two components is again a Beta distribution:
\[
    \pipro=\mathrm{Beta}\left(6,6\right).
\]

The results are shown in Figures~\ref{fig:beta_matching_skew_convergence} and~\ref{fig:beta_matching_skew_convergence_square} for the matching case and in Figures~\ref{fig:beta_opposing_skew_convergence} and~\ref{fig:beta_opposing_skew_convergence_square} for the opposing case, for $h(u)=u$ and $h(u)=u^2$, respectively.

In the matching case, PMC-OLS fails to estimate both the first and second moments, as expected given its misspecification. PMC-MLP and PMC-GP underperform GBF, with the gap shrinking as the sample size increases. In the opposing case, PMC-OLS closely matches the Oracle in terms of MSE for $h(u)=u$, as expected given its correct specification, but fails for $h(u)=u^2$, for which the linear model is misspecified. In the opposing case, PMC-MLP and PMC-GP are comparable to GBF at $S=10^5$ and tend to outperform it at larger sample sizes. In terms of training time, all PMC methods are considerably faster than GBF, with PMC-OLS being the fastest.

\convergencefigure
    {beta/matching_skew}
    {beta_matching_skew}
    {beta_matching_skew_convergence}
    {MSE and runtime versus $S$ for $h(u)=\convergencefunction$ with $K=2$ Beta components with $(a_1,b_1)=(2,5)$ and $(a_2,b_2)=(2,5)$.}

\convergencefigure
    {beta/opposing_skew}
    {beta_opposing_skew}
    {beta_opposing_skew_convergence}
    {MSE and runtime versus $S$ for $h(u)=\convergencefunction$ with $K=2$ Beta components with $(a_1,b_1)=(2,5)$ and $(a_2,b_2)=(5,2)$.}

\FloatBarrier

\subsubsection{Beta: Boundary Singularities}\label{app:beta_boundary_singularities}

We consider the product of $K=2$ Beta components whose densities diverge at opposite endpoints of the unit interval. The components are defined as follows:
\[
    \pi_1=\mathrm{Beta}\left(0.8,3\right),
    \qquad
    \pi_2=\mathrm{Beta}\left(3,0.8\right).
\]
It is easy to verify that the product of these two components is again a Beta distribution, which is bounded even though its components are not:
\[
    \pipro
    =
    \mathrm{Beta}\left(2.8,2.8\right).
\]

The results are shown in Figures~\ref{fig:beta_endpoint_singular_stress_convergence} and~\ref{fig:beta_endpoint_singular_stress_convergence_square} for $h(u)=u$ and $h(u)=u^2$, respectively.

\convergencefigure
    {beta/endpoint_singular_stress}
    {beta_endpoint_singular_stress}
    {beta_endpoint_singular_stress_convergence}
    {MSE and runtime versus $S$ for $h(u)=\convergencefunction$ with $K=2$ Beta components with $(a_1,b_1)=(0.8,3)$ and $(a_2,b_2)=(3,0.8)$.}

In this setting, PMC-OLS closely matches the Oracle in terms of MSE for $h(u)=u$, as expected given its correct specification, but fails for $h(u)=u^2$, for which the linear model is misspecified. PMC-MLP and PMC-GP outperform GBF only at the largest sample size ($S=10^6$). In terms of training time, all PMC methods are considerably faster than GBF, with PMC-OLS being the fastest.

\FloatBarrier

\endgroup

\subsection{Additional Real-Data Results}\label{app:additional_real_data_results}

\paragraph{MAGIC.} We complement the large-scale computation application with an experiment on the MAGIC dataset~\citep{bock2004methods}, a scientific dataset with 19,020 samples and 10 features. We fit a logistic regression model to predict the binary label from the features, for a total of 11 coefficients including the intercept, and split the data into $K=10$ batches. Because the sample size is much smaller than in the HIGGS experiment, GBF is feasible here.

Table~\ref{tab:magic-identity-comparison} shows that, among the merging methods, PMC-GP achieves the lowest estimation error, followed closely by PMC-MLP, then GBF and PMC-OLS. The merging cost of GBF is large, about 27 minutes compared with less than one minute of batch MCMC, so its total runtime is about 28 times that of MCMC on the full dataset. In contrast, the merging cost of the PMC methods is negligible, and their total runtimes are comparable to that of MCMC on the full dataset.

\begin{table}[ht]
\centering
\small
\setlength{\tabcolsep}{2.5pt}
\caption{Results on the MAGIC data.}
\label{tab:magic-identity-comparison}
\input{tables/magic_identity_comparison}
\end{table}

\FloatBarrier

\paragraph{Evaluation metric.} For the real-data tables, let $\hat{\mu}_h,\mu_h\in\mathbb{R}^r$ denote the method estimate and the full-data or Oracle MCMC reference, respectively. We report
\[
    \operatorname{Estimation\ error}\left(\hat{\mu}_h\right)
    =
    \left[
        \frac{1}{r}\sum_{j=1}^r
        \left(\hat{\mu}_{h,j}-\mu_{h,j}\right)^2
    \right]^{1/2}.
\]
We use $r=5$, $r=29$, and $r=11$ for respiratory, HIGGS, and MAGIC, respectively, and $r=1$ for HSP90, where the metric reduces to absolute error.

\begingroup

\renewcommand{\convergencefiguretarget}[6]{
    \begin{figure}[htbp]
        \def\convergencefunction{#3}
        \centering
        \includegraphics{#1/h_#2.pdf}
        \caption{#6}
        \label{fig:#5}
    \end{figure}
}

\subsection{Fusion Baseline Comparison}\label{app:fusion_baseline_comparison}

This section compares the fusion methods in terms of estimation accuracy and computational cost, and shows that GBF is at least as accurate and as fast as MCF and BF, with the advantage growing with the dimension. We begin with a complete overview of the configurations considered:
\begin{itemize}
    \item \textbf{Gaussian}:
    \begin{itemize}
        \item \textbf{Univariate}:
        \begin{itemize}
            \renewcommand{\labelitemiii}{$\circ$}
            \item Location and scale.
        \end{itemize}
        \item \textbf{Bivariate}:
        \begin{itemize}
            \renewcommand{\labelitemiii}{$\circ$}
            \item Mean separation.
        \end{itemize}
        \item \textbf{Multivariate}:
        \begin{itemize}
            \renewcommand{\labelitemiii}{$\circ$}
            \item Dimension.
        \end{itemize}
    \end{itemize}
\end{itemize}

\FloatBarrier

\subsubsection{Univariate Gaussian: Location and Scale}\label{app:fusion_univariate_gaussian_location_scale}

We consider the product of $K=2$ univariate Gaussian components with fixed first component and varying second component in terms of location and scale. The components are defined as follows:
\[
    \pi_1=\mathcal{N}(0, 1),
    \qquad
    \pi_2=\mathcal{N}(\mu_2,\sigma_2^2),
    \qquad
    \mu_2,\sigma_2\in\left\{1,2,3\right\}.
\]
It is easy to verify that the product of these two components is again a Gaussian distribution:
\[
    \pipro
    =
    \mathcal{N}\left(\frac{\mu_2}{1+\sigma_2^2}, \frac{\sigma_2^2}{1+\sigma_2^2}\right).
\]

The results are shown in Figures~\ref{fig:fusion_methods_benchmark_partial} and~\ref{fig:fusion_methods_benchmark_partial_square} for $h(u)=u$ and $h(u)=u^2$, respectively.

\begingroup

\renewcommand{\convergencefiguretarget}[6]{
    \begin{figure}[H]
        \def\convergencefunction{#3}
        \centering
        \includegraphics[width=\linewidth]{#1/h_#2.pdf}
        \caption{#6}
        \label{fig:#5}
    \end{figure}
}

\convergencefigure
    {fusion_methods_benchmark/univariate}
    {fusion_methods_benchmark}
    {fusion_methods_benchmark_partial}
    {MSE and runtime versus $S$ for $h(u)=\convergencefunction$ with $K=2$ univariate Gaussian components. Columns and rows vary $\mu_2,\sigma_2\in\left\{1,2,3\right\}$, respectively. The first component is fixed with $\mu_1=0$ and $\sigma_1=1$.}

\endgroup

Across these settings, MCF and GBF are comparable in terms of MSE, while BF is decisively worse. In terms of runtime, GBF and MCF are comparable, while BF is the slowest. Moreover, notice that the runtime of MCF and GBF is almost constant, while the runtime of BF increases with the sample size.

\newpage

\subsubsection{Bivariate Gaussian: Mean Separation}\label{app:fusion_bivariate_gaussian_mean_separation}

We consider the product of $K=2$ bivariate Gaussian components with fixed first component and varying second component in terms of location along the diagonal. The components are defined as follows:
\[
    \pi_1=\mathcal{N}(\mathbf{0}_2, I_2),
    \qquad
    \pi_2=\mathcal{N}(\mu\mathbf{1}_2, I_2),
    \qquad
    \mu\in\left\{1,2,3\right\}.
\]

It is easy to verify that the product of these two components is again a Gaussian distribution:
\[
    \pipro
    =
    \mathcal{N}\left(\frac{\mu}{2}\mathbf{1}_2, \frac{1}{2}I_2\right).
\]

The results are shown in Figures~\ref{fig:bivariate_mean_shift_fusion_benchmark} and~\ref{fig:bivariate_mean_shift_fusion_benchmark_square} for $h(u)=u$ and $h(u)=u\odot u$, respectively.

\convergencefigure[u\odot u]
    {fusion_methods_benchmark/bivariate}
    {bivariate_mean_shift_fusion_benchmark}
    {bivariate_mean_shift_fusion_benchmark}
    {MSE and runtime versus $S$ for $h(u)=\convergencefunction$ with $K=2$ bivariate Gaussian components. Columns vary $\mu\in\left\{1,2,3\right\}$. The first component is fixed with $\mu=0$.}

In these settings, MCF and GBF are comparable in terms of MSE, while BF is decisively worse. In terms of runtime, GBF is the fastest method, followed by MCF and then BF. Moreover, notice that the runtime of MCF and GBF is almost constant, while the runtime of BF increases with the sample size. Furthermore, notice that the runtime of MCF increases with the mean separation, while the runtime of GBF grows much more slowly.

\FloatBarrier

\subsubsection{Multivariate Gaussian: Dimension}\label{app:fusion_multivariate_gaussian_dimension}

We consider the product of $K=2$ multivariate Gaussian components with fixed Euclidean distance between their means, varying the dimension. The components are defined as follows:
\[
    \pi_1=\mathcal{N}(\mathbf{0}_p, I_p),
    \qquad
    \pi_2=\mathcal{N}\left(\sqrt{\frac{8}{p}}\mathbf{1}_p, I_p\right),
    \qquad
    p\in\left\{2,3,4\right\}.
\]
It is easy to verify that the product of these two components is again a Gaussian distribution:
\[
    \pipro
    =
    \mathcal{N}\left(\sqrt{\frac{2}{p}}\mathbf{1}_p, \frac{1}{2}I_p\right).
\]

The results are shown in Figures~\ref{fig:multivariate_dimension_scaling_fusion_benchmark} and~\ref{fig:multivariate_dimension_scaling_fusion_benchmark_square} for $h(u)=u$ and $h(u)=u\odot u$, respectively.

\convergencefigure[u\odot u]
    {fusion_methods_benchmark/multivariate}
    {multivariate_dimension_scaling_fusion_benchmark}
    {multivariate_dimension_scaling_fusion_benchmark}
    {MSE and runtime versus $S$ for $h(u)=\convergencefunction$ with $K=2$ multivariate Gaussian components. Columns vary $p\in\left\{2,3,4\right\}$.}

In these settings, GBF is more accurate than MCF in terms of MSE, while BF is decisively worse. In terms of runtime, GBF is the fastest method, followed by MCF and then BF. Moreover, notice that the runtime of MCF and GBF is almost constant, while the runtime of BF increases with the sample size. Furthermore, notice that the runtime of MCF increases with the dimension, while the runtime of GBF grows much more slowly.

\FloatBarrier

\endgroup

\subsection{Computational Setup}\label{app:computational_setup}

We describe the hardware and software used in our experiments below.

\paragraph{Hardware.} On the shared machine, PMC-MLP and the simulation-study PMC-GP fits were trained on an NVIDIA RTX A4000 GPU, whereas the real-data PMC-GP fits, PMC-OLS, the simulation Oracles, and the fusion methods ran on the CPU. The real-data NUTS fits ran on the GPU. Runtime is the wall-clock time to fit the estimator for PMC and the Oracle, or to generate draws from the product distribution for the fusion methods. Except for HSP90 and the fusion baseline comparison in Appendix~\ref{app:fusion_baseline_comparison}, all timings were measured on the shared machine. Both the HSP90 experiment, including the neural posterior, PMC-OLS, PMC-MLP, PMC-GP, and the reference-posterior calculation, and the fusion baseline comparison were run on the CPU of a MacBook Pro with an Apple M3 Max, 14 cores, and 36 GB of memory; the HSP90 image simulator is only available on that machine.

\begin{table}[ht]
    \centering
    \caption{Hardware of the shared machine used for most experiments.}
    \label{tab:computational_hardware}
    \begin{tabular}{@{}ll@{}}
        \toprule
        Component & Specification \\
        \midrule
        CPU & 2$\times$ AMD EPYC 7763 (128 physical cores, 256 threads, 255 online) \\
        GPU & 4$\times$ NVIDIA RTX A4000 (16 GB memory each) \\
        RAM & $\sim$881 GiB usable, shared across all users \\
        OS & Ubuntu 20.04 LTS \\
        \bottomrule
    \end{tabular}
\end{table}

\paragraph{Software.} PMC and the Oracle are implemented in Python 3.12 with GPyTorch 1.15 and PyTorch 2.13 built for CUDA 13.0, and the fusion methods are run in R 4.4.2 using the R package DCFusion 1.0~\citep{chan2021dcfusion}.

\paragraph{Licenses.} The HIGGS and MAGIC datasets are released under CC BY 4.0, the respiratory data are distributed with the R package geepack under GPL ($\geq$ 3), DCFusion is released under GPL ($\geq$ 2), and the HSP90 simulator is built on cryoJAX, which is released under LGPL-2.1.

\paragraph{Code availability.} Code and outputs for the experiments are available at \url{https://github.com/PietroCarlotti/product-monte-carlo}. The HSP90 image simulator is third-party software that we do not have permission to redistribute; consequently, the complete HSP90 pipeline cannot be reproduced from this repository alone. For HSP90, we provide the saved posterior draws, reported outputs, and downstream analysis code. Regenerating the images or rerunning simulator-dependent stages requires access to the external simulator.

\section{Hyperparameters}\label{app:hyperparameters}

This section lists the hyperparameters used in all experiments. Section~\ref{app:hp_simulation_design} gives the settings shared by the baseline and supplementary benchmark suite in Appendix~\ref{app:additional_simulation_results}, Section~\ref{app:hp_pmc_common} gives the corresponding common PMC settings, and Section~\ref{app:hp_deviations} lists the experiments that deviate from them. Section~\ref{app:hp_main_text_additions} documents the main-text and diagnostic studies. Sections~\ref{app:hp_fusion_oracle}, \ref{app:hp_real_data}, and~\ref{app:hp_sbi} cover the fusion methods and Oracle, the real-data experiments, and the simulation-based inference experiment, and Section~\ref{app:hp_selection} explains their selection.

\subsection{Shared Baseline Simulation Settings}\label{app:hp_simulation_design}

The distributions of each experiment are described in Appendix~\ref{app:additional_simulation_results}. Table~\ref{tab:hp_shared} lists the settings shared by this baseline and supplementary benchmark suite.

\begin{table}[!htb]
\centering
\small
\setlength{\tabcolsep}{2.5pt}
\caption{Settings shared by the baseline and supplementary benchmark suite in Appendix~\ref{app:additional_simulation_results}.}
\label{tab:hp_shared}
\begin{tabular}{p{0.3\linewidth} p{0.64\linewidth}}
\toprule
Setting & Value \\
\midrule
Methods & PMC-OLS, PMC-MLP, PMC-GP, GBF, and the Oracle \\
Targets & $h(u)=u$ and $h(u)=u\odot u$ \\
Sample sizes $S$ & $100$, $500$, $1000$, $5000$, $10^4$, $5\times10^4$, $10^5$, $5\times10^5$, $10^6$, and, only for the univariate Gaussian location and scale experiment, $2\times10^6$ \\
Realized sample size & $\operatorname{Poisson}(S)$ draws per component \\
Replications & $20$ independent replications per experimental configuration \\
Number of components $K$ & $2$, except in the experiments on the number of components \\
Seeds & for base seed $b$, replicate $r$, and zero-based configuration index $j$, the seed is $b+rJ+j$, where $J$ is the number of configurations. Most experiments use $b=20260825$; homogeneous and heterogeneous $K$-scaling use $b=26082600+K$ and $b=26092600+K$, respectively, and dimension scaling uses $b=26090400+p$ \\
Consensus and discrepancy & as in Table~\ref{tab:hp_pmc_common} \\
GBF particles & see Section~\ref{app:hp_fusion_oracle} \\
\bottomrule
\end{tabular}
\end{table}

\subsection{Common PMC Settings}\label{app:hp_pmc_common}

For PMC-MLP, ensemble size $2$ averages predictions from two independently initialized networks.

Exact posterior inference for the GP formulation in Section~\ref{sec:UQ} requires factorizing an $S\times S$ covariance matrix, with $O(S^3)$ time and $O(S^2)$ memory, and integration over the GP and likelihood hyperparameters adds further cost. This is computationally infeasible at the sample sizes considered here, so PMC-GP uses a sparse variational approximation. Let $Z=(z_1,\ldots,z_M)$ denote $M$ inducing inputs and let
\[
    \boldsymbol{f}_Z=
    \left(f_h(z_1),\ldots,f_h(z_M)\right)^\top.
\]
We approximate the posterior of the inducing values by
\[
    q(\boldsymbol{f}_Z)
    =\mathcal{N}(\boldsymbol{a},\boldsymbol{B}),
\]
which induces the approximate process posterior
\[
    q(f_h)
    =\int p(f_h\mid\boldsymbol{f}_Z)
    q(\boldsymbol{f}_Z)\,\mathrm{d}\boldsymbol{f}_Z.
\]
Writing $y_s=h(C^{(s)})$, the variational parameters $\boldsymbol{a}$ and $\boldsymbol{B}$ maximize the evidence lower bound
\[
    \mathcal{L}
    =\sum_{s=1}^{S}
    \E_{q(f_h(D^{(s)}))}
    \left[\log p\left(y_s\mid f_h(D^{(s)})\right)\right]
    -\operatorname{KL}\left(
        q(\boldsymbol{f}_Z)\,\middle\|\,p(\boldsymbol{f}_Z)
    \right).
\]
The PMC estimate and its uncertainty are obtained from the marginal $q(f_h(\mathbf{0}_m))$.

\paragraph{Approximation settings.}
We use $M=64$ inducing inputs, initialized at empirical quantiles of the standardized discrepancies and then fixed. The common suite uses a Mat\'ern-$3/2$ kernel. The GP mean, kernel lengthscale and output scale, and likelihood noise variance correspond to $\beta_0$, $\ell$, $\tau^2$, and $\sigma^2$ in Section~\ref{sec:UQ}.

\paragraph{Optimization.}
Natural gradients optimize $\boldsymbol{a}$ and $\boldsymbol{B}$. The GP and likelihood hyperparameters are fixed, so the listed Adam learning rate is inactive. Each update uses the full sample, making the listed batch size inactive. Training stops after at most $40$ updates, or after at least $20$ epochs when the objective has not improved by a relative tolerance of $10^{-4}$ for $5$ epochs.

\paragraph{Uncertainty.}
The reported $0.95$ interval is the central Gaussian interval under $q(f_h(\mathbf{0}_m))$, conditional on the fixed hyperparameters. Local neighborhoods are disabled, so all observations inform this distribution and the listed neighborhood-size bounds are inactive.

Table~\ref{tab:hp_pmc_common} lists the resulting PMC settings shared by the baseline and supplementary benchmark suite; deviations are listed in Section~\ref{app:hp_deviations}.

\begin{table}[!htb]
\centering
\small
\setlength{\tabcolsep}{2.5pt}
\caption{Common PMC settings used in the baseline and supplementary benchmark suite.}
\label{tab:hp_pmc_common}
\input{tables/hp_pmc_common}
\end{table}

\subsection{Deviations from the Common Settings}\label{app:hp_deviations}

Within the baseline and supplementary benchmark suite, Table~\ref{tab:hp_deviations} lists the experiments whose PMC settings differ from the common ones, and Table~\ref{tab:hp_overrides} lists the settings that were tuned separately for individual cells; all other settings are as in Table~\ref{tab:hp_pmc_common}.

\begin{table}[!htb]
\centering
\small
\setlength{\tabcolsep}{2.5pt}
\caption{Baseline and supplementary experiments whose PMC settings differ from the common settings.}
\label{tab:hp_deviations}
\input{tables/hp_deviations}
\end{table}

\begin{table}[!htb]
\centering
\small
\setlength{\tabcolsep}{2.5pt}
\caption{Per-cell hyperparameter overrides.}
\label{tab:hp_overrides}
\input{tables/hp_overrides}
\end{table}

\subsection{Details of Main-Text Experiments and Diagnostics}\label{app:hp_main_text_additions}

\paragraph{Posterior predictive check.} Figure~\ref{fig:pmc_ppc_misspecification} uses $S=500$ independent paired draws from
\[
    X_1^{(s)}\sim\mathcal{N}\left(0,1\right),
    \qquad
    X_2^{(s)}\sim\mathcal{N}\left(2,4\right),
    \qquad s=1,\ldots,S.
\]
For each pair, we define
\[
    C^{(s)}=\frac{X_1^{(s)}+X_2^{(s)}}{2},
    \qquad
    D^{(s)}=X_2^{(s)}-X_1^{(s)},
    \qquad
    y_s=h\left(C^{(s)}\right),
    \quad h\left(u\right)=u^2.
\]
The true conditional mean is
\[
    f_h\left(d\right)
    =\E\left[h\left(C\right)\mid D=d\right]
    =0.96+0.24d+0.09d^2.
\]
We compare the misspecified linear mean with the well-specified quadratic mean:
\[
    f_{h,1}\left(d\right)=\beta_0+\beta_1d,
    \qquad
    f_{h,2}\left(d\right)=\beta_0+\beta_1d+\beta_2d^2.
\]
For $j\in\left\{1,2\right\}$, the Bayesian polynomial regression uses
\[
    y_s\mid\boldsymbol{\beta}_j,\sigma^2
    \sim\mathcal{N}\left(f_{h,j}\left(D^{(s)}\right),\sigma^2\right),
    \qquad
    p\left(\boldsymbol{\beta}_j,\sigma^2\right)\propto\frac{1}{\sigma^2}.
\]
We use $R=1000$ posterior-predictive draws and $B=10$ discrepancy bins. The data seed is $2026100301$, and the posterior seeds are $2026100302$ and $2026100303$ for the linear and quadratic models, respectively. Let $\hat f$ be the least-squares polynomial fit of the corresponding degree and $r_s$ its residual. After sorting observations by $D$, we divide them into $B$ near-equal bins $I_b$ and compute
\[
    r_s = y_s-\hat{f}\left(D^{(s)}\right),
    \qquad
    \bar r_b = \frac{1}{\left|I_b\right|}\sum_{s\in I_b}r_s,
    \qquad
    T =
    \frac{
        \sqrt{\sum_{b=1}^{B}\frac{\left|I_b\right|}{S}\bar r_b^2}
    }{
        \sqrt{\frac{1}{S}\sum_{s=1}^{S}r_s^2}
    }
    .
\]
For each posterior draw, we generate a replicated response dataset at the observed values $D^{(s)}$, refit the polynomial model of the same degree, and recompute $T$. We define the PPC value as
\[
    \mathrm{PPC}
    =\frac{1+\#\left\{T_{\mathrm{rep}}\geq T_{\mathrm{obs}}\right\}}{R+1}.
\]
The resulting values are
\[
    \mathrm{PPC}_{\mathrm{misspecified}}
    =\frac{1}{1001}\approx0.001,
    \qquad
    \mathrm{PPC}_{\mathrm{well\text{-}specified}}
    =\frac{540}{1001}\approx0.539.
\]

\paragraph{Dimension scaling.} Figure~\ref{fig:normalized_sum_dimension_scaling} uses
\[
    K=2,
    \qquad
    p\in\left\{1,10,100,1000\right\}.
\]
The component distributions are
\[
    \pi_1=\mathcal{N}\left(\mathbf{0}_p,2I_p\right),
    \qquad
    \pi_2=\mathcal{N}\left(2p^{-1/2}\mathbf{1}_p,2I_p\right).
\]
Their normalized product and the target are
\[
    \pipro=\mathcal{N}\left(p^{-1/2}\mathbf{1}_p,I_p\right),
    \qquad
    h(u)=p^{-1/2}\mathbf{1}_p^\top u,
    \qquad
    \mu_h=\E_{\pipro}[h(X)]=1,
\]
so the estimand is fixed as the dimension changes. The nominal per-component sample sizes are
\[
    S\in\left\{10^2,10^3,10^4,10^5,10^6\right\}.
\]
In each replicate, we draw a common realized sample size
\[
    N\sim\operatorname{Poisson}(S)
\]
and generate $N$ i.i.d. draws from each component. The paired samples yield
\[
    C^{(s)}=\frac{X_1^{(s)}+X_2^{(s)}}{2},
    \qquad
    D^{(s)}=X_1^{(s)}-X_2^{(s)},
    \qquad s=1,\ldots,N.
\]
Each $(p,S)$ cell has $20$ independent replicates. Points show arithmetic means across replicates, and error bars show one standard error. In replicate $r$, the squared error is
\[
    \left(\hat\mu_h^{(r)}-\mu_h\right)^2
    =\left(\hat\mu_h^{(r)}-1\right)^2.
\]
Runtime measures estimator fitting or aggregation and excludes data generation.

The available GBF sample sizes are
\[
\begin{aligned}
    p=1 &: \quad S\in\left\{10^2,10^3,10^4,10^5,10^6\right\},\\
    p=10 &: \quad S\in\left\{10^2,10^3,10^4,10^5\right\},\\
    p=100 &: \quad S\in\left\{10^2,10^3,10^4\right\}.
\end{aligned}
\]
GBF results are reported only for configurations completed within a 24-hour wall-time limit. Runs beyond the displayed sample-size ranges, including all $p=1000$ configurations, did not complete within this limit. For the $p=100$, $S=100$ cell only, the empirical full covariance matrices are singular or severely ill-conditioned because the realized sample size is comparable to the dimension, so GBF uses diagonal sample covariance matrices for both the fitted Gaussian components and the covariance preconditioners; all other GBF cells use full sample covariance matrices. The Oracle estimates each component's Gaussian parameters and combines the fitted components analytically.

\begin{table}[!htb]
\centering
\small
\setlength{\tabcolsep}{2.5pt}
\caption{Method settings for the dimension-scaling experiment.}
\label{tab:dimension_scaling_settings}
\begingroup
\renewcommand{\arraystretch}{0.8}
\begin{tabular}{ll}
\toprule
\multicolumn{2}{l}{\textit{PMC-OLS}} \\
Polynomial degree & 1 \\
\midrule
\multicolumn{2}{l}{\textit{PMC-MLP}} \\
Hidden layers & 64, 64, 64 \\
Activation & ReLU \\
Discrepancy centering & mean \\
Optimizer & Adam \\
Learning rate & $10^{-4}$ \\
Learning-rate schedule & cosine \\
Batch size & 1024 \\
Training steps & 4000 \\
Ensemble size & 1 \\
Loss & MSE \\
\midrule
\multicolumn{2}{l}{\textit{PMC-GP}} \\
Kernel & RBF \\
Inducing points & 64 \\
Learn inducing locations & No \\
Learn kernel hyperparameters & Yes \\
Maximum updates & 1600 \\
Minimum epochs & 10 \\
Early-stopping patience (epochs) & 3 \\
Early-stopping relative tolerance & $10^{-4}$ \\
Natural-gradient learning rate & 0.003 \\
Adam learning rate, kernel hyperparameters & 0.003 \\
Batch size & 8192 \\
Floating-point precision & 32-bit \\
\midrule
\multicolumn{2}{l}{\textit{GBF}} \\
Output particles & $S$ \\
Coalescence time & optimal rule \\
Time mesh & optimal rule \\
\bottomrule
\end{tabular}
\endgroup
\end{table}

\paragraph{Calibration.} Figure~\ref{fig:univariate_gaussian_k2_calibration} uses
\[
    K=2,
    \qquad
    \pi_1=\pi_2=\mathcal{N}\left(1,1\right).
\]

The consensus and discrepancy are
\[
    C=\frac{X_1+X_2}{2},
    \qquad
    D=X_1-X_2.
\]
The target is
\[
    h(u)=u,
    \qquad
    \mu_h=\E_{\pipro}[h]=1.
\]
The displayed experiment uses
\[
    S=1000,
    \qquad
    R=500
\]
independent replicates, each with exactly $S$ draws from each component. The base seed is $20261001$. The nominal coverage levels are
\[
    \left\{0.05,0.10,0.20,\ldots,0.90,0.95\right\}.
\]
For each method, nominal level $\alpha$, and replicate, the Gaussian interval is
\[
    I_{\alpha}
    =\left[
        \hat{\mu}_{h}
        -z_{\left(1+\alpha\right)/2}\widehat{\operatorname{se}},
        \hat{\mu}_{h}
        +z_{\left(1+\alpha\right)/2}\widehat{\operatorname{se}}
    \right].
\]
Its empirical coverage is
\[
    \widehat{\operatorname{coverage}}\left(\alpha\right)
    =\frac{\#\left\{\text{replicates for which }\mu_h\in I_{\alpha}\right\}}{R}.
\]
The method settings are listed in Table~\ref{tab:calibration_settings}.

\begin{table}[!htb]
\centering
\small
\setlength{\tabcolsep}{2.5pt}
\caption{Method settings for the calibration experiment.}
\label{tab:calibration_settings}
\begingroup
\renewcommand{\arraystretch}{0.8}
\begin{tabular}{ll}
\toprule
\multicolumn{2}{l}{\textit{PMC-GP}} \\
Kernel & Mat\'ern-3/2 \\
Inducing points & 64 \\
Learn inducing locations & No \\
Learn kernel hyperparameters & Yes \\
Maximum updates & 40 \\
Minimum epochs & 20 \\
Early-stopping patience (epochs) & 5 \\
Early-stopping relative tolerance & $10^{-4}$ \\
Natural-gradient learning rate & 0.03 \\
Adam learning rate, kernel hyperparameters & 0.03 \\
Batching & full batch \\
Floating-point precision & 64-bit \\
\midrule
\multicolumn{2}{l}{\textit{GBF}} \\
Output particles & $1000$, $10{,}000$, or $100{,}000$ \\
Coalescence time & optimal rule \\
Time mesh & optimal rule \\
Initial CESS threshold & 0.2 \\
Preconditioning & variance \\
Diffusion estimator & Poisson \\
ESS threshold & 0.5 \\
Number of cores & 1 \\
\bottomrule
\end{tabular}
\endgroup
\end{table}

For PMC-GP, $\widehat{\operatorname{se}}$ is the latent posterior standard deviation. For GBF, it is
\[
    \widehat{\operatorname{se}}_{\mathrm{GBF}}
    =\frac{\widehat{\operatorname{sd}}\left(\text{final particles}\right)}
    {\sqrt{\mathrm{ESS}_{\mathrm{final}}}},
\]
with the square root of the particle count used when the final ESS is unavailable.

\paragraph{GP posterior variance versus overlap.} Figure~\ref{fig:gp_overlap_posterior_variance} uses
\[
    K=2,
    \qquad
    \pi_1=\mathcal{N}\left(0,1\right),
    \qquad
    \pi_2=\mathcal{N}\left(\mu,1\right),
    \qquad
    \mu\in\left\{0,\ldots,6\right\}.
\]
The target and normalized product are
\[
    h(u)=u,
    \qquad
    \pipro=\mathcal{N}\left(\frac{\mu}{2},\frac{1}{2}\right),
    \qquad
    \mu_h=\E_{\pipro}[h]=\frac{\mu}{2}.
\]
The experiment uses
\[
    S=1000,
    \qquad
    R=50
\]
independent replicates, each with exactly $S$ draws from each component. The base seed is $26100500$. Within a replicate, common standard-normal draws are reused across all values of $\mu$:
\[
    Z_1^{(s)},Z_2^{(s)}\overset{\mathrm{iid}}{\sim}\mathcal{N}\left(0,1\right),
    \qquad
    X_{1,\mu}^{(s)}=Z_1^{(s)},
    \qquad
    X_{2,\mu}^{(s)}=Z_2^{(s)}+\mu,
    \quad s=1,\ldots,S.
\]
The consensus and discrepancy satisfy
\[
    C_\mu^{(s)}=\frac{X_{1,\mu}^{(s)}+X_{2,\mu}^{(s)}}{2},
    \qquad
    D_\mu^{(s)}=X_{1,\mu}^{(s)}-X_{2,\mu}^{(s)},
\]
and hence
\[
    C_\mu^{(s)}=C_0^{(s)}+\frac{\mu}{2},
    \qquad
    D_\mu^{(s)}=D_0^{(s)}-\mu.
\]
Within each replicate, one reference GP is fitted at $\mu=0$, and its learned hyperparameters are held fixed across mean separations. Therefore, physical zero discrepancy at mean separation $\mu$ corresponds to querying the reference GP at discrepancy $\mu$; the response translation changes the posterior mean but not its variance. Table~\ref{tab:gp_overlap_settings} gives the GP settings.

\begin{table}[!htb]
\centering
\small
\setlength{\tabcolsep}{2.5pt}
\caption{PMC-GP settings for the posterior-variance experiment.}
\label{tab:gp_overlap_settings}
\begingroup
\renewcommand{\arraystretch}{0.8}
\begin{tabular}{ll}
\toprule
\multicolumn{2}{l}{\textit{PMC-GP}} \\
Kernel & RBF \\
Inducing points & 64 \\
Learn inducing locations & No \\
Learn kernel hyperparameters & Yes \\
Maximum updates & 1600 \\
Minimum epochs & 10 \\
Early-stopping patience (epochs) & 3 \\
Early-stopping relative tolerance & $10^{-4}$ \\
Natural-gradient learning rate & 0.003 \\
Adam learning rate, kernel hyperparameters & 0.003 \\
Batch size & 8192 \\
Floating-point precision & 32-bit \\
\bottomrule
\end{tabular}
\endgroup
\end{table}

For each $\mu$, the plotted value is the median latent posterior variance across the $R$ replicates. This variance excludes observation noise, and the vertical axis uses a logarithmic scale.

\subsection{Fusion Methods and Oracle}\label{app:hp_fusion_oracle}

In the supplementary benchmark suite, GBF uses $100{,}000$ output particles for the Gaussian and Beta experiments and $S$ particles for the Gaussian mixture experiments. The fusion baseline comparison uses $100{,}000$ output particles and $20$ paired replications, with all methods evaluated on the same simulated component samples. The number of draws from each component is $\operatorname{Poisson}(S)$, where
\[
    S\in\left\{100,1000,10^4\right\},
\]
with $S=10^5$ also included in the univariate experiment. The base seeds are $20260826$, $20260831$, and $20260901$ for the univariate, bivariate, and multivariate experiments, respectively. Table~\ref{tab:hp_fusion} lists the remaining fusion-method settings, and Table~\ref{tab:hp_oracle} describes the Oracle for each component family.

\begin{table}[!htb]
\centering
\small
\setlength{\tabcolsep}{2.5pt}
\caption{Settings of the fusion methods in the simulation experiments.}
\label{tab:hp_fusion}
\begin{tabular}{p{0.34\linewidth} p{0.6\linewidth}}
\toprule
Setting & Value \\
\midrule
\multicolumn{2}{l}{\textit{GBF}} \\
Output particles & $100{,}000$ (Gaussian and Beta components); $S$ (Gaussian mixtures) \\
Coalescence time and time mesh & chosen by the SSH guidance of the DCFusion package, with a minimum conditional effective sample size of $0.2$ at initialization \\
Preconditioning & sample variance (univariate Gaussians) or sample covariance (bivariate and multivariate Gaussians) of each component \\
Resampling threshold & resample when the effective sample size falls below half of the number of particles \\
Resampling method & multinomial \\
Diffusion estimator & Poisson \\
Number of cores & 1 \\
Mixture components & each component is fitted by EM with the true number of modes, a full covariance, $3$ random restarts, at most $100$ iterations, a tolerance of $10^{-3}$, and a variance floor of $10^{-8}$ \\
Beta components & fit each component by fixed-support maximum likelihood, perform fusion in Jacobian-corrected logit coordinates using the auxiliary-logit sample variance for preconditioning, and transform the output back to $(0,1)$ \\
\midrule
\multicolumn{2}{l}{\textit{MCF and BF (fusion baseline comparison)}} \\
MCF coalescence time & $1$, with a single time step \\
BF coalescence time and time mesh & chosen by the same guidance as for GBF \\
BF preconditioning & identity \\
\bottomrule
\end{tabular}
\end{table}

\begin{table}[!htb]
\centering
\small
\setlength{\tabcolsep}{2.5pt}
\caption{Computation of the Oracle in the simulation experiments.}
\label{tab:hp_oracle}
\begin{tabular}{p{0.2\linewidth} p{0.74\linewidth}}
\toprule
Components & Oracle \\
\midrule
Gaussian & fit the mean and covariance of each component from its draws and combine them analytically into the Gaussian product \\
Beta & fit the two shape parameters of each component by maximum likelihood, with location $0$ and scale $1$ fixed, and combine them analytically into the Beta product \\
Gaussian mixture & fit each component by EM with the true number of modes, a full covariance, $10$ random restarts, at most $1000$ iterations, a tolerance of $10^{-8}$, and a variance floor of $10^{-8}$, and compute the product exactly \\
\bottomrule
\end{tabular}
\end{table}

\subsection{Real-Data Experiments}\label{app:hp_real_data}

In all three real-data experiments, we fit the logistic regression model
\[
Y_i\mid\beta\sim\operatorname{Bernoulli}\left(\operatorname{logit}^{-1}(X_i^\top\beta)\right),
\qquad
\beta_j\overset{\mathrm{ind}}{\sim}\mathcal{N}(0,2^2).
\]
The coefficient dimensions are $5$, $29$, and $11$ for respiratory, HIGGS, and MAGIC, respectively. We leave the intercept unchanged and standardize every other predictor using its full-data mean and standard deviation. The respiratory batches are the two observed clinical centers, containing $56$ and $55$ observations. HIGGS is partitioned into $10$ batches of $1.1$ million observations using seed $91011$, and MAGIC into $10$ batches of $1902$ observations using seed $91001$.

The Oracle is NUTS applied to the full data, and each batch is sampled using the same sampler. A batch of size $n_k$ uses prior scale $2/\sqrt{n_k/n}$, where $n$ is the full sample size, so that the product of the fractional priors recovers the full-data prior. Table~\ref{tab:hp_real_data} lists the data, sampler, and GBF settings. GBF was not run for HIGGS because it was computationally infeasible at this scale.

\begin{table}[!htb]
\centering
\footnotesize
\setlength{\tabcolsep}{4pt}
\caption{Settings of the real-data experiments.}
\label{tab:hp_real_data}
\begin{tabular}{lccc}
\toprule
Setting & Respiratory & HIGGS & MAGIC \\
\midrule
\multicolumn{4}{l}{\textit{Data}} \\
Observations $n$ & $111$ & $11{,}000{,}000$ & $19{,}020$ \\
Coefficients & $5$ & $29$ & $11$ \\
Number of batches $K$ & $2$ & $10$ & $10$ \\
\midrule
\multicolumn{4}{l}{\textit{Oracle and batch sampler (NUTS)}} \\
Prior scale & $2$ & $2$ & $2$ \\
Batch prior scale & $2.82$, $2.84$ & $6.32$ & $6.32$ \\
Chains & $4$ & $4$ & $4$ \\
Warmup iterations & $1000$ & $5000$ & $1000$ \\
Post-warmup draws per chain & $10{,}000$ & $10{,}000$ & $10{,}000$ \\
Mass matrix & diagonal & dense & dense \\
Initialization & default & MAP (up to $500$ iterations) & default \\
Precision & 32-bit & 64-bit & 32-bit \\
Seed & $91011$ & $91011$ & $91011$ \\
\bottomrule
\end{tabular}
\end{table}

\begin{table}[!htb]
\ContinuedFloat
\centering
\small
\setlength{\tabcolsep}{4pt}
\caption[]{Settings of the real-data experiments (continued).}
\begin{tabular}{lccc}
\toprule
Setting & Respiratory & HIGGS & MAGIC \\
\midrule
\multicolumn{4}{l}{\textit{GBF}} \\
Run & Yes & No & Yes \\
Particles per level & $1000$ & -- & $1000, 1000$ \\
Merge factors & $2$ & -- & $5, 2$ \\
Time mesh & $0.5, 1$ & -- & $0.5, 1$ \\
Preconditioning & Yes & -- & Yes \\
Resampling & residual & -- & residual \\
Resampling threshold (ESS fraction) & $0.5$ & -- & $0.5$ \\
Diffusion estimator & NB & -- & NB \\
Number of cores & $1$ & -- & $1$ \\
Seed & $93011$ & -- & $93011$ \\
\bottomrule
\end{tabular}
\end{table}

For PMC, we use all $S=40{,}000$ posterior draws from each batch and pair draws independently across batches. The consensus is the equally weighted average, the discrepancy consists of consecutive differences, and the target functions are
\[
h(u)=u,
\qquad
h(u)=u\odot u.
\]
Table~\ref{tab:hp_real_data_pmc} lists the PMC settings shared by the three experiments.

\subsection{Simulation-Based Inference}\label{app:hp_sbi}

Table~\ref{tab:hp_sbi} lists the settings of the HSP90 experiment. The protein binding angle is obtained from the latent parameter through $\theta=19\,\operatorname{logit}^{-1}(z)$. A single amortized NPE-C network is trained once and then conditioned separately on each of the ten observed images. Each image is independently standardized to zero mean and unit variance. The image-level posteriors use the fractional prior $\mathcal{N}(0,10)$ on the logit-scale parameter $z$, so that the product of the exact image-level posteriors would equal the posterior given all images under the prior $\mathcal{N}(0,1)$. PMC estimates the two targets $h(u)=u$ and $h(u)=u^2$. The reference combined posterior is the normalized product of the ten fitted NPE-C subposterior densities; its moments are computed using random-walk Metropolis and verified by numerical quadrature. The settings of PMC-GP are the recommended defaults of the implementation, which differ from the ones used in the simulations.

\subsection{Selection of the Hyperparameters}\label{app:hp_selection}

Experiment-level settings were selected manually rather than through an automated search. Where indicated in the tables, PMC-GP kernel hyperparameters were learned during fitting. We fixed one set of PMC settings for the baseline simulation suite (Section~\ref{app:hp_pmc_common}) and tuned individual cells separately in a few cases, namely the largest sample sizes in settings with 20 components, variance scaling, or variance imbalance, and the largest cell of the location and scale grid; each such override was compared with the common setting in a paired comparison over 20 or 30 replicates, and Table~\ref{tab:hp_overrides} lists them. For PMC-GP in this suite, the kernel hyperparameters are fixed rather than learned. The additional main-text and diagnostic studies use the settings documented in Section~\ref{app:hp_main_text_additions}. The real-data settings were selected manually, whereas HSP90 PMC-GP uses the implementation's recommended defaults.

The adaptive number of training steps of PMC-MLP grows with the square root of the sample size:
\[
    \text{steps} = \min\bigl(\max\bigl(\operatorname{round}\bigl(200\sqrt{S/100}\bigr),\, 200\bigr),\, 8000\bigr),
\]
where $S$ is the realized sample size of the simulated data, drawn from a Poisson distribution with mean equal to the nominal sample size. It equals 200 steps for $S\leq 100$ and reaches the cap of 8,000 steps at $S=160{,}000$.

\begin{table}[!htb]
\centering
\small
\setlength{\tabcolsep}{2.5pt}
\caption{PMC settings in the real-data experiments.}
\label{tab:hp_real_data_pmc}
\begin{tabular}{p{0.34\linewidth} p{0.6\linewidth}}
\toprule
Setting & Value \\
\midrule
\multicolumn{2}{l}{\textit{All PMC methods}} \\
Posterior draws per batch $S$ & $40{,}000$ \\
Sampling & independent pairing across batches \\
Consensus weights & equal ($1/K$) \\
Discrepancy construction & consecutive differences \\
Targets & $h(u)=u$ and $h(u)=u\odot u$ \\
Seeds & $94011$ (mean), $94012$ (square) \\
\midrule
\multicolumn{2}{l}{\textit{PMC-OLS}} \\
Polynomial degree & $1$ \\
Device & CPU \\
\midrule
\multicolumn{2}{l}{\textit{PMC-MLP}} \\
Hidden layers & $32, 32$ \\
Activation & ReLU \\
Optimizer & Adam \\
Learning rate & $10^{-3}$ \\
Learning-rate schedule & cosine \\
Batch size & $128$ \\
Training steps & $1000$ \\
Ensemble size & $5$ \\
Discrepancy centering & target \\
Loss & MSE \\
Device & GPU \\
\midrule
\multicolumn{2}{l}{\textit{PMC-GP}} \\
Kernel & Mat\'ern-3/2 (mean), RBF (square) \\
Inducing points & $8$, with learned locations \\
Kernel hyperparameters & learned \\
Batch size & $8192$ \\
Natural-gradient learning rate & $0.03$ \\
Adam learning rate & $0.03$ \\
Minimum epochs & $10$ \\
Early-stopping patience (epochs) & $3$ \\
Early-stopping relative tolerance & $10^{-4}$ \\
Floating-point precision & 32-bit \\
Device & CPU \\
\bottomrule
\end{tabular}
\end{table}

\begin{table}[!htb]
\centering
\footnotesize
\setlength{\tabcolsep}{2.5pt}
\caption{Settings of the HSP90 experiment.}
\label{tab:hp_sbi}
\begin{tabular}{p{0.34\linewidth} p{0.6\linewidth}}
\toprule
Setting & Value \\
\midrule
\multicolumn{2}{l}{\textit{Experiment}} \\
Number of images $K$ & $10$ \\
Posterior draws per image & $1{,}000$ \\
Parameter & $\theta = 19\,\mathrm{sigmoid}(z)$, with true value $\theta = 12$ \\
Image-level prior on $z$ & $\mathcal{N}(0,10)$ \\
Prior given all images & $\mathcal{N}(0,1)$ \\
\midrule
\multicolumn{2}{l}{\textit{Image simulator}} \\
Pose & $(0,0,0,90,90,0)$ \\
Defocus & $1500$ \\
Noise strength & $1$ \\
Random key and noise & independent for each image \\
\midrule
\multicolumn{2}{l}{\textit{Neural posterior}} \\
Method & NPE-C \\
Embedding network & convolutional \\
Density estimator & neural spline flow \\
Training simulations & $1{,}024$ \\
Training batch size & $64$ \\
Maximum epochs & $200$ (training stopped at epoch $57$) \\
Seeds & training $1000$, observation $20000$, posterior sampling $30000$ \\
\midrule
\multicolumn{2}{l}{\textit{All PMC methods}} \\
Parameter space & $z$ \\
Sampling & i.i.d. \\
Consensus weights & equal ($0.1$) \\
Discrepancy construction & consecutive differences \\
Seeds & $94011$ (mean), $94012$ (square) \\
\bottomrule
\end{tabular}
\end{table}

\begin{table}[!htb]
\ContinuedFloat
\centering
\footnotesize
\setlength{\tabcolsep}{2.5pt}
\caption[]{Settings of the HSP90 experiment (continued).}
\begin{tabular}{p{0.34\linewidth} p{0.6\linewidth}}
\toprule
Setting & Value \\
\midrule
\multicolumn{2}{l}{\textit{PMC-OLS}} \\
Polynomial degree & $1$ \\
\midrule
\multicolumn{2}{l}{\textit{PMC-MLP}} \\
Hidden layers & $32, 32$ \\
Activation & ReLU \\
Optimizer & Adam \\
Learning rate & $10^{-3}$ \\
Learning-rate schedule & cosine \\
Batch size & $128$ \\
Training steps & $1000$ \\
Ensemble size & $5$ \\
Discrepancy centering & target \\
Loss & MSE \\
\midrule
\multicolumn{2}{l}{\textit{PMC-GP}} \\
Kernel & Mat\'ern-3/2 (mean), RBF (square) \\
Inducing points & $8$, with learned locations \\
Kernel hyperparameters & learned \\
Batch size & $8192$ \\
Natural-gradient learning rate & $0.03$ \\
Adam learning rate & $0.03$ \\
Minimum epochs & $10$ \\
Early-stopping patience (epochs) & $3$ \\
Early-stopping relative tolerance & $10^{-4}$ \\
Floating-point precision & 32-bit \\
\midrule
\multicolumn{2}{l}{\textit{Reference combined posterior}} \\
Sampler & random-walk Metropolis on $z$ \\
Chains & $4$ \\
Burn-in & $2000$ iterations, with the step size adapted toward an acceptance rate of $0.44$ \\
Draws per chain & $5000$ \\
Seed & $40000$ \\
Check & quadrature on a grid of $20{,}001$ points on $[-2,3]$ \\
\bottomrule
\end{tabular}
\end{table}

\FloatBarrier

%% file: tables/magic_identity_comparison.tex
\begin{tabular}{lccc}
\toprule
Method & Estimation error & Merge (min) & Total (min) \\
\midrule
Oracle & 0.0000 & -- & 0.9936 \\
Batch & -- & -- & 0.8302 \\
GBF & 0.0475 & 27.2566 & 28.0868 \\
PMC-OLS & 0.0676 & 0.0050 & 0.8352 \\
PMC-MLP & 0.0101 & 0.1152 & 0.9455 \\
PMC-GP & 0.0091 & 0.5056 & 1.3359 \\
\bottomrule
\end{tabular}

%% file: tables/hp_pmc_common.tex
\begin{tabular}{ll}
\toprule
\multicolumn{2}{l}{\textit{PMC-OLS}} \\
Polynomial degree & 1 \\
\midrule
\multicolumn{2}{l}{\textit{PMC-MLP}} \\
Hidden layers & 64, 64, 64 \\
Activation & ReLU \\
Ensemble size & 2 \\
Optimizer & Adam \\
Weight decay & 0 \\
Learning rate, mean target & $10^{-4}$ \\
Learning rate, square target & $3\times10^{-4}$ \\
Learning-rate schedule & cosine \\
Batch size & 1024 \\
Training steps & adaptive (Section~\ref{app:hp_selection}) \\
Loss & MSE \\
\midrule
\multicolumn{2}{l}{\textit{PMC-GP}} \\
Kernel & Mat\'ern-3/2 \\
Inducing points & 64 \\
Maximum updates & 40 \\
Minimum epochs & 20 \\
Early-stopping patience (epochs) & 5 \\
Early-stopping relative tolerance & $10^{-4}$ \\
Natural-gradient learning rate & 0.03 \\
Adam learning rate, kernel hyperparameters & 0.03 \\
Batch size & 8192 \\
Full-batch updates & Yes \\
Learn inducing locations & No \\
Fixed kernel hyperparameters & Yes \\
Fixed mean & $-0.0014$ \\
Fixed lengthscale & 6 \\
Fixed output scale & 0.5 \\
Fixed noise variance & 0.64 \\
Floating-point precision & 64-bit \\
Prediction interval level & 0.95 \\
Minimum local neighborhood size & 256 \\
Maximum local neighborhood size & 131072 \\
Local neighborhood exponent & None (full sample) \\
\midrule
\multicolumn{2}{l}{\textit{All PMC methods}} \\
Sampling & i.i.d. \\
Consensus weights & uniform \\
Discrepancy centering & mean \\
Discrepancy construction & consecutive differences \\
\bottomrule
\end{tabular}

%% file: tables/hp_deviations.tex
\begin{tabular}{p{0.4\linewidth} l l l}
\toprule
Experiments & Setting & Common value & Value used \\
\midrule
\raggedright Multivariate Gaussian, dimension ($p=2,3,5,10,20$) & Kernel & Mat\'ern-3/2 & RBF \\
 & Fixed mean & $-0.0014$ & $0.0000$ \\
 & Fixed lengthscale & 6 & 3 \\
\bottomrule
\end{tabular}

%% file: tables/hp_overrides.tex
\begin{tabular}{p{0.2\linewidth} l p{0.24\linewidth} p{0.3\linewidth}}
\toprule
Experiment & Method & Cell & Changed settings \\
\midrule
\raggedright Univariate Gaussian, location and scale & PMC-MLP & \raggedright $S=2{,}000{,}000$, $\mu_2=4$, $\sigma_2=1$, square & \raggedright learning rate $6\times10^{-4}$, training steps 16,000 \tabularnewline
 & PMC-MLP & \raggedright $S=2{,}000{,}000$, $\mu_2=4$, $\sigma_2=2$, square & \raggedright hidden layers 256, 256, 256, learning rate $2\times10^{-4}$, batch size 4096 \tabularnewline
 & PMC-GP & \raggedright $S=2{,}000{,}000$, $\mu_2=4$, $\sigma_2=1$, identity & \raggedright fixed lengthscale 16 \tabularnewline
 & PMC-GP & \raggedright $S=2{,}000{,}000$, $\mu_2=4$, $\sigma_2=1$, square & \raggedright fixed lengthscale 20 \tabularnewline
\midrule
\raggedright Univariate Gaussian, number of components, homogeneous ($K=20$) & PMC-GP & \raggedright $S=10^{6}$, $K=20$, identity & \raggedright kernel RBF, fixed lengthscale 16 \tabularnewline
 & PMC-GP & \raggedright $S=10^{6}$, $K=20$, square & \raggedright kernel RBF, fixed lengthscale 16 \tabularnewline
\midrule
\raggedright Univariate Gaussian, number of components, heterogeneous ($K=20$) & PMC-MLP & \raggedright $S=10^{6}$, identity & \raggedright hidden layers 256, 256, 256, activation SiLU, learning rate $2\times10^{-4}$, batch size 8192 \tabularnewline
 & PMC-GP & \raggedright $S=10^{6}$, identity & \raggedright kernel RBF, inducing points 128, fixed lengthscale 4, fixed noise variance 1 \tabularnewline
\midrule
\raggedright Bivariate Gaussian, isotropic variance scale & PMC-MLP & \raggedright $S=10^{6}$, $\sigma=3$, square & \raggedright hidden layers 256, 256, 256, activation SiLU, learning rate $2\times10^{-4}$, batch size 8192 \tabularnewline
 & PMC-MLP & \raggedright $S=10^{6}$, $\sigma=4$, square & \raggedright hidden layers 256, 256, 256, activation SiLU, learning rate $3.5\times10^{-4}$, batch size 8192 \tabularnewline
 & PMC-GP & \raggedright $S=10^{6}$, $\sigma=3$, square & \raggedright kernel RBF, fixed mean $0.0000$, fixed lengthscale 2 \tabularnewline
 & PMC-GP & \raggedright $S=10^{6}$, $\sigma=4$, square & \raggedright fixed mean $0.0000$, fixed lengthscale 4, fixed noise variance 1 \tabularnewline
\midrule
\raggedright Bivariate Gaussian, marginal variance imbalance & PMC-GP & \raggedright $S=10^{6}$, $\sigma_2=4$, square & \raggedright kernel RBF, fixed mean $0.0000$, fixed lengthscale 2 \tabularnewline
\bottomrule
\end{tabular}